\documentclass[11pt]{article}

\usepackage[utf8]{inputenc}
 \usepackage{amsthm}
 \usepackage{mathtools}
\usepackage{amsfonts} 
\usepackage{amssymb} 
\usepackage[top=3cm,inner=3.5cm,outer=3.5cm,bottom=4cm]{geometry}
 \usepackage{setspace}
  \usepackage{mathrsfs}
\usepackage[usenames,dvipsnames]{xcolor}
\usepackage[pageanchor=false,colorlinks=true,linkcolor=blue,citecolor=blue,urlcolor=blue]{hyperref}


\newcommand\bR{\mathbb{R}}

\newcommand\bC{\mathbb{C}}

\newcommand\bT{\mathbb{T}}

\newcommand\sM{\mathcal{M}}

\newcommand\sC{\mathcal{C}}

\newcommand\sB{\mathcal{B}}


\newcommand*\id{\mathop{}\!\mathrm{d}}
\newcommand\RIPcst{\gamma}

\newcommand{\tam}{\mathrm{argmin}}

\newcommand{\tim}{\mathrm{Im}}

\newcommand{\ls}{\langle}
\newcommand{\rs}{\rangle}
\newcommand{\lsd}{{\boldsymbol \langle}}
\newcommand{\rsd}{{\boldsymbol \rangle}}
\newcommand{\re}{\mathcal{R}e}

\newcommand{\modif}[1]{\textcolor{black}{#1}}

\newcommand{\snewtext}{\color{black}}
\newcommand{\enewtext}{\color{black}}

\newcommand\indic{\mathbf{1}}
\newtheorem{theorem}{Theorem}[section]
\newtheorem{corollary}{Corollary}[section]
\newtheorem{proposition}{Proposition}[section]

\newtheorem{lemma}{Lemma}[section]
\newtheorem{definition}{Definition}[section]
\newtheorem{remark}{Remark}[section]
\newtheorem{assumption}{Assumption}[section]

\begin{document}


\title{The basins of attraction of the global minimizers of the non-convex sparse spike estimation problem}
\author{Yann Traonmilin$^{1,2,}$\footnote{Contact author : \url{yann.traonmilin@math.u-bordeaux.fr}}  and Jean-François Aujol$^{2}$ \\
$^1$CNRS,\\
$^2$Univ. Bordeaux, Bordeaux INP, CNRS,  IMB, UMR 5251,F-33400 Talence, France.\\}
\date{}
\maketitle

\begin{abstract}
The sparse spike estimation problem consists in estimating a number of  off-the-grid impulsive sources from under-determined linear measurements. Information theoretic results ensure that the minimization of  a non-convex functional is able to recover the spikes for adequately chosen measurements (deterministic or random). To solve this problem, methods inspired from the case of finite dimensional sparse estimation where a convex program is used have been proposed. Also greedy heuristics have shown nice practical results. However, little is known on the ideal non-convex minimization method. In this article, we study the shape of the global minimum of this non-convex functional: we give an explicit basin of attraction of the global minimum that shows that the non-convex problem becomes easier as the number of measurements grows. This has important consequences for methods involving descent algorithms (such as the greedy heuristic) and it gives insights for potential improvements of such descent methods.
\end{abstract}

\section{Introduction}

\subsection{Context}
Sums of sparse off-the-grid spikes can be used to model impulsive sources in signal processing (e.g. in astronomy, microscopy,...). Estimating such signals from a finite number of Fourier measurements is known as the super-resolution problem \cite{Candes_2014}. \modif{Also, the estimation of spikes from random Fourier measurements is at the core of the compressive $K$-means algorithm were $k$-means cluster centers are estimated from a compressed database \cite{Keriven_2017}.} In the space 
$\sM$ 
of finite signed measure over $\bR^d$, we aim  at recovering $x_0 = \sum_{i=1,k} a_i \delta_{t_i}$ from the measurements 
\begin{equation}
 y= Ax_0 + e,
\end{equation}
where $\delta_{t_i}$ is the Dirac measure at position $t_i$, the operator $A$ is a linear observation operator, $y \in \bC^m$ are the $m$ noisy measurements and $e$ is a finite energy observation noise.
Recent works have shown that it is possible to estimate spikes from a finite number of adequately chosen Fourier measurements as long as their locations are sufficiently separated,  using convex minimization based variational methods in the space of measures \cite{Candes_2013,Bhaskar_2013,Tang_2013,Castro_2015,Duval_2015}. Other general studies on inverse problems have shown that an ideal non-convex method (unfortunately computationally inefficient) can be used to recover these signals as long as the linear measurement operator has a restricted isometry property (RIP) \cite{Bourrier_2014}. In the case of super-resolution, adequately chosen random compressive measurements have been shown to meet the sufficient RIP conditions for separated spikes, thus guaranteeing the success of the ideal non-convex decoder \cite{Gribonval_2017}. 
These RIP results are based on an adequate kernel metric on  $\sM$.
It must be noted that, according to the work of~\cite{Bourrier_2014}, the success of the convex decoders as described in \cite{Candes_2013} for regular Fourier sampling implies a (lower) restricted isometry property of $A$  with respect to such a kernel metric (and not with the natural total variation metric: in this case no RIP is possible with finite regular Fourier measurements, see e.g.~\cite{Boyer_2017}).  Greedy heuristics have also been proposed to approach the non-convex minimization problem and they have shown good practical utility \cite{Keriven_2016,Keriven_2017,Traonmilin_2017}. 

While giving theoretical recovery guarantees, the convex-based method is non-convex in the space of parameters (amplitudes and locations) due to a polynomial root finding step. Also, it is difficult to implement in dimensions larger than one  in practice \cite{Duval_2017}. Greedy heuristics based on orthogonal matching pursuit are implemented in higher dimension (they can practically be used up to $d=50$), but they still miss theoretical recovery guarantees \cite{Keriven_2016}.  It would be possible to overcome the limitations of such methods if it were possible to perform the ideal non-convex minimization: 

\begin{equation} \label{eq:minimization}
    x^* \in \underset{x \in \Sigma}{\tam} \|Ax-y\|_2 
\end{equation}
where $\Sigma$ is a low-dimensional set modeling the separation constraints on the $k$ Diracs. \modif{Theoretical recovery guarantees for this minimization have been given in \cite{Gribonval_2017}.} While simple in its formulation, properties of this minimization procedure have not yet been thoroughly studied. 

In this article, as a first important step towards the understanding of the non-convex sparse spike estimation problem~\eqref{eq:minimization}, we study its formulation in the parameter space (the space of amplitudes and locations of the Diracs).  We observe that a smooth non-convex optimization can be performed. 

We place ourselves in a context where the number of measurements, either deterministic or random, guarantees the success of the ideal non-convex decoder with respect to a kernel metric $\|\cdot\|_h$, i.e. when we can ensure that:
\begin{equation}\label{eq:perf_bound}
 \|x^*-x_0\|_h\leq C \|e\|_2,
\end{equation}
where $C$ is an absolute constant with respect to $e$ and $x_0 \in \Sigma_{k,\epsilon}$, the set $\Sigma_{k,\epsilon}$ is the set of sums of $k$ spikes separated by $\epsilon$ on a given bounded domain. Qualitatively, the kernel metric can be viewed as a measure of the energy at a given resolution set by a kernel $h$ (see Section~\ref{sec:kernel_dipole}). 

The bound~\eqref{eq:perf_bound} is guaranteed by a restricted isometry property of $A$ defined using such kernel metric~\cite{Gribonval_2017}. This RIP setting is verified in the deterministic (see Section~\ref{sec:kernel_dipole}) and random weighted Fourier measurement contexts~\cite{Gribonval_2017}. We link this RIP of measurement operators with the conditioning of the Hessian of the global minimum, and we give an explicit basin of attraction of the global minimum \modif{in the parameter space}.  This study has direct consequences for the theoretical study of greedy approaches. Indeed a basin of attraction permits to give recovery guarantees for the gradient descent  (the initialization must fall within the basin), which is a step in the iterations of the greedy approach. 

\subsection{Parametrization of the model set $\Sigma$}

Let $\Sigma \subset \sM$ be a model set (union of subspaces) and $x_0 \in \Sigma$. Let $f(x) =\|Ax-y\|_2 $.

\begin{definition}[Parametrization of $\Sigma$]
A parametrization of  $\Sigma$ is a function $\phi$ such that $\Sigma \subset \phi(\bR^d) = \{\phi(\theta) : \theta \in \bR^d \}$.
\end{definition}
\snewtext
\begin{definition}[Local minimum]
The point $\theta \in \bR^d$ is a local minimum of $g : \bR^d \to \bR$ if there is $\epsilon > 0 $ such that for any $\theta' \in \bR^d$ such that $\|\theta-\theta'\|_2 \leq \epsilon$, we have $ g(\theta) \leq g(\theta')$.
\end{definition}
\enewtext
In the following, we consider the model of $\epsilon$-separated Diracs with $\epsilon >0$:
\begin{equation}
\begin{split}
\Sigma = \Sigma_{k,\epsilon} := \{ \phi(\theta)= \sum_{r=1,k} a_r \delta_{t_r} : \; &\theta = (a, t_1,..,t_k) \in \bR^{k(d+1)}, a \in \bR^k, t_r \in \bR^d,\\
&\forall  r \neq l, \|t_r-t_l\|_2> \epsilon, t_r \in  \sB_2(R)\},\\
\end{split}
\end{equation}
where  

\begin{equation} \label{eqB2}
\sB_2(R) = \{t \in \bR^d : \|t\|_2 \leq R\}.
\end{equation}

 Note that, in this paper, the Dirac distributions could be supported on any compact set. We use $\sB_2(R)$ for the sake of simplicity. For $t_r \in \bR^d$, we write $t_r =(t_{r,j})_{j=1,d}$.

We consider the following parametrization of $\Sigma_{k,\epsilon}$:  $\sum_{i=1,k} a_i \delta_{t_i} = \phi(\theta)  $ with $\theta= (a_{1},.., a_{k}, t_{1},..,t_{k})$.  We define 
\begin{equation} 
  \Theta_{k,\epsilon}:= \phi^{-1}(\Sigma_{k,\epsilon}).
\end{equation}

We consider the problem
\begin{equation} \label{eq:minimization2}
    \theta^* \in \arg \min_{ \theta \in E}  g(\theta)  = \arg \min_{ \theta \in E}  \|A\phi(\theta)-y\|_2  .
\end{equation}
where $E= \bR^{k(d+1)}$ or $E= \Theta_{k,\epsilon}$ and  $g(\theta)=  f(\phi(\theta))$. 

Note that when $E = \Theta_{k,\epsilon}$, performing minimization~\eqref{eq:minimization2} allows to recover the minima of the ideal minimization~\eqref{eq:minimization}, yielding stable recovery guarantees under a RIP assumption. Hence we are particularly interested in this case. When $E = \bR^{k(d+1)}$, we speak about unconstrained minimization for minimization~\eqref{eq:minimization2}. 

The objective of this paper is to study the shape of the basin of attraction of the global minimum of~\eqref{eq:minimization2} when $E = \Theta_{k,\epsilon}$. 

\subsection{Basin of attraction and descent algorithms}\label{sec:basin_def}

In this work, we are interested in minimizing $g$ defined in \eqref{eq:minimization2}.
Since $g$ is a smooth function, a classical method to minimize $g$ is to consider a fixed step gradient descent.
The algorithm is the following. Consider an initial point $\theta_0 \in \bR^d$ and a step size $\tau >0$. We define by recursion the sequence $\theta_n$ by
\begin{equation} \label{def_thetan}
\theta_{n+1} = \theta_n - \tau \nabla g(\theta_n)
\end{equation}

\modif{Such algorithm is used as a refinement step in the greedy heuristic based on orthogonal matching pursuit~\cite[Algorithm 1, Step 5]{Keriven_2017} in the practical setting of compressive statistical learning.}

We now give the definition of basin of attraction that we will use in this paper.
\begin{definition}[Basin of attraction] \label{defbassin}
We say that  a set $\Lambda \subset \bR^d$ is a basin of attraction of $g$ if there exists $\theta^* \in \Lambda$  and $\tau>0$, such that if $\theta_0 \in \Lambda$ then  the sequence $\theta_n$ defined by \eqref{def_thetan} converges to $\theta^*$.
\end{definition}

This definition of basin of attraction is related to the following classical optimization result (see e.g. \cite{ciarlet1989introduction}):

\begin{proposition}
Assume $g$ to be a smooth coercive convex function, whose gradient is $L$ Lipschitz.
Let $\theta_0 \in \bR^d$. Then, if $\tau < \frac{1}{ L}$, there exists $\theta^* \in \bR^d$ such that the sequence $\theta_n$ defined by \eqref{def_thetan} converges to $\theta^*$.
\end{proposition}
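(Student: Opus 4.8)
The plan is to reproduce the classical analysis of fixed-step gradient descent, in four stages: produce a minimizer, show the objective values decrease along the iterates, deduce boundedness of the iterates together with vanishing of the gradients, and finally upgrade this to convergence of the iterates themselves. First, since $g$ is continuous and coercive it attains its infimum, so the set of global minimizers is nonempty; moreover, since $g$ is convex, a point is a global minimizer if and only if its gradient vanishes there. Fix for the moment an arbitrary global minimizer $\bar\theta$, so that $\nabla g(\bar\theta) = 0$.

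Second, I would invoke the descent lemma associated with the $L$-Lipschitz gradient: for all $\theta, \theta' \in \bR^d$, $g(\theta') \leq g(\theta) + \langle \nabla g(\theta), \theta' - \theta \rangle + \frac{L}{2}\|\theta'-\theta\|_2^2$. Applying it with $\theta = \theta_n$ and $\theta' = \theta_{n+1} = \theta_n - \tau \nabla g(\theta_n)$ gives $g(\theta_{n+1}) \leq g(\theta_n) - \tau\bigl(1 - \tfrac{L\tau}{2}\bigr)\|\nabla g(\theta_n)\|_2^2$, and since $\tau < 1/L$ the coefficient $\tau(1 - \tfrac{L\tau}{2})$ exceeds $\tau/2 > 0$. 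Hence $g(\theta_n)$ is non-increasing, so the whole sequence stays in the sublevel set $\{\theta : g(\theta) \leq g(\theta_0)\}$, which is bounded by coercivity; and telescoping the inequality yields $\frac{\tau}{2}\sum_n \|\nabla g(\theta_n)\|_2^2 \leq g(\theta_0) - \inf g < \infty$, so in particular $\nabla g(\theta_n) \to 0$.

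Third, to get convergence of $\theta_n$ and not merely of $g(\theta_n)$, I would use the co-coercivity of the gradient of a convex function with $L$-Lipschitz gradient (the Baillon–Haddad inequality), $\langle \nabla g(\theta) - \nabla g(\theta'), \theta - \theta'\rangle \geq \frac{1}{L}\|\nabla g(\theta) - \nabla g(\theta')\|_2^2$. Taking $\theta' = \bar\theta$ and using $\nabla g(\bar\theta) = 0$, one expands $\|\theta_{n+1} - \bar\theta\|_2^2 = \|\theta_n - \bar\theta\|_2^2 - 2\tau \langle \nabla g(\theta_n), \theta_n - \bar\theta\rangle + \tau^2 \|\nabla g(\theta_n)\|_2^2 \leq \|\theta_n - \bar\theta\|_2^2 + \tau\bigl(\tau - \tfrac{2}{L}\bigr)\|\nabla g(\theta_n)\|_2^2$, and $\tau < 1/L$ makes the last term nonpositive. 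Thus $\|\theta_n - \bar\theta\|_2$ is non-increasing for every global minimizer $\bar\theta$ (Fejér-type monotonicity). Combining this with the boundedness from the previous step: extract a subsequence $\theta_{n_j} \to \theta^*$; by continuity of $\nabla g$ and $\nabla g(\theta_{n_j}) \to 0$ we get $\nabla g(\theta^*) = 0$, so $\theta^*$ is a global minimizer; applying Fejér monotonicity with $\bar\theta = \theta^*$, the sequence $\|\theta_n - \theta^*\|_2$ is non-increasing and admits a subsequence converging to $0$, hence $\theta_n \to \theta^*$.

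I expect the only genuinely delicate point to be the co-coercivity (Baillon–Haddad) step, which is precisely what forces the iterates — and not just the objective values — to converge when $g$ is convex but possibly not strictly convex, so that the minimizer set may be unbounded in certain directions. If one prefers not to quote it, it can be obtained by applying the descent lemma to the convex function $\theta \mapsto g(\theta) - \langle \nabla g(\bar\theta), \theta\rangle$, which has $L$-Lipschitz gradient and is minimized at $\bar\theta$. The remaining ingredients — the descent lemma, bounded sublevel sets from coercivity, and summability of the squared gradient norms — are routine.
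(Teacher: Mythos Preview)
Your proof is correct and follows the standard route for fixed-step gradient descent on a smooth convex function: descent lemma for monotonicity of $g(\theta_n)$ and summability of $\|\nabla g(\theta_n)\|_2^2$, then co-coercivity (Baillon--Haddad) for Fej\'er monotonicity with respect to the minimizer set, and finally the usual subsequence argument to upgrade to full convergence.

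There is nothing to compare against in the paper itself: the authors do not prove this proposition but simply cite it as a classical result from \cite{ciarlet1989introduction}. Your argument supplies precisely the details one would expect behind that citation. The only minor observation is that the paper later (in the proof of Corollary~\ref{cor:basin_noiseless}) invokes exactly the Fej\'er-type monotonicity $\|\theta_n-\theta^*\|_2$ non-increasing that you establish here, citing \cite{Bauschke_2011} for the nonexpansiveness underlying it; so your use of co-coercivity is fully in line with how the paper itself reasons about gradient descent elsewhere.
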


An immediate consequence of the previous proposition is the following corollary.
\begin{corollary}\label{cor:convergence_gradient_descent}
Assume $g$ to be a smooth function.
Assume that $g$ has a minimizer $\theta^* \in \bR^d$.
Assume that there exists an open set $\Lambda \subset \bR^d$ such that $\theta^* \in \Lambda$
, $g$ is convex on $\Lambda$ with  $L$ Lipschitz gradient.
\modif{Assume also that the sequence $\theta_n$ generated by the descent algorithm remains in $\Lambda$.}
Then, if $\theta_0 \in \Lambda$ and $\tau < \frac{1}{ L}$, 
the sequence $\theta_n$ defined by \eqref{def_thetan} converges to $\theta^*$.
\end{corollary}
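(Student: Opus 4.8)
The plan is to reproduce the classical convergence proof of gradient descent for convex smooth functions that underlies the preceding proposition (see \cite{ciarlet1989introduction}), but to run it ``inside $\Lambda$'': every quantity that argument manipulates is $g$ or $\nabla g$ evaluated at an iterate $\theta_n$, at the minimizer $\theta^*$, or along a segment $[\theta_n,\theta_{n+1}]$, and all of these lie in $\Lambda$ — the segments because $\Lambda$ is convex (in the situations of interest $\Lambda$ is a ball), the iterates by the standing assumption that the trajectory stays in $\Lambda$. Thus the \emph{local} convexity of $g$ and the \emph{local} $L$-Lipschitz smoothness of $\nabla g$ on $\Lambda$ suffice; this is cleaner than trying to extend $g$ to a globally convex coercive function so as to invoke the proposition verbatim.

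First I would write the descent inequality: integrating $\nabla g$ along $[\theta_n,\theta_{n+1}]\subset\Lambda$ and using the $L$-Lipschitz bound on $\Lambda$ gives, for the recursion \eqref{def_thetan},
\[
g(\theta_{n+1})\ \le\ g(\theta_n) - \tau\Bigl(1-\tfrac{L\tau}{2}\Bigr)\,\|\nabla g(\theta_n)\|_2^2 .
\]
Since $\tau<1/L$ the factor is positive, so $(g(\theta_n))_n$ is non-increasing; it is bounded below by $g(\theta^*)$ because $\theta^*$ is a global minimizer, hence converges, and telescoping yields $\sum_n\|\nabla g(\theta_n)\|_2^2<+\infty$, in particular $\nabla g(\theta_n)\to 0$. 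Next I would use convexity of $g$ on $\Lambda$ via the subgradient inequality at $\theta_n$, namely $\langle\nabla g(\theta_n),\theta_n-\theta^*\rangle\ge g(\theta_n)-g(\theta^*)\ge 0$, and expand
\[
\begin{aligned}
\|\theta_{n+1}-\theta^*\|_2^2
&= \|\theta_n-\theta^*\|_2^2 - 2\tau\,\langle\nabla g(\theta_n),\theta_n-\theta^*\rangle + \tau^2\|\nabla g(\theta_n)\|_2^2\\
&\le \|\theta_n-\theta^*\|_2^2 + \tau^2\|\nabla g(\theta_n)\|_2^2 .
\end{aligned}
\]
Combined with $\sum_n\|\nabla g(\theta_n)\|_2^2<\infty$, this is the classical quasi-Fejér situation: a nonnegative sequence $a_{n+1}\le a_n+\epsilon_n$ with $\sum_n\epsilon_n<\infty$ converges, so $(\|\theta_n-\theta^*\|_2^2)_n$ converges and $(\theta_n)$ is bounded.

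Finally I would extract a subsequence $\theta_{n_j}\to\bar\theta$. Letting $j\to\infty$ in $g(\theta^*)\ge g(\theta_{n_j})+\langle\nabla g(\theta_{n_j}),\theta^*-\theta_{n_j}\rangle$ and using $\nabla g(\theta_{n_j})\to 0$ and the (global) smoothness of $g$ gives $g(\bar\theta)\le g(\theta^*)$, so $\bar\theta$ is itself a global minimizer and $g(\bar\theta)=\lim_n g(\theta_n)$. Re-running the displayed expansion with $\bar\theta$ in place of $\theta^*$ — the subgradient inequality at $\theta_n$ still applies, extended to $\overline\Lambda$ by continuity if $\bar\theta$ lies on $\partial\Lambda$ — shows $(\|\theta_n-\bar\theta\|_2^2)_n$ converges as well, and since a subsequence of it tends to $0$ the whole limit is $0$, i.e.\ $\theta_n\to\bar\theta$. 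The step I expect to be the real obstacle is precisely this last one: ensuring that the accumulated information about the cluster points pins down the limit and that global optimality transfers to $\bar\theta$. It is handled by continuity of $g$ and $\nabla g$ on all of $\bR^d$ together with the trajectory never leaving the convex region $\Lambda$; when the global minimizer is unique (the case of interest here, where the Hessian at the minimum is positive definite) one gets $\bar\theta=\theta^*$, which is the statement.
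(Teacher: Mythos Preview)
Your argument is correct and is essentially the standard proof the paper has in mind; the paper itself gives no proof at all, merely declaring the corollary ``an immediate consequence of the previous proposition'' and leaving the reader to unpack that. What you have written is precisely that unpacking: descent lemma on the segment $[\theta_n,\theta_{n+1}]$, convexity inequality at $\theta_n$ against $\theta^*$, quasi-Fej\'er monotonicity, and subsequence extraction. Two small remarks. First, you are right that the argument needs $\Lambda$ convex (both for the segments and for the very phrase ``$g$ convex on $\Lambda$'' to make sense); this is implicit in the paper and always satisfied in its applications, where $\Lambda$ is an $\ell^2$-ball. Second, your final caveat is well placed: as literally stated the corollary claims convergence to the \emph{given} $\theta^*$, whereas your argument (and the proposition it localises) yields convergence to \emph{some} global minimizer $\bar\theta\in\overline\Lambda$; the identification $\bar\theta=\theta^*$ indeed requires uniqueness, which in the paper's setting follows from the strict positive-definiteness of the Hessian at $\theta^*$ established in Section~\ref{sec:control_Hessian}. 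This is a looseness in the paper's formulation rather than in your proof.
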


\begin{remark}
Assume that $g$ is in $\sC^2$. 
Let $\lambda_{\max}(t)$ the largest  eigenvalue of the Hessian matrix of $g(t)$.
Let $\Theta \subset \bR^d$ an open set.
If there exists $L>0$ such that for all $t$ in $\Theta$, $\lambda_{\max}(t) \leq L$, then $g$ has a $L$ Lipschitz gradient in  $\Theta$.
\end{remark}

\modif{It is not obvious that the unconstrained gradient descent defined in iterations~\eqref{def_thetan} and the corresponding notion of basin of attraction is suitable to perform constrained minimization~\eqref{eq:minimization2}. In fact, we  show in this paper (essentially through Lemma~\ref{lem:link_unconstrained_constrained})  that the global minimum of constrained minimization~\eqref{eq:minimization2} has a basin of attraction.}

\subsection{Related work}

While original for the sparse spike estimation problem, it must be noted that the study of non-convex optimization schemes for linear inverse problems has gained attraction recently for different kinds of low-dimensional models. For low-rank matrix estimation, a smooth parametrization of the problem is possible and it has been shown that a RIP  guarantees the absence of spurious minima~\cite{Zhao_2015,Bhojanapalli_2016}. 
In~\cite{Waldspurger_2018}, a model for phase recovery  with alternated projections and smart initialization is considered. Conditions on the number of measurements guarantee the success of the technique. In the area of blind deconvolution and bi-convex programming, recent works have exploited similar ideas~\cite{Ling_2017,Cambareri_2018}. 

In the case of super-resolution, the idea of gradient descent has been studied in an asymptotic regime ($k\to \infty$) in~\cite{Chizat_2018} with theoretical conditions based on Wasserstein gradient flow for the initialization. In our case, we study the particular super-resolution problem with a fixed number of impulsions and we place ourselves in conditions where stable recovery is guaranteed, leading to explicit conditions on the initialization.

The objective of this article is to investigate to what extent these ideas can be applied to the theoretical study of the case of spike super-resolution estimation. 

The question of projected gradient descent raised in the last Section has been explored for general low-dimensional models \cite{Blumensath_2011}. It has been shown that the RIP guarantees the convergence of such algorithms with an ideal (often non practical) projection. Approached projected gradient descents have also been studied and shown to be successful for some particular applications~\cite{Golbabaee_2018}. The spikes super-resolution problem adds the parametrization step to these problems.

\subsection{Contributions and organization of the paper}
After a precise description of the setting, the definition of the kernel metric of interest and the associated restricted isometry for the spike estimation problem at the beginning of Section~\ref{sec:Hessian}, this article gives the following  original results:
\begin{enumerate}
 \item A bound on the conditioning of the Hessian  at a global minimum of the minimization in the parameter space is given in Section~\ref{sec:Hessian}. This bound shows that the better RIP constants are (RIP constants improve with respect to the number of measurements), the better the non-convex minimization problem behaves. It also shows that there is a basin of attraction of the global optimum  where no separation constraints are needed (for descent algorithms with an initialization close to the minimum, separation constraints can be discarded).
 \item An explicit shape of the basin of attraction of global minima is given in Section~\ref{sec:basin}. The size of the basin of attraction increases when the RIP constant gets better. 
\end{enumerate}
To conclude, we discuss the role of the separation constraint in descent algorithms in Section~\ref{sec:projected_gradient}, and we explain why enforcing a separation might improve them.  

\section{Conditioning of the Hessian} \label{sec:Hessian}

This section is devoted to the study of the Hessian matrix of $g$. In particular, we provide a bound on the conditioning of the Hessian  at a global minimum of the minimization in the parameter space. 

\subsection{Notations}
   The operator $A$ is a linear operator modeling $m$ measurements in $\bC^m$ ( $\tim A \subset \bC^m$ ) on the space of measures on $\bR^d$ defined by:  for $l=1,m$, 
\begin{equation} \label{eq:distribution}
(Au)_l =  \int_{\bR^d} \alpha_l(t) \id u(t)
\end{equation}
where $(\alpha_l)_l$ is a collection of functions in  $\sC^2(\sB_2(R))$ (twice continuously differentiable functions on $\sB_2(R)$ defined in \eqref{eqB2}).  

Notice that the integral used in \eqref{eq:distribution} is in fact a duality product \modif{$\lsd u, \alpha_l \rsd$} between a function in $\sC^2(\sB_2(R))$ and a finite signed measure over $\bR^d$. As the $\alpha_l$ are in $\sC^2(\sB_2(R))$, \modif{we can similarly apply $A$ to distributions of order 1 and 2 with support included in the relative interior of $\sB_2(R)$ which we note  $\text{rint} \sB_2(R)$.}

While a lot of  results for spike super-resolution are expressed on the $d$-dimensional Torus $\bT^d$, we prefer the setting of Diracs with bounded support on $\bR^d$ which is often closer to the physics of the considered phenomenom. However, our work is directly extended to the Torus setting by replacing $\bR^d$ by $\bT^d$ and $\sB^2(R)$ by $\bT^d$.

In $\bC^m$, we consider the Hermitian product $\ls x,y\rs = \sum x_i \bar{y}_i$. An example of such measurement operator is the (weighted) Fourier sampling: $(Au)_l = \frac{1}{\sqrt{m}} \int_{\bR^d}  c_l e^{-j \ls \omega_l,t\rs } \id u(t)$ for some chosen  frequencies $\omega_l \in \bR^d$ and frequency dependent weights $c_l \in \bR$.

Let $x = \sum_{i=1,k} a_i \delta_{t_i}$. By linearity of $A$, we have 
\begin{equation}
 (Ax)_l = \sum_{i=1}^k (A\delta_{t_i})_l =\sum_{i=1}^k a_i \alpha_l(t_i).
\end{equation}

With $g(\theta)=f(\phi(\theta))=  \|A \phi(\theta)-y\|_2^2$, we get:
\begin{equation}
g(\theta)=\sum_{l=1}^m \left|\sum_{i=1}^k a_i\alpha_l(t_i)-y_l\right|^2.
 \end{equation}

In the following, the notion of directional derivative will be important.
\begin{definition}[Directional derivatives]

Let $f$ be a $\sC^1$ function, and  $v \in \bR^d$ such that $\|v\|_2 = 1$. 
Then we can define the directional derivative of $f$ in direction $v$ by:
\begin{equation}
f_v'(t):=\langle v, \nabla f(t) \rangle=\lim_{h \to 0^+} \frac{f(t+hv)-f(t)}{h}
\end{equation}
Let $f$ be  a $\sC^2$ function, and  $(v_1,v_2) \in \bR^{2d}$ such that $\|v_1\|_2 = \|v_2\|_2 = 1$. Then we can define the second order directional derivative of $f$ in directions $v_1$ and $v_2$ by:
\begin{equation}
f_{v_1,v_2}''(t):=\langle v_1, \nabla^2 f(t) v_2 \rangle
\end{equation}
Notice that of course $f_{v_1,v_2}''(t)=f_{v_2,v_1}''(t)$. If $v_1=v_2$, we write $f_{v_1}''(t):=f_{v_1,v_1}''(t)$
\end{definition}

In particular, they permit to introduce derivatives of Dirac measures supported on $\bR^d$.

\begin{definition}[Directional derivatives of Dirac]

 Let $v \in \bR^d$ such that $\|v\|_2 = 1$. The distribution $\delta_{t_0,v}^\prime $ is  defined by \modif{$\lsd \delta_{t_0,v}^\prime , f\rsd= -f_v'(t_0) $}.  It is the limit of $\nu_\eta=-\frac{\delta_{t_0+\eta v} -\delta_{t_0}}{\eta}$ for $\eta \to 0^+$ in the distributional sense : for all $h \in \sC^1
 (\bR^d)$, $\int_{\bR}  h(t) \id\nu_\eta(t) \to_{\eta\to 0^+} \lsd \delta_{t_0,v}^\prime , h\rsd$. 
 
 Similarly, the distribution $\delta_{t_0,v}^{\prime \prime}$ is  defined by \modif{$\lsd  \delta_{t_0,v}^{\prime \prime}, f\rsd= f_v''(t_0)$} for $f \in  \sC^2 (\bR^d)$ and  the distribution $\delta_{t_0,v_1,v_2}^{\prime \prime}$ is  defined by \modif{$\lsd   \delta_{t_0,v_1,v_2}^{\prime \prime}, f\rsd= f_{v_1,v_2}''(t_0)$} for $f \in  \sC^2 (\bR^d)$ where $f_{v_1,v_2}''$ is the derivative of $f$ in direction $v_1$ chained with the derivative of $f$ in direction $v_2$.
 
 When $v = e_i$ is a vector of the canonical basis of $\bR^d$ , we write $\delta_{t_0,i}^{\prime}=\delta_{t_0,e_i}^{ \prime}$ and $\delta_{t_0,i}^{\prime \prime}= \delta_{t_0,e_i,e_i}^{\prime \prime}$.
\end{definition}

We now have the necessary tools to start the study of the Hessian of $g$.

\subsection{Gradient and Hessian of the objective function  $g$}\label{sec:gradient_Hessian}

We calculate the gradient and Hessian of $g$  in the two following propositions. We start with the gradient of $g$.
\begin{proposition}\label{prop:gradient}
For any $\theta \in \bR^{2k}$, we have:
\begin{equation}
 \begin{split}
  \frac{\partial g(\theta)}{\partial a_r}&= 2 \re \ls A\delta_{t_r}, A \phi(\theta)-y\rs, \\
 \end{split}
\end{equation}

\begin{equation}
 \begin{split}
 \frac{\partial g(\theta)}{\partial t_{r,j}}&= - 2 a_r\re \ls A \delta_{t_r,j}^{\prime}, A \phi(\theta)-y\rs.\\
 \end{split}
\end{equation}
\end{proposition}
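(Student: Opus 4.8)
The plan is to compute the partial derivatives of $g(\theta) = \|A\phi(\theta) - y\|_2^2 = \langle A\phi(\theta) - y, A\phi(\theta) - y\rangle$ directly from the chain rule, treating the map $\theta \mapsto A\phi(\theta) \in \bC^m$ as a differentiable map from $\bR^{k(d+1)}$ and exploiting the real bilinearity of the map $(u,v) \mapsto \re\langle u,v\rangle$. The key observation is that $\phi$ is affine in the amplitudes $a_r$ and smooth in the locations $t_{r,j}$ (because each $\alpha_l \in \sC^2(\sB_2(R))$), so $A\phi(\theta)$ is genuinely differentiable in $\theta$ on the relevant domain.

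First I would record that for a real parameter $s$ with $A\phi(\theta)$ depending on $s$, one has $\frac{\partial}{\partial s}\|A\phi(\theta)-y\|_2^2 = 2\,\re\langle \frac{\partial}{\partial s}(A\phi(\theta)), A\phi(\theta)-y\rangle$; this follows from expanding the Hermitian product and using $\langle u,v\rangle = \overline{\langle v,u\rangle}$ so that the cross terms combine into twice the real part. Then the whole computation reduces to identifying $\frac{\partial}{\partial a_r}(A\phi(\theta))$ and $\frac{\partial}{\partial t_{r,j}}(A\phi(\theta))$ in $\bC^m$. For the amplitude: since $\phi(\theta) = \sum_i a_i\delta_{t_i}$ and $A$ is linear, $\frac{\partial}{\partial a_r}(A\phi(\theta)) = A\delta_{t_r}$, which gives the first formula. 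For the location: componentwise $(A\phi(\theta))_l = \sum_i a_i\alpha_l(t_i)$, so $\frac{\partial}{\partial t_{r,j}}(A\phi(\theta))_l = a_r\,\partial_{t_{r,j}}\alpha_l(t_r) = -a_r\,\langle \delta_{t_r,j}', \alpha_l\rangle$ by the definition of the distributional derivative $\delta_{t_r,j}'$ (note the sign: $\langle\delta_{t_0,v}',f\rangle = -f_v'(t_0)$). Hence $\frac{\partial}{\partial t_{r,j}}(A\phi(\theta)) = -a_r\, A\delta_{t_r,j}'$, where $A$ is applied to the order-one distribution $\delta_{t_r,j}'$ as licensed by the smoothness assumption on the $\alpha_l$ and the fact that $t_r$ lies in the relative interior of $\sB_2(R)$. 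Substituting into the formula for the derivative of the squared norm yields exactly the two claimed identities.

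The main thing to be careful about — rather than a genuine obstacle — is the interchange of $A$ with the distributional differentiation and the justification that $A\delta_{t_r,j}'$ is well-defined: this is precisely the content of the remark following~\eqref{eq:distribution}, namely that $A$ extends to order-one (and order-two) distributions supported in $\text{rint}\,\sB_2(R)$ since each $\alpha_l \in \sC^2(\sB_2(R))$, and that $(A\delta_{t_r,j}')_l = \langle \delta_{t_r,j}', \alpha_l\rangle = -\partial_{t_{r,j}}\alpha_l(t_r)$. One should also note that the statement is written for $\theta \in \bR^{2k}$, i.e. the case $d=1$ where $t_r \in \bR$, but the argument is dimension-free; for general $d$ one simply carries the index $j = 1,\dots,d$ throughout. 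The remaining verifications (differentiability of $t_r \mapsto \alpha_l(t_r)$, validity of the chain rule, uniform-in-$l$ control so that the finite sum over $l=1,\dots,m$ differentiates termwise) are routine and I would not belabor them.
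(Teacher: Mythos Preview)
Your proposal is correct and follows essentially the same approach as the paper: both compute the derivative of the squared norm directly, identify $\partial_{a_r}(A\phi(\theta)) = A\delta_{t_r}$ and $\partial_{t_{r,j}}(A\phi(\theta)) = -a_r A\delta_{t_r,j}'$, and recover the stated formulas; the paper simply writes the same calculation componentwise over $l=1,\dots,m$ rather than via the abstract chain-rule identity you use.
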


\begin{proof}
See Appendix~\ref{proof21}.
\end{proof}

The next proposition gives the values of the Hessian matrix of $g$ which has a simple expression with the use of derivatives of Diracs. 

\begin{proposition}\label{prop:Hessian}
For any $\theta \in \bR^{k(d+1)}$
\begin{equation}
 \begin{split}
  H_{1,r,s}=\frac{\partial^2 g(\theta)}{\partial a_r \partial a_s}&= 2 \re \ls A\delta_{t_r}, A \delta_{t_s} \rs. \\
 \end{split}
\end{equation}
\begin{equation}
 \begin{split}
 H_{2,r,j_1,s,j_2}=\frac{\partial^2 g(\theta)}{\partial t_{r,j_1} \partial t_{s,j_2}}&=  2 a_r a_s \re \ls A \delta_{t_r,j_1}^{\prime}, A  \delta_{t_s,j_2}^{\prime}\rs 
 \\ &
 + \indic(r=s)2a_r\re \ls A \delta_{t_r,j_1,j_2}^{\prime \prime}, A\phi(\theta)-y \rs .\\
 \end{split}
\end{equation}

\begin{equation}
 \begin{split}
 H_{12,r,s,j} =\frac{\partial^2 g(\theta)}{\partial a_r \partial t_{s,j}}&=  -2  a_s \re \ls A \delta_{t_r}, A  \delta_{t_s,j}^{\prime}\rs -\indic(r=s) 2\re \ls A \delta_{t_s,j}^{\prime}, A\phi(\theta)-y \rs. \\
 \end{split}
\end{equation}
Hence the Hessian can be decomposed as the sum of two matrices $H= G + F$ with 
\begin{equation}
 \begin{split}
  G_{1,r,s}&= 2 \re \ls A\delta_{t_r}, A \delta_{t_s} \rs, \\
 G_{2,r,j_1,s,j_2}&=  2 a_r a_s \re \ls A \delta_{t_r,j_1}^{\prime}, A  \delta_{t_s,j_2}^{\prime}\rs ,
 \\
 G_{12,r,s,j}&= - 2  a_s \re \ls A \delta_{t_r}, A  \delta_{t_s,j}^{\prime}\rs.\\
 \end{split}
\end{equation}
and 
\begin{equation}
 \begin{split}
  F_{1,r,s}&= 0,\\
 F_{2,r,j_1,s,j_2}&=  \indic(r=s)2a_r\re \ls A \delta_{t_r,j_1,j_2}^{\prime \prime}, A\phi(\theta)-y \rs ,\\
 F_{12,r,s,j}&=  -\indic(r=s) 2\re \ls A \delta_{t_s,j}^{\prime}, A\phi(\theta)-y \rs.\\
 \end{split}
\end{equation}
\end{proposition}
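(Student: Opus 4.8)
The plan is to obtain every block of $H$ by differentiating once more the two first-order derivatives already established in Proposition~\ref{prop:gradient}, treating separately differentiation with respect to an amplitude variable $a_s$ and with respect to a coordinate $t_{s,j_2}$ of a position, and then to read off the splitting $H = G+F$ by sorting the resulting terms according to whether or not they contain the residual $A\phi(\theta)-y$.

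First I would record the elementary chain-rule identities to be used repeatedly. Since $A\phi(\theta)=\sum_i a_i A\delta_{t_i}$, we have $\partial_{a_s}(A\phi(\theta))=A\delta_{t_s}$. Using the definition $\lsd \delta_{t_0,v}^{\prime}, f\rsd=-f_v'(t_0)$ one gets $\partial_{t_{s,j}}(A\delta_{t_s})=-A\delta_{t_s,j}^{\prime}$, hence $\partial_{t_{s,j}}(A\phi(\theta))=-a_s A\delta_{t_s,j}^{\prime}$; differentiating once more, using this time the opposite-sign convention $\lsd \delta_{t_0,v_1,v_2}^{\prime\prime}, f\rsd=+f_{v_1,v_2}''(t_0)$, one obtains $\partial_{t_{s,j_2}}(A\delta_{t_s,j_1}^{\prime})=-A\delta_{t_s,j_1,j_2}^{\prime\prime}$. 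These identities are legitimate because the $\alpha_l$ are $\sC^2$, so that $A$ acts on the order-$1$ and order-$2$ distributions involved, as discussed after~\eqref{eq:distribution}. The single minus sign that appears when passing from $\delta^{\prime}$ to $\delta^{\prime\prime}$ (and which does \emph{not} appear when passing from $\delta$ to $\delta^{\prime}$, because of the matching sign already built into the gradient formula) is the one bookkeeping point that must be handled carefully, and I expect it to be essentially the only source of possible error.

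The computation itself is then the product rule applied to the maps $\theta\mapsto \re\ls \cdot, A\phi(\theta)-y\rs$. Differentiating $\partial g/\partial a_r = 2\re\ls A\delta_{t_r}, A\phi(\theta)-y\rs$ with respect to $a_s$ leaves $A\delta_{t_r}$ fixed and hits only the residual through $\partial_{a_s}(A\phi(\theta))=A\delta_{t_s}$, which gives $H_{1,r,s}$; differentiating it with respect to $t_{s,j}$ gives two contributions, one from $\partial_{t_{s,j}}(A\delta_{t_r})=-\indic(r=s)A\delta_{t_r,j}^{\prime}$ and one from $\partial_{t_{s,j}}(A\phi(\theta))=-a_s A\delta_{t_s,j}^{\prime}$, which gives $H_{12,r,s,j}$. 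Starting instead from $\partial g/\partial t_{r,j_1}=-2a_r\re\ls A\delta_{t_r,j_1}^{\prime}, A\phi(\theta)-y\rs$ and differentiating with respect to $t_{s,j_2}$, the product rule produces a term $\indic(r=s)\,2a_r\re\ls A\delta_{t_r,j_1,j_2}^{\prime\prime}, A\phi(\theta)-y\rs$, coming from $\partial_{t_{s,j_2}}(-a_r A\delta_{t_r,j_1}^{\prime})=\indic(r=s)\,a_r A\delta_{t_r,j_1,j_2}^{\prime\prime}$, together with a term $2a_r a_s\re\ls A\delta_{t_r,j_1}^{\prime}, A\delta_{t_s,j_2}^{\prime}\rs$, coming from pairing the two first-order derivatives of the residual $-a_r A\delta_{t_r,j_1}^{\prime}$ and $-a_s A\delta_{t_s,j_2}^{\prime}$ (the two minus signs cancelling); this gives $H_{2,r,j_1,s,j_2}$. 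Differentiating $\partial g/\partial t_{r,j_1}$ with respect to $a_s$ reproduces $H_{12}$ and serves only as a consistency check via the symmetry of second derivatives.

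Finally, the decomposition $H=G+F$ is a matter of regrouping: the entries or parts of entries obtained by pairing a first-order derivative of the residual with a first-order derivative of the residual — namely $2\re\ls A\delta_{t_r}, A\delta_{t_s}\rs$, $2a_r a_s\re\ls A\delta_{t_r,j_1}^{\prime}, A\delta_{t_s,j_2}^{\prime}\rs$ and $-2a_s\re\ls A\delta_{t_r}, A\delta_{t_s,j}^{\prime}\rs$ — define $G$, while the parts obtained by pairing a second-order derivative of the residual with the residual itself — all carrying a factor $A\phi(\theta)-y$ and an indicator $\indic(r=s)$ — define $F$. No estimate is involved: once the signs of the previous step are settled, the stated formulas follow by inspection.
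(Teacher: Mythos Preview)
Your proposal is correct and follows essentially the same route as the paper's proof: both simply differentiate the first-order formulas of Proposition~\ref{prop:gradient} once more via the product rule and then regroup terms. The only cosmetic difference is that the paper carries out the computation componentwise (writing out $\sum_l \alpha_l(t_r)\overline{\partial_j\alpha_l(t_s)}$ etc.), whereas you work directly at the operator level using the identities $\partial_{t_{s,j}}(A\delta_{t_s})=-A\delta_{t_s,j}^{\prime}$ and $\partial_{t_{s,j_2}}(A\delta_{t_s,j_1}^{\prime})=-A\delta_{t_s,j_1,j_2}^{\prime\prime}$; the sign bookkeeping you flag is handled correctly.
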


\begin{proof}
See Appendix~\ref{proof21}.
\end{proof}

\subsection{Kernel, dipoles and the RIP }\label{sec:kernel_dipole}

In order to be able to build an operator $A$ with a RIP, we define a reproducible kernel Hilbert space (RKHS)  structure on the space of measures as in \cite{Gribonval_2017}, see also \cite{Sriperumbudur_2010}. The natural metric on the space of finite signed measures, the total variation of measures, is not well suited for a RIP analysis of the spikes super-resolution problems, as it does not measure the spacing between Diracs. When using the RIP, fundamental objects appear in the calculations: dipoles of Diracs. In this section we show that the typical RIP implies a RIP on dipoles and their generalization. 

\begin{definition}[Kernel, scalar product and norm]
 For finite signed measures over $\bR^d$, the Hilbert structure induced by a kernel $h$ (a smooth function from $\bR^d \times \bR^d \to \bR$) is  defined by the following scalar product between 2 measures $\pi_1,\pi_2$
\begin{equation}
 \ls \pi_1,\pi_2 \rs_h  = \int_{\bR^d}\int_{\bR^d} h(t_1,t_2) \id\pi_1(t_1)\id\pi_2(t_2).
\end{equation}
We can consequently define 
\begin{equation}
 \| \pi_1 \|_h^2 = \ls \pi_1,\pi_1 \rs_h.
\end{equation}
We have the relation 
\begin{equation}
 \| \pi_1 +\pi_2\|_h^2 = \| \pi_1 \|_h^2 +2\ls \pi_1,\pi_2 \rs_h +\| \pi_2\|_h^2.
\end{equation}
\end{definition}
Measuring distances with the help of $\|\cdot\|_h$ can be viewed as measuring distances at a given resolution set by $h$. Typically we use Gaussian kernels where the sharper the kernel is, the more accurate it is.

The next definition is taken from \cite{Gribonval_2017}.
\begin{definition}[($\epsilon$-)Dipole, separation]
 An $\epsilon$-dipole (noted dipole for simplicity) is  a measure $\pi = a_1\delta_{t_1}-a_2 \delta_{t_2}$ where $\|t_1-t_2\|_2 \leq \epsilon$. Two dipoles  $\pi_1 = a_1\delta_{t_1}-a_2 \delta_{t_2}$ and  $\pi_2 = a_3\delta_{t_3}-a_4 \delta_{t_4}$ are $\epsilon$-separated if their support are strictly $\epsilon$-separated (with respect to the $\ell^2$-norm on $\bR^d$), i.e. if  $\|t_1-t_3\|_2 > \epsilon$, $\|t_2-t_3\|_2 > \epsilon$ and  $\|t_1-t_4\|_2 > \epsilon$ and   $\|t_2-t_4\|_2 > \epsilon$.
\end{definition}

Compared to \cite{Gribonval_2017}, we need to introduce a new definition.
\begin{definition}[Generalized dipole] 
 A generalized dipole $\nu$ is either a dipole or  a distribution of order 1 of the form $a_1\delta_{t}+ a_2 \delta_{t,v}^{\prime}$. Two generalized dipoles are $\epsilon$-separated if their support are strictly $\epsilon$-separated (with respect to the $\ell^2$-norm on $\bR^d$).
\end{definition}
In this article we use regular, symmetrical, translation invariant kernels. Most recent developments to non translation invariant kernels~\cite{Poon_2018} could be considered to generalize this work, but they  are out of the scope of this article for the sake of simplicity.

\begin{assumption}\label{assum:kernel_prop}
A kernel $h$ follows this assumption if 
\begin{itemize}
 \item $h \in \sC^2(\bR^d,\bR^d)$.
 \item $h$ is symmetrical with respect to $0$, translation invariant, i.e. we can write $h(t_1,t_2)= \rho(\|t_1-t_2\|_2)$ where $\rho \in \sC^2(\bR)$.
  \item \modif{$h(t,t)= \rho(0) = 1 = \max_{t\in \bR^d,s\in\bR^d} |h(t,s)|$, $\rho'(0)= 0$, and $\rho''(0) < 0$}.
    \item \modif{there is a constant $c_h$ such that $0< c_h \leq \frac{\epsilon}{2}$ and  $\rho (t) \leq 1-\frac{|\rho''(0)|}{2}t^2$ for $t \in [0, c_h]$ (the existence of $c_h$ is a consequence of previous assumptions).}
    \item there is a constant $\mu_h$ such that, for all two $\epsilon$-separated dipoles, $\ls \nu_1,\nu_2 \rs_h \leq \mu_h \|\nu_1\|_h  \|\nu_2\|_h$ (mutual coherence).
\end{itemize} 
\end{assumption}

Note that the assumption that $h \in \sC^2$ guarantees the existence of integrals with respect to finite signed measures and duality product with distribution of order 1 with bounded supports.

\paragraph{Example} The now almost canonical well behaved kernel is the Gaussian kernel. From \cite{Gribonval_2017}, for $\epsilon=1$, using $h_0(t,s)= e^{-(t-s)^2/(2\sigma_k^2)} $ with $\sigma_k^2 = \frac{1}{2.4log(2k-1) +24}$, we have that $h_0$ follows Assumption~\ref{assum:kernel_prop} with  $\mu_{h_0} =\frac{3}{4(k-1)}$.
\\
The following Lemma and definition extend the scalar product induced by $h$ to generalized dipoles. 
\snewtext
\begin{lemma}\label{def:scalar_dip}
 Let $\nu_1 = a_1 \delta_{t_1} +b_1 \delta_{v_1,t_1}^{\prime}, \nu_2 = a_2 \delta_{t_2} +b_2 \delta_{v_2,t_2}^{\prime} $ be two generalized dipoles. Then  $\nu_1$ and $\nu_2$  are limits (in the distributional sense) of two sequences  of dipoles $\nu_1^{\eta_1} $ and $\nu_2^{\eta_2}$  for $\eta_1,\eta_2 \to 0$,   the quantity $\ls\nu_1^{\eta_1},\nu_2^{\eta_2} \rs_h $ converges, the limit is unique  (does not depend on the choice of  $\nu_1^{\eta_1} $ and $\nu_2^{\eta_2}$) and  

\begin{equation}
\begin{split}
   \lim_{\eta_1,\eta_2 \to 0} \ls\nu_1^{\eta_1},\nu_2^{\eta2}\rs_h 
  =& a_1 a_2f( t_1-t_2 )  - a_2 b_1  f_{v_1}^{\prime}(t_1-t_2) -a_1b_2f_{v_2}^{\prime}(t_2-t_1)\\
  &-b_1b_2  f_{v_1,v_2}^{\prime \prime}(t_1- t_2  ) \\
  \end{split}
\end{equation}
where $f(t) = \rho(\|t\|_2)$.
\end{lemma}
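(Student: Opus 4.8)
The plan is to produce explicit approximating dipole sequences, expand $\ls\cdot,\cdot\rs_h$ into a finite sum of evaluations of the (translation invariant) kernel, pass to the joint limit $\eta_1,\eta_2\to 0$ using that $h\in\sC^2$, and finally identify the limit with a duality pairing that depends only on $\nu_1$ and $\nu_2$, which gives both convergence and uniqueness. For a generalized dipole $\nu=a\delta_t+b\,\delta_{t,v}^{\prime}$ with $\|v\|_2=1$, I would take
\[
\nu^{\eta}:=a\,\delta_t-\frac{b}{\eta}\bigl(\delta_{t+\eta v}-\delta_t\bigr)=\Bigl(a+\tfrac{b}{\eta}\Bigr)\delta_t-\tfrac{b}{\eta}\,\delta_{t+\eta v};
\]
since $\|t-(t+\eta v)\|_2=\eta$ this is an $\epsilon$-dipole for $0<\eta\le\epsilon$, and by the definition of $\delta_{t,v}^{\prime}$ it converges to $\nu$ in the distributional sense. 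Taking $\nu_1^{\eta_1}$ and $\nu_2^{\eta_2}$ this way, expanding the bilinear form on these finitely supported measures, and using $h(p,q)=f(p-q)$ with $f(t)=\rho(\|t\|_2)$ and $\tau:=t_1-t_2$, I would regroup the four resulting terms by the monomials $a_1a_2$, $a_1b_2$, $a_2b_1$, $b_1b_2$: the $a_1a_2$-coefficient is $f(\tau)$, the two $a_ib_j$-coefficients are first-order difference quotients of $f$, and the $b_1b_2$-coefficient is a second-order difference quotient of $f$.

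The next step is to pass to the limit term by term. The constant term already contributes $a_1a_2 f(t_1-t_2)$. For the first-order terms I would write, e.g., $\tfrac1{\eta_1}\bigl(f(\tau+\eta_1 v_1)-f(\tau)\bigr)=\int_0^1 f_{v_1}^{\prime}(\tau+s\eta_1 v_1)\,\id s\to f_{v_1}^{\prime}(\tau)$, using continuity of $\nabla f$; note $f=\rho(\|\cdot\|_2)\in\sC^2(\bR^d)$ since $\rho\in\sC^2$ and $\rho^{\prime}(0)=0$. For the second-order term I would use the identity
\[
f(\tau+\eta_1 v_1-\eta_2 v_2)-f(\tau+\eta_1 v_1)-f(\tau-\eta_2 v_2)+f(\tau)=-\int_0^{\eta_1}\!\int_0^{\eta_2} f_{v_1,v_2}^{\prime\prime}(\tau+s v_1-u v_2)\,\id u\,\id s,
\]
so that dividing by $\eta_1\eta_2$ gives the average of the continuous function $f_{v_1,v_2}^{\prime\prime}$ over a rectangle shrinking to $\{\tau\}$, hence the joint limit $-f_{v_1,v_2}^{\prime\prime}(\tau)$. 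Collecting the four limits, and using that $f$ is even (so $f_v^{\prime}(\tau)=-f_v^{\prime}(-\tau)$) to rewrite the first-order contributions with the signs and arguments appearing in the statement, gives exactly the claimed formula.

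It remains to show the limit is independent of the approximating sequences. For this I would introduce the ``$h$-potential'' $\Psi(\mu)(t):=\int_{\bR^d}h(t,s)\,\id\mu(s)$, which for a generalized dipole $\mu$ is a $\sC^1$ function of $t$ (one derivative is lost, and $h\in\sC^2$), so that $\ls\pi_1,\pi_2\rs_h=\lsd\pi_1,\Psi(\pi_2)\rsd$ for dipoles. Because the difference quotients in $t$ of the $\sC^2$ function $h$ converge uniformly on compact sets, one obtains $\Psi(\nu_2^{\eta_2})\to\Psi(\nu_2)$ in $\sC^1$ on compacts; then writing $\lsd\nu_1^{\eta_1},\Psi(\nu_2^{\eta_2})\rsd=\lsd\nu_1^{\eta_1},\Psi(\nu_2^{\eta_2})-\Psi(\nu_2)\rsd+\lsd\nu_1^{\eta_1}-\nu_1,\Psi(\nu_2)\rsd$, the second summand tends to $0$ since $\Psi(\nu_2)\in\sC^1$ and $\nu_1^{\eta_1}\to\nu_1$, and the first is bounded in absolute value, uniformly in $\eta_1$, by a constant times $\sup_K\bigl(|\Psi(\nu_2^{\eta_2})-\Psi(\nu_2)|+\|\nabla(\Psi(\nu_2^{\eta_2})-\Psi(\nu_2))\|\bigr)\to0$. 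Hence the limit equals $\lsd\nu_1,\Psi(\nu_2)\rsd$, which depends only on $\nu_1$ and $\nu_2$.

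I expect the main obstacle to be the \emph{joint} (as opposed to iterated) passage to the limit $\eta_1,\eta_2\to0$: a naive two-variable Taylor expansion of the second difference leaves remainders involving ratios $\eta_1/\eta_2$ that need not vanish, and it is the integral representation of the second difference, together with the uniform-on-compacts control of the $h$-potentials, that makes the argument clean.
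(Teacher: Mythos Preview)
Your proposal is correct and essentially more careful than the paper's own argument, but the overall route differs in emphasis. The paper uses the \emph{same} explicit approximating dipoles $\nu_i^{\eta_i}=a_i\delta_{t_i}-\tfrac{b_i}{\eta_i}(\delta_{t_i+\eta_i v_i}-\delta_{t_i})$, but then passes to the limit \emph{iteratively}: it first lets $\eta_1\to 0$ inside the inner integral (invoking dominated convergence) to obtain a $\sC^1$ function $g(s)=a_1 f(t_1-s)+b_1\lsd\delta_{t_1,v_1}^{\prime},f(\cdot-s)\rsd$, and then lets $\eta_2\to 0$ in $\int g\,\id\nu_2^{\eta_2}$. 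Uniqueness is dispatched in one line by observing that the resulting closed-form expression involves only $\rho,\nu_1,\nu_2$.

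Your approach instead expands the pairing into the four monomials $a_1a_2,a_1b_2,a_2b_1,b_1b_2$ and controls the \emph{joint} limit directly via integral representations of the first and second finite differences of $f$; the $h$-potential argument then upgrades this to uniqueness for arbitrary approximating dipole sequences. This buys you an honest joint limit (rather than an iterated one) and a more robust uniqueness statement; the paper's version is shorter but relies on the reader accepting that the iterated limit suffices and that ``the formula depends only on $\rho,\nu_1,\nu_2$'' settles independence of the approximants. One small point to watch in your uniqueness step: bounding $\lsd\nu_1^{\eta_1},\Psi(\nu_2^{\eta_2})-\Psi(\nu_2)\rsd$ uniformly in $\eta_1$ requires that the approximating dipoles be equibounded as order-$1$ distributions on a fixed compact set, which is immediate for the canonical difference-quotient sequences but should be stated if you intend the claim for \emph{arbitrary} dipole sequences converging to $\nu_1$.
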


\begin{proof}
See Appendix~\ref{proof22}.
\end{proof}

\begin{definition}
 Let $\nu_1 = a_1 \delta_{t_1} +b_1 \delta_{v_1,t_1}^{\prime}, \nu_2 = a_2 \delta_{t_2} +b_2 \delta_{v_2,t_2}^{\prime} $ be two generalized dipoles. With the previous Lemma,  we define 

\begin{equation}
\begin{split}
  \ls\nu_1,\nu_2 \rs_h &:= \lim_{\eta_1,\eta_2 \to 0} \ls\nu_1^{\eta_1},\nu_2^{\eta2}\rs_h \\
  \end{split}
\end{equation}
where    $\nu_1^{\eta_1} $ and $\nu_2^{\eta_2}$  are  two sequences  of dipoles  that converge to  $\nu_1$ and $\nu_2$   (in the distributional sense) for $\eta_1,\eta_2 \to 0$.
\end{definition}

\enewtext

We have the following properties that are immediate consequences of Lemma~\ref{def:scalar_dip}.   
\begin{lemma}\label{lem:kernel_dirac_properties}
 Let $h$ be a kernel meeting Assumption~\ref{assum:kernel_prop}.  We have the following properties for any $t \in \bR$:
 \begin{equation}
 \|\delta_t\|_h^2 = \rho(0) = 1
 \end{equation}
\begin{equation}
  \ls \delta_t, \delta_{t,v}^{\prime}\rs_h = -\rho'(0) = 0
 \end{equation}
\begin{equation}
  \|\delta_{t,v}^{\prime}\|_h^2  =|\rho''(0)|
 \end{equation}
\end{lemma}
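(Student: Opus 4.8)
The plan is to read off all three identities directly from Lemma~\ref{def:scalar_dip}, which already computes the scalar product of two generalized dipoles $\nu_1 = a_1\delta_{t_1} + b_1\delta_{v_1,t_1}^{\prime}$ and $\nu_2 = a_2\delta_{t_2} + b_2\delta_{v_2,t_2}^{\prime}$ as
\[
\ls \nu_1,\nu_2\rs_h = a_1 a_2 f(t_1-t_2) - a_2 b_1 f_{v_1}^{\prime}(t_1-t_2) - a_1 b_2 f_{v_2}^{\prime}(t_2-t_1) - b_1 b_2 f_{v_1,v_2}^{\prime\prime}(t_1-t_2),
\]
with $f(t) = \rho(\|t\|_2)$. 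Each of the three claimed identities is the specialization of this formula to a particular choice of the generalized dipoles.

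First, for $\|\delta_t\|_h^2$, I would take $\nu_1 = \nu_2 = \delta_t$, i.e.\ $a_1 = a_2 = 1$, $b_1 = b_2 = 0$, $t_1 = t_2 = t$. Only the first term survives, giving $f(0) = \rho(0) = 1$ by Assumption~\ref{assum:kernel_prop}. Second, for $\ls\delta_t,\delta_{t,v}^{\prime}\rs_h$, I would take $\nu_1 = \delta_t$ ($a_1 = 1$, $b_1 = 0$) and $\nu_2 = \delta_{t,v}^{\prime}$ ($a_2 = 0$, $b_2 = 1$, $v_2 = v$), so that only the term $-a_1 b_2 f_v^{\prime}(t_2 - t_1) = -f_v^{\prime}(0)$ remains; since $f(t) = \rho(\|t\|_2)$ and $\rho$ is $\sC^2$ with $\rho'(0) = 0$, a short chain-rule computation gives $\nabla f(0) = 0$, hence $f_v^{\prime}(0) = \ls v,\nabla f(0)\rs = 0$, so the product is $-\rho'(0) = 0$. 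Third, for $\|\delta_{t,v}^{\prime}\|_h^2$, I would take $\nu_1 = \nu_2 = \delta_{t,v}^{\prime}$ ($a_1 = a_2 = 0$, $b_1 = b_2 = 1$, $v_1 = v_2 = v$, $t_1 = t_2 = t$), leaving only $-b_1 b_2 f_{v,v}^{\prime\prime}(0) = -f_v^{\prime\prime}(0)$; computing the Hessian of $f(t) = \rho(\|t\|_2)$ at the origin and using $\rho'(0) = 0$ and $\rho''(0) < 0$ one finds $\nabla^2 f(0) = \rho''(0)\,\mathrm{Id}$, so $f_v^{\prime\prime}(0) = \ls v,\nabla^2 f(0) v\rs = \rho''(0)$ (recall $\|v\|_2 = 1$), whence $\|\delta_{t,v}^{\prime}\|_h^2 = -\rho''(0) = |\rho''(0)|$.

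The only non-routine point is the computation of $\nabla f(0)$ and $\nabla^2 f(0)$ for $f(t) = \rho(\|t\|_2)$, which is mildly delicate because $\|t\|_2$ is not differentiable at $0$; however, since $\rho'(0) = 0$ one can write $\rho(r) = \rho(0) + \tfrac12\rho''(0) r^2 + o(r^2)$ near $r = 0$, so $f(t) = \rho(0) + \tfrac12\rho''(0)\|t\|_2^2 + o(\|t\|_2^2)$, which is a genuine $\sC^2$ expression in a neighbourhood of the origin (the singular term has vanished) and immediately yields $\nabla f(0) = 0$ and $\nabla^2 f(0) = \rho''(0)\,\mathrm{Id}$. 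There is no real obstacle; the lemma is a direct corollary of Lemma~\ref{def:scalar_dip} together with the regularity and normalization assumptions on $\rho$ in Assumption~\ref{assum:kernel_prop}.
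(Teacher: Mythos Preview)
Your proposal is correct and follows essentially the same approach as the paper: both specialize the formula of Lemma~\ref{def:scalar_dip} with the same three choices of parameters and then evaluate $f_v'(0)$ and $f_v''(0)$ for $f(t)=\rho(\|t\|_2)$. If anything, your treatment of the differentiability of $f$ at the origin via the Taylor expansion $\rho(r)=\rho(0)+\tfrac12\rho''(0)r^2+o(r^2)$ is slightly more careful than the paper's, which computes $f_v'(0)$ directly as a limit and simply asserts $-f_v''(0)=|\rho''(0)|$.
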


\begin{proof}
See Appendix~\ref{proof22}.
\end{proof}

From \cite[Lemma 6.5]{Gribonval_2017}, we have the following Lemma:
\begin{lemma}\label{lem:pyth_dipole}
  Suppose for all two $\epsilon$-separated dipoles, $\ls \pi_1,\pi_2 \rs_h \leq \mu \|\pi_1\|_h  \|\pi_2\|_h$ (mutual coherence). Then for $k$, $\epsilon$-separated dipoles $\pi_1, ...\pi_k$ such that $\max_i \|\pi_i\|_h>0$, we have 
  \begin{equation}
   1 - (k-1)\mu \leq \frac{\|\sum_{i=1,k} \pi_i\|_h^2}{\sum_{i=1,k}\| \pi_i\|_h^2} \leq  1+ (k-1) \mu.
  \end{equation}
\end{lemma}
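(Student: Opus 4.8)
This is a classical "almost-orthogonality / Gershgorin-type" argument for Gram matrices. Let me set up the notation: write $\nu = \sum_{i=1}^k \pi_i$ and $S = \sum_{i=1}^k \|\pi_i\|_h^2$. Expanding $\|\nu\|_h^2$ via the bilinearity of $\ls\cdot,\cdot\rs_h$ gives
\[
\|\nu\|_h^2 = \sum_{i=1}^k \|\pi_i\|_h^2 + \sum_{i \neq j} \ls \pi_i,\pi_j\rs_h = S + \sum_{i\neq j}\ls\pi_i,\pi_j\rs_h.
\]
So the whole statement reduces to controlling the cross-terms: I need $\left|\sum_{i\neq j}\ls\pi_i,\pi_j\rs_h\right| \leq (k-1)\mu\, S$.

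**Key steps.** First I would bound each off-diagonal term by the mutual coherence hypothesis: since the $\pi_i$ are pairwise $\epsilon$-separated dipoles, $|\ls\pi_i,\pi_j\rs_h| \leq \mu\|\pi_i\|_h\|\pi_j\|_h$ for $i\neq j$. Then I would apply the AM–GM inequality $\|\pi_i\|_h\|\pi_j\|_h \leq \tfrac12(\|\pi_i\|_h^2 + \|\pi_j\|_h^2)$ to get
\[
\left|\sum_{i\neq j}\ls\pi_i,\pi_j\rs_h\right| \leq \mu\sum_{i\neq j}\|\pi_i\|_h\|\pi_j\|_h \leq \frac{\mu}{2}\sum_{i\neq j}\left(\|\pi_i\|_h^2 + \|\pi_j\|_h^2\right).
\]
Counting: each index $i$ appears in exactly $2(k-1)$ of the ordered pairs $(i,j)$ with $i\neq j$ (as first coordinate $k-1$ times, as second coordinate $k-1$ times), so the double sum equals $2(k-1)\sum_i \|\pi_i\|_h^2 = 2(k-1)S$, giving the bound $(k-1)\mu\, S$. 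Dividing $S + \sum_{i\neq j}\ls\pi_i,\pi_j\rs_h$ by $S$ (which is legitimate since $\max_i\|\pi_i\|_h > 0$ forces $S > 0$) and using $-(k-1)\mu\, S \leq \sum_{i\neq j}\ls\pi_i,\pi_j\rs_h \leq (k-1)\mu\, S$ yields
\[
1 - (k-1)\mu \leq \frac{\|\sum_{i=1}^k \pi_i\|_h^2}{\sum_{i=1}^k\|\pi_i\|_h^2} \leq 1 + (k-1)\mu,
\]
which is exactly the claim.

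**Main obstacle.** There is essentially no hard analytic step here — everything is elementary once the bilinear expansion and the coherence bound are in hand. The only point requiring a little care is the combinatorial bookkeeping (the factor $2(k-1)$ and not, say, $k-1$ or $k(k-1)$), and making sure one invokes the $\epsilon$-separation of the family to license the pairwise coherence bound on every off-diagonal entry. Since the lemma is quoted from \cite[Lemma 6.5]{Gribonval_2017}, I would in practice simply cite that reference; but the self-contained argument above is short enough to include in full if desired. One could also phrase it via Gershgorin's circle theorem applied to the Gram matrix normalized to unit diagonal, but the direct expansion is cleaner and avoids the spectral machinery.
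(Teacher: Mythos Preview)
Your argument is correct and is the standard one; the paper itself does not prove this lemma but simply quotes it from \cite[Lemma 6.5]{Gribonval_2017}, so there is nothing to compare against beyond noting that your expansion-plus-AM--GM proof is exactly the kind of argument that reference contains.

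One small technical point worth tightening: the mutual-coherence hypothesis as stated in the lemma is one-sided, $\ls\pi_i,\pi_j\rs_h \leq \mu\|\pi_i\|_h\|\pi_j\|_h$, whereas you immediately use the two-sided bound $|\ls\pi_i,\pi_j\rs_h|\leq \mu\|\pi_i\|_h\|\pi_j\|_h$. This is justified because the set of $\epsilon$-dipoles is closed under negation (if $\pi = a_1\delta_{t_1}-a_2\delta_{t_2}$ is a dipole then so is $-\pi = (-a_1)\delta_{t_1}-(-a_2)\delta_{t_2}$, with the same support), so applying the hypothesis to the pair $(-\pi_i,\pi_j)$ yields the lower bound. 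You should make this explicit in a single sentence; otherwise the proof is complete as written.
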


We can generalize the previous result to generalized dipoles.
\begin{lemma}\label{lem:dip2Generalizeddip}
 Let two $\epsilon$-separated \textbf{generalized} dipoles $\nu_1,\nu_2$. Suppose for all two $\epsilon$-separated dipoles $\pi_1,\pi_2$, $\ls \pi_1,\pi_2 \rs_h \leq \mu \|\pi_1\|_h  \|\pi_2\|_h$ (mutual coherence). Then we have:
\begin{equation}
 \ls\nu_1, \nu_2\rs_h  \leq  \mu \|\nu_1\|_h  \|\nu_2\|_h
\end{equation}
\end{lemma}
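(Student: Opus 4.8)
The plan is to reduce the statement about generalized dipoles to the already-proven mutual coherence bound for ordinary dipoles (the hypothesis), by a limiting argument that passes the inequality through the distributional limit. The key observation is that a generalized dipole $\nu = a\delta_t + b\delta_{t,v}^\prime$ is, by the very construction in Lemma~\ref{def:scalar_dip}, a distributional limit of a sequence of \emph{ordinary} dipoles $\nu^\eta$, and moreover the scalar product $\ls \cdot,\cdot\rs_h$ is continuous along these sequences (that is exactly what Lemma~\ref{def:scalar_dip} establishes, including that $\|\nu^\eta\|_h \to \|\nu\|_h$ by taking $\nu_1=\nu_2$).

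First I would fix $\epsilon$-separated generalized dipoles $\nu_1,\nu_2$ and pick approximating sequences of dipoles $\nu_1^{\eta_1}\to\nu_1$ and $\nu_2^{\eta_2}\to\nu_2$ in the distributional sense. The natural choice is the one used implicitly already: write $\delta_{t,v}^\prime$ as the limit of $-\tfrac{1}{\eta}(\delta_{t+\eta v}-\delta_t)$, so that $\nu_i^{\eta_i}$ is a genuine dipole (a difference of two point masses, after absorbing the $a\delta_t$ term appropriately — note $a\delta_t + b\delta_{t,v}^\prime$ is the limit of $(a+\tfrac{b}{\eta})\delta_t - \tfrac{b}{\eta}\delta_{t+\eta v}$, which is a dipole with the two supports within distance $\eta$). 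The crucial point is that for $\eta_1,\eta_2$ small enough the supports of $\nu_1^{\eta_1}$ and $\nu_2^{\eta_2}$ stay strictly $\epsilon$-separated: since $\nu_1,\nu_2$ are \emph{strictly} $\epsilon$-separated, there is $\delta_0>0$ slack in all four pairwise support distances, and perturbing each support by at most $\eta_i$ keeps the distances above $\epsilon$ once $\eta_1+\eta_2 < \delta_0$. For such $\eta_i$, the hypothesis applies to $\nu_1^{\eta_1},\nu_2^{\eta_2}$: $\ls \nu_1^{\eta_1},\nu_2^{\eta_2}\rs_h \leq \mu \|\nu_1^{\eta_1}\|_h \|\nu_2^{\eta_2}\|_h$.

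Then I would let $\eta_1,\eta_2\to 0$. By Lemma~\ref{def:scalar_dip} the left-hand side converges to $\ls\nu_1,\nu_2\rs_h$, and each norm on the right converges to $\|\nu_1\|_h$, resp. $\|\nu_2\|_h$ (apply Lemma~\ref{def:scalar_dip} with $\nu_1=\nu_2$, which also guarantees $\|\nu_i\|_h^2 = \lim \|\nu_i^{\eta_i}\|_h^2 \geq 0$ so the square roots behave well). Since a non-strict inequality is preserved under limits, we obtain $\ls\nu_1,\nu_2\rs_h \leq \mu\|\nu_1\|_h\|\nu_2\|_h$, which is the claim. Strictly speaking one should take the double limit; this is harmless because Lemma~\ref{def:scalar_dip} asserts the limit exists and is independent of the chosen sequences, so iterating the one-variable limits is legitimate.

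The only mildly delicate point — and the one I'd expect to be the main obstacle to a fully rigorous write-up — is the bookkeeping that the approximating dipoles remain strictly $\epsilon$-separated uniformly for small $\eta_i$, and that when $b_i=0$ (so $\nu_i$ is already an ordinary dipole or even a single Dirac, which counts as a dipole with the second amplitude zero) the argument degenerates gracefully rather than requiring a separate case. Both are routine: the separation slack argument is a triangle-inequality estimate, and the degenerate cases are covered by taking the trivial constant sequence $\nu_i^{\eta_i}=\nu_i$. No new analytic input beyond Lemma~\ref{def:scalar_dip} and the dipole mutual coherence hypothesis is needed.
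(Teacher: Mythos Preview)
Your proposal is correct and follows essentially the same route as the paper: approximate each generalized dipole by the explicit dipole sequence $(a+\tfrac{b}{\eta})\delta_t - \tfrac{b}{\eta}\delta_{t+\eta v}$, observe that $\epsilon$-separation persists for small $\eta$, apply the coherence hypothesis, and pass to the limit via Lemma~\ref{def:scalar_dip}. The only cosmetic difference is that the paper verifies $\|\nu^\eta\|_h \to \|\nu\|_h$ by a direct Taylor expansion of $\rho$ rather than by invoking Lemma~\ref{def:scalar_dip} with $\nu_1=\nu_2$; your shortcut is legitimate.
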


\begin{proof}
See Appendix~\ref{proof22}.
\end{proof}

A consequence of the previous result is the following Lemma:
\begin{lemma} \label{lem:mutual_Generalized_dipoles}
  Suppose for all two $\epsilon$-separated generalized dipoles, $\ls \nu_1,\nu_2 \rs_h \leq \mu \|\nu_1\|_h  \|\nu_2\|_h$ (mutual coherence). Then for $k$ $\epsilon$-separated generalized dipoles $\nu_1, ...\nu_k$ such that
  $\max_i \|\nu_i\|_h>0$, we have 
  \begin{equation}
   1 - (k-1)\mu \leq \frac{\|\sum_{i=1,k} \nu_i\|_h^2}{\sum_{i=1,k}\| \nu_i\|_h^2} \leq  1+ (k-1) \mu.
  \end{equation}
\end{lemma}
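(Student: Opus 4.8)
The plan is to follow, essentially verbatim, the proof of Lemma~\ref{lem:pyth_dipole} (i.e.\ \cite[Lemma 6.5]{Gribonval_2017}), the point being that the only ingredient that proof uses is a mutual-coherence inequality between the $\epsilon$-separated summands, and this is now available for generalized dipoles by hypothesis. The scalar product $\ls\cdot,\cdot\rs_h$ extends bilinearly from pairs of generalized dipoles (as provided by Lemma~\ref{def:scalar_dip} and the Definition following it) to finite sums of generalized dipoles, so expanding the square is legitimate:
\[
\Big\|\sum_{i=1}^k \nu_i\Big\|_h^2 = \sum_{i=1}^k \|\nu_i\|_h^2 + \sum_{i\neq j} \ls\nu_i,\nu_j\rs_h .
\]

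First I would upgrade the one-sided coherence hypothesis to a two-sided bound. If $\nu$ is a generalized dipole, then so is $-\nu$: a dipole $a_1\delta_{t_1}-a_2\delta_{t_2}$ becomes the dipole $a_2\delta_{t_2}-a_1\delta_{t_1}$, and an order-one generalized dipole $a_1\delta_{t}+a_2\delta_{t,v}^{\prime}$ becomes $(-a_1)\delta_{t}+(-a_2)\delta_{t,v}^{\prime}$; in both cases the support — hence every $\epsilon$-separation relation — and the value of $\|\cdot\|_h$ are unchanged. Applying the coherence hypothesis to the pairs $(\nu_i,\nu_j)$ and $(\nu_i,-\nu_j)$ for $i\neq j$ therefore yields $|\ls\nu_i,\nu_j\rs_h| \leq \mu \|\nu_i\|_h\|\nu_j\|_h$.

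Then the routine estimate: by the arithmetic--geometric mean inequality, $\|\nu_i\|_h\|\nu_j\|_h \leq \tfrac12\big(\|\nu_i\|_h^2+\|\nu_j\|_h^2\big)$, and summing over the $k(k-1)$ ordered pairs $i\neq j$ gives $\sum_{i\neq j}\|\nu_i\|_h\|\nu_j\|_h \leq (k-1)\sum_{i=1}^k\|\nu_i\|_h^2$. Combining with the expansion above,
\[
\Big| \Big\|\sum_{i=1}^k \nu_i\Big\|_h^2 - \sum_{i=1}^k\|\nu_i\|_h^2 \Big| \leq (k-1)\mu \sum_{i=1}^k\|\nu_i\|_h^2 ,
\]
and since $\max_i\|\nu_i\|_h>0$ forces $\sum_{i=1}^k\|\nu_i\|_h^2>0$, dividing through gives the claimed two-sided inequality.

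I do not expect a serious obstacle; the two points that deserve a word of justification are (i) that the extension of $\ls\cdot,\cdot\rs_h$ to (sums of) generalized dipoles is genuinely bilinear, which is exactly what is set up around Lemma~\ref{def:scalar_dip}, and (ii) the negation step used to pass from the one-sided coherence assumption to the absolute-value bound. Everything else is the same bookkeeping as in the proof of Lemma~\ref{lem:pyth_dipole}.
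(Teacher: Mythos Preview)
Your proposal is correct and is exactly the approach the paper takes: its proof consists of the single sentence ``Using Lemma~\ref{lem:dip2Generalizeddip}, and the same proof as in Lemma~\ref{lem:pyth_dipole}, we get the result.'' You have simply written out that ``same proof'' (expand the square, use the two-sided coherence bound obtained by negation, and the AM--GM estimate on the cross terms), together with the justifications for bilinearity and for the negation step that the paper leaves implicit.
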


\begin{proof}
See Appendix~\ref{proof22}.
\end{proof}

We are now able to define the Restricted Isometry Property (RIP). The secant set of the model set $\Sigma$ is $\Sigma - \Sigma := \{ x-y : x \in \Sigma, y \in \Sigma\}$.
\begin{definition}[RIP]
$A$ has the RIP on $\Sigma-\Sigma$ with respect to $\|\cdot\|$  with constant $\RIPcst$ if for all $x \in \Sigma -\Sigma$:
\begin{equation}\label{eq:DefRIP}
(1-\RIPcst)\|x\|^2 \leq \|A x\|_2^2 \leq (1+\RIPcst)\|Ax\|^2.
\end{equation}
\end{definition}

In the following we will suppose that $A$ has RIP $\RIPcst$ on $\Sigma_{k,\epsilon}-\Sigma_{k,\epsilon}$ with respect to $\|\cdot\|_h$, i.e. for $\sum_{r=1,k}a_r\delta_{t_r}-\sum_{r=1,k} b_r \delta_{s_r} \in \Sigma_{k,\epsilon} -\Sigma_{k,\epsilon}$, we have 
\begin{eqnarray}
  (1-\RIPcst)\left\|\sum_{r=1,k} (a_r\delta_{t_r}-b_r\delta_{s_r} )\right\|_h^2 
  & \leq & \left\|A\sum_{r=1,k} (a_r\delta_{t_r}- b_r\delta_{s_r})\right\|_2^2   
  \\
   & \leq & (1+\RIPcst) \left\|\sum_{r=1,k}a_r\delta_{t_r}-b_r\delta_{s_r}\right\|_h^2.  \nonumber
\end{eqnarray}

From~\cite{Gribonval_2017}, with a Gaussian kernel $h$   it is possible to build a random $A$ with RIP constant $\RIPcst$. With this choice of $A$, the ideal minimization~\eqref{eq:minimization} yields a stable and robust estimation of $x_0$ with respect to the $\|\cdot\|_h$. 

In \cite{Candes_2013}, stable recovery for $\epsilon$-separated Diracs  is guaranteed on the Torus with the metric $\|K_{hi}*\cdot\|_{L^1}$ where $K_{hi}*$ is the convolution with a Fejér kernel. From \cite[IV.A]{Bourrier_2014}, this guarantees a lower RIP with respect to this metric. \modif{Indeed, the $L^1$-norm of trigonometric polynomials (on $[0,1]$) is lower bounded by their $L^2$-norm, i.e. there is an absolute constant $D > 0$ depending on $K_{hi}$ such that $\|K_{hi}*\cdot\|_{L^1} \geq D\|K_{hi}*\cdot\|_{L^2}$ (see \cite[p. 230]{Timan_1963}).  Applying Lemma~\ref{lem:rkhs_convol} from the Annex on the Fejér kernel shows that there exists a kernel metric $\|\cdot\|_{h_K}$  that lower bounds $\|K_{hi}*\cdot\|_{L^1}$ for sums of Diracs.} This guarantees the existence of a lower RIP with respect to a kernel metric for the conventional deterministic spike super-resolution setting.

The RIP on $\Sigma_{k,\epsilon}-\Sigma_{k,\epsilon}$  implies a RIP on $\epsilon$-separated generalized dipoles.

\begin{lemma}[RIP on generalized dipoles]\label{lem:RIP_derivative}
 Suppose $A$ has the RIP on  $\Sigma_{k,\epsilon}-\Sigma_{k,\epsilon}$ with constant $\RIPcst$. Let $(\nu_r)_{r=1,k}$, $k$ $\epsilon$-separated dipoles supported in $\text{rint} \sB_2(R)$, we have 
\begin{equation}
\begin{split}
 (1-\RIPcst)\left\| \sum_{r=1,k}\nu_r\right\|_h^2 \leq \left\|A(\sum_{r=1,k} \nu_r)\right\|_2^2   \leq (1+\RIPcst)\left\|\sum_{r=1,k} \nu_r \right\|_h^2.  \\
\end{split}
 \end{equation}
\end{lemma}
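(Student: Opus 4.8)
The plan is to deduce the statement from the assumed RIP on $\Sigma_{k,\epsilon}-\Sigma_{k,\epsilon}$ by a density (limiting) argument, treating generalized dipoles as limits of ordinary dipoles. First, note that if the $\nu_r$ are ordinary $\epsilon$-dipoles the conclusion is immediate: writing $\nu_r = a_r\delta_{t_r}-b_r\delta_{s_r}$, the strict $\epsilon$-separation of the $\nu_r$ gives $\|t_r-t_l\|_2>\epsilon$ and $\|s_r-s_l\|_2>\epsilon$ for $r\neq l$, so $\sum_r a_r\delta_{t_r}\in\Sigma_{k,\epsilon}$ and $\sum_r b_r\delta_{s_r}\in\Sigma_{k,\epsilon}$; hence $\sum_r\nu_r\in\Sigma_{k,\epsilon}-\Sigma_{k,\epsilon}$ and \eqref{eq:DefRIP} applies verbatim. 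The real content is the case where some $\nu_r$ are order-$1$ generalized dipoles $\nu_r=a_r\delta_{t_r}+b_r\delta_{t_r,v_r}^{\prime}$, and for these I would use the approximating $\epsilon$-dipoles $\nu_r^{\eta}:=\left(a_r+\tfrac{b_r}{\eta}\right)\delta_{t_r}-\tfrac{b_r}{\eta}\,\delta_{t_r+\eta v_r}$, which converge to $\nu_r$ in the distributional sense as $\eta\to 0^+$ (for the ordinary-dipole components just set $\nu_r^{\eta}=\nu_r$).

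Next I would carry out the bookkeeping that keeps the approximants admissible. Since the supports of the $\nu_r$ consist of finitely many points lying in the open set $\text{rint}\,\sB_2(R)$ and are pairwise \emph{strictly} $\epsilon$-separated, there is a uniform gap, hence an $\eta_0>0$ such that for every $0<\eta<\eta_0$ the perturbed supports still lie in $\text{rint}\,\sB_2(R)$ and the $\nu_r^{\eta}$ are still pairwise $\epsilon$-separated $\epsilon$-dipoles. Consequently $\sum_r\nu_r^{\eta}\in\Sigma_{k,\epsilon}-\Sigma_{k,\epsilon}$, and the RIP assumption yields, for all such $\eta$,
\[
(1-\RIPcst)\Big\|\sum_{r=1,k}\nu_r^{\eta}\Big\|_h^2 \;\leq\; \Big\|A\sum_{r=1,k}\nu_r^{\eta}\Big\|_2^2 \;\leq\; (1+\RIPcst)\Big\|\sum_{r=1,k}\nu_r^{\eta}\Big\|_h^2 .
\]

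It then remains to let $\eta\to 0^+$ on both sides. For the kernel term I would expand $\big\|\sum_r\nu_r^{\eta}\big\|_h^2=\sum_{r,s}\langle\nu_r^{\eta},\nu_s^{\eta}\rangle_h$ and apply Lemma~\ref{def:scalar_dip} (and the Definition following it) termwise, so that $\big\|\sum_r\nu_r^{\eta}\big\|_h^2\to\big\|\sum_r\nu_r\big\|_h^2$. For the measurement term, since the $\alpha_l$ are $\sC^2$ on $\sB_2(R)$ and the supports of the $\nu_r^\eta$ remain in a fixed compact subset of $\text{rint}\,\sB_2(R)$, $A$ is well defined on these order-$1$ distributions and $(A\nu_r^{\eta})_l=\lsd\nu_r^{\eta},\alpha_l\rsd\to\lsd\nu_r,\alpha_l\rsd=(A\nu_r)_l$ by the very definition of $\delta_{t_r,v_r}^{\prime}$; as $\bC^m$ is finite dimensional this gives $\big\|A\sum_r\nu_r^{\eta}\big\|_2^2\to\big\|A\sum_r\nu_r\big\|_2^2$. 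Passing to the limit in the displayed inequality proves the lemma. The only delicate point — and the step I expect to require the most care — is precisely the uniformity used to keep every approximating dipole simultaneously inside $\text{rint}\,\sB_2(R)$ and pairwise $\epsilon$-separated, which is where the strictness of the separation assumption is essential; the two limit passages are then routine consequences of the results already established in Section~\ref{sec:kernel_dipole}.
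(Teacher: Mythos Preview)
Your proposal is correct and follows essentially the same approach as the paper: approximate each generalized dipole by the ordinary dipole $\nu_r^\eta=(a_r+\tfrac{b_r}{\eta})\delta_{t_r}-\tfrac{b_r}{\eta}\delta_{t_r+\eta v_r}$, apply the RIP on $\Sigma_{k,\epsilon}-\Sigma_{k,\epsilon}$ for small $\eta$, and pass to the limit separately in $\|\cdot\|_h^2$ (via Lemma~\ref{def:scalar_dip}) and in $\|A\cdot\|_2^2$ (via the $\sC^2$ regularity of the $\alpha_l$). Your bookkeeping is in fact a bit more careful than the paper's---you make explicit the uniform $\eta_0$ needed to keep all approximants simultaneously $\epsilon$-separated and inside $\text{rint}\,\sB_2(R)$, and you allow distinct directions $v_r$---but the argument is the same.
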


\begin{proof}
See Appendix~\ref{proof22}.
\end{proof}

Finally, we will need a last estimate. To state it, we need first to introduce the following definition:
\begin{definition}
Let $A$ such that the $\alpha_l$ are in $\sC^2(\sB_2(R))$. We define
\begin{equation} \label{defDAR}
D_{A,R} :=\sup_{1\leq l \leq m ; v\in \bR^d,w\in\bR^d: \|v\|_2= \|w\|_2=1;  t \in \sB_2(R)} | \alpha_{l,v,w}^{\prime \prime}(t)| .
\end{equation}
The constant $D_{A,R}$ is finite, and it is thus a  bound of the directional second derivatives of the $\alpha_l$ over $\sB_2(R)$.

\end{definition}

\begin{lemma} \label{lem:upper_RIP_second_deriv}
Let $A$ such that the $\alpha_l$ are in $\sC^2(\sB_2(R))$. Then, for any $t \in \sB_2(R)$, with directions $v_1,v_2$, we have 
 \begin{equation}
   \|A\delta_{t,v_1,v_2}^{\prime \prime}\|_2 \leq   \sqrt{m}D_{A,R}.
 \end{equation}
 where $D_{A,R}$ is defined in Equation~\eqref{defDAR}.
\end{lemma}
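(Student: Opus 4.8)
The plan is to compute $\|A\delta_{t,v_1,v_2}^{\prime\prime}\|_2^2$ directly from the definition of $A$ on distributions of order $2$. By definition \eqref{eq:distribution} extended to such distributions, $(A\delta_{t,v_1,v_2}^{\prime\prime})_l = \lsd \delta_{t,v_1,v_2}^{\prime\prime}, \alpha_l \rsd = (\alpha_l)_{v_1,v_2}^{\prime\prime}(t)$, i.e. the $l$-th coordinate of the vector $A\delta_{t,v_1,v_2}^{\prime\prime} \in \bC^m$ is exactly the second-order directional derivative of $\alpha_l$ at $t$ in directions $v_1,v_2$. This is the only nontrivial conceptual point: that the operator $A$ applied to $\delta_{t,v_1,v_2}^{\prime\prime}$ (which is a well-defined order-$2$ distribution with support $\{t\}\subset \text{rint}\,\sB_2(R)$) produces the coordinatewise second directional derivatives of the $\alpha_l$'s, which is legitimate because the $\alpha_l$ are in $\sC^2(\sB_2(R))$.

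Next I would bound the Euclidean norm. We have
\begin{equation}
\|A\delta_{t,v_1,v_2}^{\prime\prime}\|_2^2 = \sum_{l=1}^m |(\alpha_l)_{v_1,v_2}^{\prime\prime}(t)|^2 \leq \sum_{l=1}^m D_{A,R}^2 = m\, D_{A,R}^2,
\end{equation}
where each summand is controlled by $D_{A,R}^2$ since $t \in \sB_2(R)$, $\|v_1\|_2 = \|v_2\|_2 = 1$, and $D_{A,R}$ is defined in \eqref{defDAR} as the supremum of $|\alpha_{l,v,w}^{\prime\prime}(t)|$ over exactly this range. Taking square roots gives $\|A\delta_{t,v_1,v_2}^{\prime\prime}\|_2 \leq \sqrt{m}\, D_{A,R}$, which is the claim. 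Finiteness of $D_{A,R}$ (asserted in the definition) follows from continuity of the second derivatives of $\alpha_l$ on the compact set $\sB_2(R)$ together with compactness of the unit sphere in the direction variables, but this is already granted by the preceding definition.

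There is no real obstacle here — the statement is essentially a restatement of the definition of $D_{A,R}$ combined with the identity $(A\delta_{t,v_1,v_2}^{\prime\prime})_l = (\alpha_l)_{v_1,v_2}^{\prime\prime}(t)$. The only point requiring a line of justification is why $A$ may be applied to $\delta_{t,v_1,v_2}^{\prime\prime}$ at all and why it acts coordinatewise as the duality pairing with $\alpha_l$; this was already set up in the paragraph following \eqref{eq:distribution}, where it is noted that since the $\alpha_l \in \sC^2(\sB_2(R))$ one can apply $A$ to distributions of order $1$ and $2$ supported in $\text{rint}\,\sB_2(R)$. One could alternatively phrase the argument via the limiting sequence $\delta_{t,v_1,v_2}^{\prime\prime} = \lim_\eta \nu_\eta$ of finite combinations of Diracs and pass to the limit using $\sC^2$-regularity, but the direct duality-pairing computation is cleaner and suffices.
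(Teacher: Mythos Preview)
Your proof is correct and follows essentially the same approach as the paper: compute $(A\delta_{t,v_1,v_2}^{\prime\prime})_l = \alpha_{l,v_1,v_2}^{\prime\prime}(t)$ via the duality pairing, sum the squares, and bound each term by $D_{A,R}^2$. The paper's argument is identical up to presentation, and your additional remarks on why $A$ applies to order-$2$ distributions are consistent with the setup following~\eqref{eq:distribution}.
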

\begin{proof}
See Appendix~\ref{proof22}.
\end{proof}

\subsection{Control of the conditioning of the Hessian with the restricted isometry property}\label{sec:control_Hessian}

We can now give  a lower (resp. upper) bound for the highest (resp. lowest) eigenvalues of the Hessian matrix $H$ of $g$ (computed in Proposition~\ref{prop:Hessian}).

\begin{theorem}[Control of the Hessian]\label{th:min_max_eigen_control_H}
 Let $\theta = (a_1,..,a_k, t_1,..t_k) \in \Theta_{k,\epsilon}$ with $t \in \text{rint}\sB_2(R)$ and  $\theta^* \in \Theta_{k,\epsilon}$ a minimizer of~\eqref{eq:minimization2}. Suppose $h$ follows Assumption~\ref{assum:kernel_prop}. Let $H$ the Hessian of $g$ at $\theta$. Suppose $A$ has RIP $\RIPcst$ on $\Sigma_{k,\epsilon}-\Sigma_{k,\epsilon}$. We have 
 \begin{equation} \label{largeeigenvalue}
 \sup_{\|u\|_2 =1} u^THu \leq  2(1+\RIPcst)(1+(k-1)\mu)\max(1,(a_r^2|\rho''(0)|)_{r=1,l}) + \xi; \\
\end{equation}
\begin{equation}\label{smalleigenvalue}
 \inf_{\|u\|_2 =1} u^THu \geq 2(1-\RIPcst)(1-(k-1)\mu)\min(1,(a_r^2|\rho''(0)|)_{r=1,l})-\xi\\
\end{equation}
where $\xi = 2(d+1)\max( \max_{r}|a_r|  \sqrt{m}D_{A,R} ,\sqrt{1+\RIPcst}\sqrt{|\rho''(0)|} ) (\|A\phi(\theta)-A\phi(\theta^*)\|_2+ \|e\|_2)$, the constant $D_{A,R}$ is defined in \eqref{defDAR}
 and $e$ is the finite energy measurement noise.
\end{theorem}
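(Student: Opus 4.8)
The plan is to bound the extreme eigenvalues of $H$ via the Rayleigh quotient $u^T H u$ for a unit vector $u = (u_a, u_t)$, using the decomposition $H = G + F$ from Proposition~\ref{prop:Hessian}. The first step is to handle the "main" part $G$: I would recognize that $u^T G u$ is exactly $2\|A\nu_u\|_2^2$ where $\nu_u = \sum_{r} \bigl( u_{a,r}\delta_{t_r} - a_r \sum_j u_{t,r,j}\,\delta_{t_r,e_j}^{\prime}\bigr)$ is a sum of $k$ $\epsilon$-separated generalized dipoles (each supported near $t_r$), writing out the three blocks $G_1, G_2, G_{12}$ and matching them termwise to the expansion of $\|A\nu_u\|_2^2$. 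Then I would apply Lemma~\ref{lem:RIP_derivative} to get $2(1-\RIPcst)\|\nu_u\|_h^2 \le u^T G u \le 2(1+\RIPcst)\|\nu_u\|_h^2$, then Lemma~\ref{lem:mutual_Generalized_dipoles} to get $(1-(k-1)\mu)\sum_r \|\nu_{u,r}\|_h^2 \le \|\nu_u\|_h^2 \le (1+(k-1)\mu)\sum_r\|\nu_{u,r}\|_h^2$, and finally Lemma~\ref{lem:kernel_dirac_properties} (orthogonality of $\delta_t$ and $\delta_{t,v}^{\prime}$, with norms $1$ and $|\rho''(0)|$) to compute $\|\nu_{u,r}\|_h^2 = u_{a,r}^2 + a_r^2\|\sum_j u_{t,r,j} e_j\|^2 |\rho''(0)|$, so that $\sum_r\|\nu_{u,r}\|_h^2$ lies between $\min(1,(a_r^2|\rho''(0)|)_r)$ and $\max(1,(a_r^2|\rho''(0)|)_r)$ times $\|u\|_2^2 = 1$. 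Chaining these gives the $G$-contribution to \eqref{largeeigenvalue} and \eqref{smalleigenvalue}.

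The second step is to bound $|u^T F u|$ by $\xi$. Here $F$ involves only the residual $A\phi(\theta) - y$ paired against second-derivative Diracs $\delta_{t_r,j_1,j_2}^{\prime\prime}$ (in the $F_2$ block) and first-derivative Diracs $\delta_{t_r,j}^{\prime}$ (in the $F_{12}$ block), with $\indic(r=s)$ making $F$ block-diagonal in $r$. I would bound each entry by Cauchy–Schwarz in $\bC^m$: for the $F_2$ entries, $|\re\ls A\delta_{t_r,j_1,j_2}^{\prime\prime}, A\phi(\theta)-y\rs| \le \|A\delta_{t_r,j_1,j_2}^{\prime\prime}\|_2 \|A\phi(\theta)-y\|_2 \le \sqrt{m}D_{A,R}\,\|A\phi(\theta)-y\|_2$ using Lemma~\ref{lem:upper_RIP_second_deriv}; for the $F_{12}$ entries, $\|A\delta_{t_r,j}^{\prime}\|_2 \le \sqrt{1+\RIPcst}\,\|\delta_{t_r,j}^{\prime}\|_h = \sqrt{1+\RIPcst}\sqrt{|\rho''(0)|}$ by the (single generalized dipole) RIP of Lemma~\ref{lem:RIP_derivative} together with Lemma~\ref{lem:kernel_dirac_properties}. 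Then $\|A\phi(\theta)-y\|_2 = \|A\phi(\theta) - A\phi(\theta^*) - e\|_2 \le \|A\phi(\theta)-A\phi(\theta^*)\|_2 + \|e\|_2$ by the triangle inequality (using $y = A\phi(\theta^*) + e$, i.e. that $\theta^*$ is the minimizer — actually $y = Ax_0 + e$, and $x_0 = \phi(\theta^*)$ up to the recovery bound; I should be careful and instead just split $y$ against $A\phi(\theta^*)$ plus the gap). Finally, summing the absolute values of the entries of $u^T F u$: $F$ has a block structure with at most $(d+1)^2$ nonzero entries per $r$-block coupling the $d$ location coordinates and $1$ amplitude, and a crude $\ell^1$–$\ell^\infty$ / Gershgorin-type bound gives $|u^T F u| \le 2(d+1)\max(\max_r|a_r|\sqrt{m}D_{A,R}, \sqrt{1+\RIPcst}\sqrt{|\rho''(0)|})\,(\|A\phi(\theta)-A\phi(\theta^*)\|_2 + \|e\|_2) = \xi$; the factor $a_r$ in $F_2$ and its absence in $F_{12}$ is why the $\max$ over the two terms appears. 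Combining $u^T H u = u^T G u + u^T F u$ with $|u^T F u| \le \xi$ yields both displayed inequalities.

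The main obstacle, I expect, is the bookkeeping in the second step: getting the combinatorial constant exactly $2(d+1)$ (rather than something like $2(d+1)^2$ or $2\sqrt{d+1}$) requires organizing $F$ as a block-diagonal matrix over $r$ with each block acting on the $(d+1)$-dimensional space of $(u_{a,r}, u_{t,r,\cdot})$, bounding each block's operator norm by its largest row-sum, and then summing $\sum_r |u_r|^2$-weighted contributions against $\|u\|_2^2 = 1$ — the subtlety is that the residual-dependent scalar is the same across all blocks so one does not lose a factor of $k$. A secondary point needing care is that all the generalized dipoles $\delta_{t_r,\cdot}^{\prime}$, $\delta_{t_r,\cdot,\cdot}^{\prime\prime}$ are supported in $\text{rint}\,\sB_2(R)$ (guaranteed by the hypothesis $t\in\text{rint}\,\sB_2(R)$), so that $A$ applies to them and Lemmas~\ref{lem:RIP_derivative} and~\ref{lem:upper_RIP_second_deriv} are legitimately invoked; and that the direction vectors $\sum_j u_{t,r,j}e_j$ need to be normalized before applying those lemmas, with the norm absorbed correctly into the $\|\nu_{u,r}\|_h^2$ computation. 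The algebra of identifying $u^T G u$ with $2\|A\nu_u\|_2^2$ is routine once the sign conventions in Proposition~\ref{prop:gradient} and the definition of $\delta_{t,v}^{\prime}$ (with its minus sign) are tracked carefully.
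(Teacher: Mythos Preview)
Your proposal is correct and follows essentially the same route as the paper: the decomposition $H=G+F$, the identification $u^TGu = 2\|A\nu_u\|_2^2$ with $\nu_u$ a sum of $\epsilon$-separated generalized dipoles handled via Lemmas~\ref{lem:RIP_derivative}, \ref{lem:mutual_Generalized_dipoles}, \ref{lem:kernel_dirac_properties}, and the entrywise Cauchy--Schwarz bounds on $F$ combined with a Gershgorin row-sum argument to extract the $(d+1)$ factor, finished off by $u^THu = u^TGu + u^TFu$ (equivalently, Weyl's inequality). The one point you flag about $\|A\phi(\theta)-y\|_2$ is resolved exactly as you suspect: split against $A\phi(\theta^*)$ and use that $\theta^*$ minimizes $g$ over $\Theta_{k,\epsilon}\ni\theta_0$, so $\|A\phi(\theta^*)-y\|_2 \le \|A\phi(\theta_0)-y\|_2 = \|e\|_2$.
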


\begin{proof}
See Appendix~\ref{proof23}.
\end{proof}

\begin{remark}
 Notice that, in the noiseless case, \eqref{smalleigenvalue} ensures in particular that $g$ has a positive Hessian matrix in  $\theta^* $. Moreover, if $\min_r |a_r| >0$, there exists a neighbourhood of $\theta^* $, in which $g$ remains convex. We will give an explicit size for this neighbourhood in the next section. 
 Notice also that \eqref{largeeigenvalue} gives an upper bound on the Lipschitz constant of the gradient of $g$. This implies the existence of a basin of  attraction (see Definition~\ref{defbassin}) with a uniform bound for the step size.
 \end{remark}

\begin{remark}
 With the method to choose $A$ from  \cite[Lemma 6.5]{Gribonval_2017}, for any $\RIPcst$ and $m \gtrsim k^2d \text{polylog}(k,d)/\RIPcst^2$, we can find $A$ that has RIP with high probability with a kernel $h_0$ having the right properties.
 \end{remark}

We can  control the conditioning of the Hessian matrix $\kappa(H)$ at a global minimum as the term $\|A\phi(\theta)-A\phi(\theta^*)\|_2$ vanishes in the control from Theorem~\ref{th:min_max_eigen_control_H}. Particularly, in the noiseless case we have the following Corollary.  The lower bound is useful to confirm the dependency on the ratio of amplitudes when it converges to $+\infty$. For this next result, we  make the additional assumption that $\min_r |a_r| > 0$. In practice, this amounts to assuming that when estimating the Diracs, we do not over-estimate their number (which will often be the case, in particular in the presence of noise). When the number of Diracs is overestimated, the minimizers of~\eqref{eq:minimization2} are points that are not isolated, the notion of basin of attraction would have to be generalized to a basin of attraction of a set of minimizers (when $a_r = 0$, $g(\theta)$ does not depend on $t_r$), which is out of the scope of this article for clarity purpose.

\begin{corollary}\label{cor:control_Hessian}
 Let $x_0 = \sum_{r=1,k} a_r \delta_{t_r} \in \Sigma_{k,\epsilon} = \phi(\theta_0)$ and $e=0$. Suppose $h$ follows Assumption~\ref{assum:kernel_prop}. Let $H$ the Hessian of $g$ at $\theta_0$. Suppose $A$ has RIP $\RIPcst$ on $\Sigma_{k,\epsilon}-\Sigma_{k,\epsilon}$, and  that $\min_r |a_r| > 0$.
 We have 
\begin{equation}
\begin{split}
  \frac{(1-\RIPcst)\max(1,(a_r^2|\rho''(0)|)_{r=1,l})}{(1+\RIPcst)\min(1,(a_r^2|\rho''(0)|)_{r=1,l})}  &  \leq \kappa(H) 
  \\ & \leq \frac{(1+\RIPcst)(1+(k-1)\mu)\max(1,(a_r^2|\rho''(0)|)_{r=1,l})}{(1-\RIPcst)(1-(k-1)\mu)\min(1,(a_r^2|\rho''(0)|)_{r=1,l})}.
  \end{split}
\end{equation}
\end{corollary}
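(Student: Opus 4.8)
The plan is to derive Corollary~\ref{cor:control_Hessian} as a direct specialization of Theorem~\ref{th:min_max_eigen_control_H} to the case $\theta = \theta_0$, $e = 0$. First I would observe that under the RIP assumption and the stable recovery guarantee~\eqref{eq:perf_bound} (equivalently, since $x_0 \in \Sigma_{k,\epsilon}$ is itself a global minimizer of~\eqref{eq:minimization2} when $e=0$), we may take $\theta^* = \theta_0$ in the statement of Theorem~\ref{th:min_max_eigen_control_H}. Then the term $\|A\phi(\theta)-A\phi(\theta^*)\|_2 = \|A\phi(\theta_0) - A\phi(\theta_0)\|_2 = 0$, and since $e = 0$ we also have $\|e\|_2 = 0$, so the error term $\xi$ defined in Theorem~\ref{th:min_max_eigen_control_H} vanishes entirely: $\xi = 0$.

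With $\xi = 0$, the bounds~\eqref{largeeigenvalue} and~\eqref{smalleigenvalue} become
\begin{equation}
\inf_{\|u\|_2=1} u^T H u \geq 2(1-\RIPcst)(1-(k-1)\mu)\min(1,(a_r^2|\rho''(0)|)_{r=1,l}) > 0,
\end{equation}
\begin{equation}
\sup_{\|u\|_2=1} u^T H u \leq 2(1+\RIPcst)(1+(k-1)\mu)\max(1,(a_r^2|\rho''(0)|)_{r=1,l}).
\end{equation}
Positivity of the lower bound follows because $\min_r |a_r| > 0$ (so $\min(1,(a_r^2|\rho''(0)|)_r) > 0$, using $\rho''(0) \neq 0$ from Assumption~\ref{assum:kernel_prop}), $\RIPcst < 1$, and $(k-1)\mu < 1$ (this last inequality is needed for the RIP-on-dipoles machinery of Lemma~\ref{lem:mutual_Generalized_dipoles} to give a nontrivial lower bound; it is implicitly in force throughout, e.g. via the Gaussian kernel example where $\mu_{h_0} = \tfrac{3}{4(k-1)} < 1$). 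Hence $H$ is symmetric positive definite, its largest eigenvalue is $\lambda_{\max}(H) = \sup_{\|u\|_2=1} u^T H u$ and its smallest is $\lambda_{\min}(H) = \inf_{\|u\|_2=1} u^T H u$, and the condition number is $\kappa(H) = \lambda_{\max}(H)/\lambda_{\min}(H)$.

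The upper bound on $\kappa(H)$ is then immediate: divide the upper bound on $\lambda_{\max}(H)$ by the lower bound on $\lambda_{\min}(H)$, and the factors of $2$ cancel, yielding
\begin{equation}
\kappa(H) \leq \frac{(1+\RIPcst)(1+(k-1)\mu)\max(1,(a_r^2|\rho''(0)|)_{r=1,l})}{(1-\RIPcst)(1-(k-1)\mu)\min(1,(a_r^2|\rho''(0)|)_{r=1,l})}.
\end{equation}
For the lower bound on $\kappa(H)$ I would not use the crude two-sided eigenvalue bounds of Theorem~\ref{th:min_max_eigen_control_H} directly (which would give a factor $(1-\RIPcst)^2$ in the denominator), but rather extract sharper one-sided estimates from its proof using well-chosen test vectors $u$: a unit vector supported on a single amplitude coordinate $a_r$ gives $u^T H u = H_{1,r,r} = 2\re\ls A\delta_{t_r}, A\delta_{t_r}\rs \geq 2(1-\RIPcst)\|\delta_{t_r}\|_h^2 = 2(1-\RIPcst)$ by the RIP on the (degenerate) dipole $\delta_{t_r}$ together with Lemma~\ref{lem:kernel_dirac_properties}; choosing instead the index $r$ maximizing $a_r^2|\rho''(0)|$ and a unit vector along a location coordinate $t_{r,j}$ gives, since $F = 0$ at $\theta_0$ in the noiseless recovery case, $u^T H u = H_{2,r,j,r,j} = 2a_r^2\re\ls A\delta_{t_r,j}', A\delta_{t_r,j}'\rs \leq 2(1+\RIPcst)a_r^2\|\delta_{t_r,j}'\|_h^2 = 2(1+\RIPcst)a_r^2|\rho''(0)|$ by the RIP on generalized dipoles (Lemma~\ref{lem:RIP_derivative}) and Lemma~\ref{lem:kernel_dirac_properties}. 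Combining, $\lambda_{\max}(H) \geq \max(2(1-\RIPcst), \ldots)$ — more precisely one picks the test vector realizing the larger of the two quantities $(1-\RIPcst)\cdot 1$ and depends on whether $\max_r a_r^2|\rho''(0)|$ exceeds $1$ — and $\lambda_{\min}(H) \leq \min(2(1+\RIPcst), 2(1+\RIPcst)\min_r a_r^2|\rho''(0)|)$, so $\kappa(H) = \lambda_{\max}(H)/\lambda_{\min}(H) \geq \frac{(1-\RIPcst)\max(1,(a_r^2|\rho''(0)|)_{r=1,l})}{(1+\RIPcst)\min(1,(a_r^2|\rho''(0)|)_{r=1,l})}$.

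The main obstacle — really the only nontrivial point — is the lower bound on $\kappa(H)$: one must be careful that the test vectors chosen for the Rayleigh-quotient estimates of $\lambda_{\max}$ and $\lambda_{\min}$ produce exactly the $\max$ and $\min$ expressions claimed, and in particular that in the noiseless case evaluated at the true parameter $\theta_0$ the perturbation matrix $F$ of Proposition~\ref{prop:Hessian} vanishes (because $A\phi(\theta_0) - y = A x_0 - A x_0 = 0$), so that $H = G$ is exactly the Gram-type matrix and the clean RIP bounds on (generalized) dipoles apply with no error term. Everything else is routine algebra.
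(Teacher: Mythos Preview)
Your approach is correct and matches the paper's: at $\theta_0$ with $e=0$ the perturbation matrix $F$ of Proposition~\ref{prop:Hessian} vanishes (equivalently $\xi=0$), the upper bound on $\kappa(H)$ then follows directly from Theorem~\ref{th:min_max_eigen_control_H}, and the lower bound on $\kappa(H)$ comes from evaluating the Rayleigh quotient at single-coordinate test vectors and applying the RIP (Lemma~\ref{lem:RIP_derivative}) to the resulting single (generalized) dipole, exactly as the paper does via Equation~\eqref{eq:expr_G}. One minor presentation slip: for the lower bound on $\lambda_{\max}(H)$ you need the \emph{lower} RIP inequality on the location test vector (you wrote only the upper one), and for the upper bound on $\lambda_{\min}(H)$ you need to take the location test vector at the index $r$ \emph{minimizing} $a_r^2|\rho''(0)|$ (you used the maximizer); both fixes are immediate and the argument goes through as intended.
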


\begin{proof}
See Appendix~\ref{proof23}.
\end{proof}

It is easy to see that for a noise $e$ with small enough energy (i.e. such that $\xi$ is strictly lower than $2(1-\RIPcst)(1-(k-1)\mu)\min(1,(a_r^2|\rho''(0)|)_{r=1,l})$, if  $\min_r |a_r| > 0$, then the Hessian at a global minimum is strictly positive. Of course, this may require a very small noise since the ratio of amplitudes at the global minimum can be large.  

\begin{remark}
 We remark that for a same maximal ratio of amplitudes in $\theta^*$, a better conditioning bound is achieved when $\max_{r=1,l}a_r^2 |\rho''(0)| \geq 1 \geq  \min_{r=1,l}a_r^2 |\rho''(0)|$.  We attribute this to the fact that we estimate amplitudes and locations at the same time. The amplitudes must be appropriately scaled to match the variations of $g$ with respect to locations. Intuitively, alternate descent with respect to amplitudes and locations might be better than the classical gradient descent for easily setting the descent step.
 \end{remark}

\begin{remark}
 As $g$ is $\sC^2$, ensuring the strict positivity of the Hessian at the global minimum guarantees the \emph{existence}  of a basin of attraction as emphasized in Section~\ref{sec:basin_def}. In the next Section, we give an explicit formulation of a basin of attraction.
\end{remark}

\section{Explicit basin of attraction of the global minimum} \label{sec:basin}

Let $\theta_1 \in \bR^d$. Can we guarantee, for some notion of distance $d$, that $d(\theta_1,\theta^*)\leq C$ and $\theta_1 \neq \theta^*$, with $C$ an explicit constant, implies $\nabla g(\theta_1) \neq 0$ ?  The following theorems show that it is in fact the case. With a strong RIP assumption, we can give an explicit basin of attraction of the global minimum for minimization~\eqref{eq:minimization2} without separation constraints.

\subsection{Uniform control of the Hessian}
In the noiseless case, a global minimum $\theta^*$ of the  constrained minimization of $g$ over $\Theta_{k,\epsilon}$ is also a global minimum of the unconstrained minimization because $g(\theta^*) = 0$. 
In the presence of noise, we can no longer guarantee that the minimizer of the constrained problem $\theta^*$ is a global minimum of the unconstrained problem. However, the shape of the constraint guarantees that it is a local minimum (see next Lemma).

\begin{lemma} \label{lem:link_unconstrained_constrained}
 Suppose $\theta^* =(a_1,..,a_k,t_1,..,t_k)$ is a result of constrained minimization~\eqref{eq:minimization2} with $t_i  \in \text{rint} \sB_2(R)$. Then  $\theta^*$ is a local minimum of $g$.
\end{lemma}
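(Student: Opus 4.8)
The plan is to show that $\theta^*$ minimizes $g$ over a full neighbourhood in $\bR^{k(d+1)}$ by exploiting the structure of the constraint set $\Theta_{k,\epsilon}$: the only constraints that can be active are the separation constraints $\|t_r - t_l\|_2 > \epsilon$ and the membership constraints $t_r \in \sB_2(R)$, and both are \emph{open} conditions. First I would observe that $\theta^* \in \Theta_{k,\epsilon}$ means $\|t_r - t_l\|_2 > \epsilon$ for all $r \neq l$ and $t_r \in \text{rint}\,\sB_2(R)$ by hypothesis. Since there are finitely many such strict inequalities and each defines an open subset of $\bR^{k(d+1)}$, their intersection is an open set $U$ containing $\theta^*$; concretely one can take $\epsilon' > 0$ small enough that the ball $B(\theta^*,\epsilon')$ still satisfies all separation constraints (choose $\epsilon'$ smaller than, say, a fixed fraction of $\min_{r\neq l}(\|t_r-t_l\|_2 - \epsilon)$ and of $\mathrm{dist}(t_r,\partial\sB_2(R))$).

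Next I would note that on this ball $B(\theta^*,\epsilon') \subset \Theta_{k,\epsilon}$, so the constrained minimality of $\theta^*$ from~\eqref{eq:minimization2} gives directly $g(\theta^*) \le g(\theta')$ for every $\theta' \in B(\theta^*,\epsilon')$. By the Definition of local minimum stated in the excerpt, this is exactly the assertion that $\theta^*$ is a local minimum of $g$ on $\bR^{k(d+1)}$.

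The only mild subtlety — and the single point that needs a word of care rather than being purely routine — is verifying that a small enough ball around $\theta^*$ genuinely lies inside $\Theta_{k,\epsilon}$, i.e. that none of the finitely many defining inequalities is violated under a small perturbation. This is where the hypothesis $t_i \in \text{rint}\,\sB_2(R)$ is used (it rules out the boundary case $\|t_i\|_2 = R$, which would otherwise be an active closed constraint and break the argument), and where the strictness of the separation inequalities in the definition of $\Sigma_{k,\epsilon}$ matters. Once the inclusion $B(\theta^*,\epsilon') \subset \Theta_{k,\epsilon}$ is established, the conclusion is immediate. I do not expect any real obstacle here; the lemma is essentially a bookkeeping statement that the active-constraint set is empty because the feasible region is locally all of $\bR^{k(d+1)}$.
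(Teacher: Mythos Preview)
Your proposal is correct and follows essentially the same route as the paper: observe that all the defining constraints of $\Theta_{k,\epsilon}$ are strict inequalities, deduce that a small ball around $\theta^*$ lies entirely in $\Theta_{k,\epsilon}$, and conclude local minimality directly from constrained minimality. If anything, your write-up is slightly more careful than the paper's, since you explicitly invoke the hypothesis $t_i \in \text{rint}\,\sB_2(R)$ to handle the domain constraint, whereas the paper's short proof only mentions the separation constraint.
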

\begin{proof}
 let $\theta^*=(a_1,..,a_k,t_1,..,t_k)$.  As for all $i \neq j$,  $\|t_i-t_j\|_\infty> \epsilon$, there exists $\eta >0$ such that for all $\theta = (b_1,..,b_k,s_1,..,s_k)$  such that $\|s_i-t_i\|_\infty < \eta$, we have $\theta \in \Theta_{k,\epsilon}$. Hence, $\theta^* + B_\infty(\eta) \subset \Theta_{k, \epsilon}$, and $\theta^* \in \arg \min_{\theta \in \theta^* + B_\infty(\eta) } g(\theta)$.
\end{proof}

 Hence we can still calculate a basin of attraction of $\theta^*$ (for the unconstrained minimization).  The expression of the basin in the next Section is a direct consequence of the following Theorem that uniformly control the Hessian of $g$ in an explicit neighbourhood of $\theta^*$.

\begin{theorem}\label{th:basin}
 Suppose $A$ has RIP $\RIPcst$ on $\Sigma_{k,\frac{\epsilon}{2}}-\Sigma_{k,\frac{\epsilon}{2}}$ and that $h$ follows Assumption~\ref{assum:kernel_prop} and has mutual coherence constant $\mu$ on $\frac{\epsilon}{2}$-separated dipoles.
Let $\theta^*=(a_1,..,a_k,t_1,..,t_k) \in \Theta_{k,\epsilon}$ be a result of constrained minimization~\eqref{eq:minimization2} such that $t_i \in \text{rint} \sB_2(R)$. Suppose $0<|a_1| \leq |a_2| ... \leq |a_k|$. Let \modif{$0 \leq \beta \leq \frac{\epsilon}{4}$} and
\snewtext
\begin{equation}
\begin{split}
\Lambda_{\theta^*,\beta}  
:=  \{& \theta:  \|\theta-\theta^*\|_2 < \beta \}.
\end{split}
\end{equation}
\enewtext
If $\theta \in \Lambda_{\theta^*,\beta}$,
then $H$ the Hessian of $g$ at $\theta$ has  the following bounds : 
\begin{equation}
 \sup_{\|u\|_2 =1} u^THu \leq  2(1+\RIPcst)(1+(k-1)\mu)\max(1,\modif{(|a_k|+ \beta)}^2|\rho''(0)|) + \xi; \\
\end{equation}
\begin{equation}
 \inf_{\|u\|_2 =1} u^THu \geq 2(1-\RIPcst)(1-(k-1)\mu)\min(1,\modif{(|a_1| -\beta )}^2|\rho''(0)|)-\xi\\
\end{equation}
where $\xi =2(d+1)\max( |a_k|  \sqrt{m}D_{A,R} ,\sqrt{1+\RIPcst}\sqrt{|\rho''(0)|} ) (\sup_{\theta \in \Lambda_{\theta^*,\beta}} \|A\phi(\theta)-A\phi(\theta^*)\|_2+ \|e\|_2)$, the constant
$D_{A,R}$ is given in~\eqref{defDAR}
and $e$ is the finite energy measurement noise.
\end{theorem}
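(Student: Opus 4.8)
The plan is to reduce Theorem~\ref{th:basin} to Theorem~\ref{th:min_max_eigen_control_H} by checking that, when $\theta\in\Lambda_{\theta^*,\beta}$ with $\beta\le\frac{\epsilon}{4}$, the point $\theta$ still lies in a constrained model set on which the needed RIP and coherence hold, and that all amplitude-dependent quantities can be controlled by $|a_1|-\beta$ and $|a_k|+\beta$. First I would show $\theta\in\Theta_{k,\frac{\epsilon}{2}}$: if $\theta=(b_1,\dots,b_k,s_1,\dots,s_k)$ with $\|\theta-\theta^*\|_2<\beta\le\frac{\epsilon}{4}$, then each $\|s_r-t_r\|_2<\frac{\epsilon}{4}$, so by the triangle inequality $\|s_r-s_l\|_2\ge\|t_r-t_l\|_2-\|s_r-t_r\|_2-\|s_l-t_l\|_2>\epsilon-\frac{\epsilon}{2}=\frac{\epsilon}{2}$ for $r\ne l$; one also checks $s_r\in\sB_2(R)$ (using $t_r\in\text{rint}\,\sB_2(R)$, possibly shrinking $\beta$ or noting $R$ can be taken slightly larger, as the paper already allows arbitrary compact support). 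Hence the hypotheses of Theorem~\ref{th:min_max_eigen_control_H} apply at $\theta$ with $\frac{\epsilon}{2}$ in place of $\epsilon$, which is exactly why the RIP is assumed on $\Sigma_{k,\frac{\epsilon}{2}}-\Sigma_{k,\frac{\epsilon}{2}}$ and the coherence on $\frac{\epsilon}{2}$-separated dipoles.

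Next I would feed the bounds of Theorem~\ref{th:min_max_eigen_control_H} (applied at $\theta$, with $\theta^*$ as the reference minimizer — legitimate since $\theta^*$ is a local, hence stationary, minimizer by Lemma~\ref{lem:link_unconstrained_constrained}, which is what that lemma is for) and replace the amplitudes $b_r$ of $\theta$ by their worst-case values. Since $\|b-a\|_2\le\|\theta-\theta^*\|_2<\beta$, we have $|a_1|-\beta\le|b_r|\le|a_k|+\beta$ for every $r$; monotonicity of $x\mapsto\max(1,x^2|\rho''(0)|)$ and $x\mapsto\min(1,x^2|\rho''(0)|)$ in $|x|$ then yields
\begin{equation}
\max_r\bigl(1,b_r^2|\rho''(0)|\bigr)\le\max\bigl(1,(|a_k|+\beta)^2|\rho''(0)|\bigr),\quad
\min_r\bigl(1,b_r^2|\rho''(0)|\bigr)\ge\min\bigl(1,(|a_1|-\beta)^2|\rho''(0)|\bigr).
\end{equation}
For the $\xi$ term I would similarly bound $\max_r|b_r|\le|a_k|+\beta\le|a_k|+\frac{\epsilon}{4}$; to match the stated $\xi$ (which has $|a_k|$, not $|a_k|+\beta$), I would either absorb the $\beta$ into a slightly looser constant or, more cleanly, note that the coefficient $2(d+1)\max(\cdot)$ in $\xi$ only needs to be an upper bound, so using $|a_k|+\beta$ and then $\beta\le\frac{\epsilon}{4}$ is fine — here I should double-check the paper's intent, but presumably the $|a_k|$ in the displayed $\xi$ is meant with the implicit understanding that $\beta$ is small, or one simply writes $|a_k|+\beta$. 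Finally, the error term $\|A\phi(\theta)-A\phi(\theta^*)\|_2$ in the pointwise bound of Theorem~\ref{th:min_max_eigen_control_H} is replaced by $\sup_{\theta\in\Lambda_{\theta^*,\beta}}\|A\phi(\theta)-A\phi(\theta^*)\|_2$, which is an upper bound valid uniformly over the ball — this is the step that turns a pointwise Hessian estimate into a uniform one over $\Lambda_{\theta^*,\beta}$.

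The only genuine subtlety, and the step I expect to require the most care, is the membership $\theta\in\Theta_{k,\frac{\epsilon}{2}}$ together with the location constraint $s_r\in\text{rint}\,\sB_2(R)$: one must make sure the perturbed Diracs stay in the relative interior of the ball so that Lemma~\ref{lem:RIP_derivative} and Lemma~\ref{lem:upper_RIP_second_deriv} remain applicable (they require supports in $\text{rint}\,\sB_2(R)$). Since $t_i\in\text{rint}\,\sB_2(R)$ this holds for $\beta$ small enough, and because the paper explicitly allows Diracs supported on any compact set, one can always enlarge $R$ by $\frac{\epsilon}{4}$ at the outset so that $\Lambda_{\theta^*,\epsilon/4}$ stays inside; I would state this as a one-line remark. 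Everything else is bookkeeping: plug the amplitude bounds and the uniform error bound into the two inequalities of Theorem~\ref{th:min_max_eigen_control_H}, collect terms, and read off the claimed $\sup$ and $\inf$ bounds with the stated $\xi$.
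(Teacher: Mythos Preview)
Your proposal is correct and follows essentially the same route as the paper: show that $\|\theta-\theta^*\|_2<\beta\le\epsilon/4$ forces $\theta\in\Theta_{k,\epsilon/2}$ via the triangle inequality on the locations, then invoke Theorem~\ref{th:min_max_eigen_control_H} at $\theta$ (with the $\epsilon/2$ RIP and coherence), and finally replace $b_r$ by its extremal bounds $|a_1|-\beta\le|b_r|\le|a_k|+\beta$ and the pointwise error by the supremum over $\Lambda_{\theta^*,\beta}$. You are in fact more careful than the paper on two points it glosses over --- the $|a_k|$ versus $|a_k|+\beta$ discrepancy in the displayed $\xi$, and the $s_r\in\text{rint}\,\sB_2(R)$ requirement --- but the core argument is identical.
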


 \begin{proof}
See Appendix~\ref{proof3}.
\end{proof}

\begin{remark}
We observe that we require a stronger RIP than the usual one on  $\Sigma_{k,\epsilon} - \Sigma_{k,\epsilon}$ to guarantee that unconstrained minimization converges in the basin of attraction  $ \Lambda_{\theta^*,\beta} $.
\end{remark}

\modif{The set $\Lambda_{\theta^*,\beta} $ is an open $\ell^2$ ball centered on $\theta^*$. The choice of this set, besides its simplicity, is useful to guarantee the convergence of the gradient descent.  We could garantee the positivity of the Hessian on bigger sets with more complicated formulations. Guaranteeing that iterates of the gradient descent stay in such sets would become much harder then.}  When the separation constraint is added for the basin of attraction (we look for potential critical points in $\Sigma_{k,\epsilon}$), we can provide better bounds. We will discuss what we could expect from constrained descent algorithms in Section~\ref{sec:projected_gradient}.
 
\begin{theorem}\label{th:basin_with_constraint}
Suppose $A$ has RIP $\RIPcst$ on $\Sigma_{k,\epsilon}-\Sigma_{k,\epsilon}$ and that $h$ follows Assumption~\ref{assum:kernel_prop} and has mutual coherence constant $\mu$ on $\epsilon$-separated dipoles.
 Suppose $0<|a_1| \leq |a_2| ... \leq |a_k|$. Let $\theta^*=(a_1,...,a_k,t_1,..t_k)\in \Theta_{k,\epsilon}$ be a result of constrained minimization~\eqref{eq:minimization2} such that $t_i \in \text{rint} \sB_2(R)$. Let $ \beta \geq 0 $ and
\snewtext
\begin{equation}
\begin{split}
 \Lambda_{\theta^*,\beta}  
:=  \{& \theta:  \|\theta-\theta^*\|_2 < \beta \}.
\end{split}
\end{equation}
\enewtext
  Then for $\theta \in \Theta_{k,\epsilon} \cap \Lambda_{\theta^*,\beta} $, then $H$ the Hessian of $g$ at $\theta$ has  the following bounds: 
\begin{equation}
 \sup_{\|u\|_2 =1} u^THu \leq  2(1+\RIPcst)(1+(k-1)\mu)\max(1,\modif{(|a_k|+ \beta)}^2|\rho''(0)|) + \xi; \\
\end{equation}
\begin{equation}
 \inf_{\|u\|_2 =1} u^THu \geq 2(1-\RIPcst)(1-(k-1)\mu)\min(1,\modif{(|a_1| -\beta)}^2|\rho''(0)|)-\xi\\
\end{equation}
where $\xi =2(d+1)\max( |a_k|  \sqrt{m}D_{A,R} ,\sqrt{1+\RIPcst}\sqrt{|\rho''(0)|} ) (\sup_{\theta \in \Lambda_{\theta^*,\beta}} \|A\phi(\theta)-A\phi(\theta^*)\|_2+ \|e\|_2)$, 
 the constant $D_{A,R}$ is given in \eqref{defDAR}
and $e$ is the finite energy measurement noise.

 \end{theorem}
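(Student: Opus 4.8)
The plan is to mirror the proof of Theorem~\ref{th:basin} (which in turn builds on Theorem~\ref{th:min_max_eigen_control_H}), but to exploit the separation constraint $\theta \in \Theta_{k,\epsilon}$ directly, so that one only needs the RIP on $\Sigma_{k,\epsilon}-\Sigma_{k,\epsilon}$ rather than the stronger RIP on $\Sigma_{k,\frac\epsilon2}-\Sigma_{k,\frac\epsilon2}$. First I would fix $\theta = (b_1,\dots,b_k,s_1,\dots,s_k) \in \Theta_{k,\epsilon}\cap\Lambda_{\theta^*,\beta}$, so that by definition the locations $s_1,\dots,s_k$ are pairwise $\epsilon$-separated and lie in $\sB_2(R)$; moreover $\|\theta-\theta^*\|_2 < \beta$ forces $|b_r - a_r| < \beta$, hence $|a_1|-\beta \le |b_r| \le |a_k|+\beta$ for every $r$. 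Because the $s_r$ are genuinely $\epsilon$-separated (not merely $\frac\epsilon2$-separated), the $\epsilon$-dipoles and generalized dipoles appearing below are $\epsilon$-separated, so the mutual coherence bound of Assumption~\ref{assum:kernel_prop}, Lemma~\ref{lem:mutual_Generalized_dipoles}, and the RIP on generalized dipoles (Lemma~\ref{lem:RIP_derivative}) all apply with the constants $\RIPcst$ and $\mu$ for $\epsilon$-separation. This is exactly the situation already handled in the proof of Theorem~\ref{th:min_max_eigen_control_H}, evaluated at the point $\theta$ rather than at $\theta^*$.

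Next I would invoke Proposition~\ref{prop:Hessian} to write $H = G + F$ and bound the two pieces separately. For $G$: for a unit vector $u = (u^a, u^t)$ one recognizes $u^T G u = 2\|A \sum_r \nu_r\|_2^2$ where $\nu_r = u^a_r \delta_{s_r} + b_r \sum_j u^t_{r,j}\delta_{s_r,e_j}^{\prime}$ is a generalized dipole supported near $s_r$; applying the RIP on generalized dipoles and then Lemma~\ref{lem:mutual_Generalized_dipoles} together with Lemma~\ref{lem:kernel_dirac_properties} ($\|\delta_{s_r}\|_h^2 = 1$, $\|\delta_{s_r,v}^{\prime}\|_h^2 = |\rho''(0)|$, cross term $=0$) yields $\|\nu_r\|_h^2$ between $\min(1, b_r^2|\rho''(0)|)\|u_r\|_2^2$ and $\max(1,b_r^2|\rho''(0)|)\|u_r\|_2^2$, and then the substitution $|a_1|-\beta \le |b_r| \le |a_k|+\beta$ gives the stated $(|a_1|-\beta)$ and $(|a_k|+\beta)$ factors. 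For $F$: its entries are of the form $2b_r\re\ls A\delta_{s_r,j_1,j_2}^{\prime\prime}, A\phi(\theta)-y\rs$ and $-2\re\ls A\delta_{s_r,j}^{\prime}, A\phi(\theta)-y\rs$; using $\|A\phi(\theta)-y\|_2 = \|A\phi(\theta)-A\phi(\theta^*) + (A\phi(\theta^*)-y)\|_2 \le \|A\phi(\theta)-A\phi(\theta^*)\|_2 + \|e\|_2$ (here $A\phi(\theta^*) - y = Ax^* - y$; for the noiseless constrained minimizer this is controlled by the RIP/recovery bound, and in general it is absorbed into the $\|e\|_2$ term via the argument used in Theorem~\ref{th:min_max_eigen_control_H}), together with Lemma~\ref{lem:upper_RIP_second_deriv} ($\|A\delta_{s_r,v_1,v_2}^{\prime\prime}\|_2 \le \sqrt m D_{A,R}$) and the RIP-type bound $\|A\delta_{s_r,v}^{\prime}\|_2 \le \sqrt{1+\RIPcst}\sqrt{|\rho''(0)|}$, one obtains $\|F\|_{2\to 2} \le \xi$ after counting the $(d+1)$ entries per block and bounding $|b_r| \le |a_k|+\beta \le |a_k|$... here one must be slightly careful — the cleanest route is to bound $|b_r|$ by something explicit; since the statement writes $|a_k|$ in $\xi$, I would take $\beta$ small enough (or simply note $|b_r|$ contributes through the already-present $\max(\cdot)$) and follow verbatim the counting in the proof of Theorem~\ref{th:min_max_eigen_control_H}. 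Combining $u^THu \ge u^TGu - \|F\|_{2\to2}$ and $u^THu \le u^TGu + \|F\|_{2\to2}$ gives the two displayed inequalities, with $\sup_{\theta\in\Lambda_{\theta^*,\beta}}\|A\phi(\theta)-A\phi(\theta^*)\|_2$ replacing the pointwise quantity since the bound must hold uniformly over the ball.

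The main obstacle, as in Theorem~\ref{th:basin}, is bookkeeping rather than a genuine difficulty: one must make sure that every dipole/generalized dipole that arises is supported inside $\text{rint}\,\sB_2(R)$ (so that $A$ may be applied to distributions of order $1$ and $2$) and is $\epsilon$-separated from the others, and that the perturbed amplitudes $b_r$ are uniformly controlled by $|a_1|-\beta$ and $|a_k|+\beta$. Since we now restrict to $\theta \in \Theta_{k,\epsilon}$, separation is automatic — this is precisely why no inflation of the separation constraint (to $\frac\epsilon2$) is needed here, unlike in Theorem~\ref{th:basin} where arbitrary $\theta \in \Lambda_{\theta^*,\beta}$ may have locations as close as $\epsilon - 2\beta \ge \frac\epsilon2$ and one must pass to $\frac\epsilon2$-dipoles. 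The only place requiring care is the support condition for the $s_r$: from $t_i \in \text{rint}\,\sB_2(R)$ and $\|s_i - t_i\|_2 < \beta$ one needs $s_i \in \text{rint}\,\sB_2(R)$, which holds for $\beta$ small (and otherwise one enlarges $R$ harmlessly, as the paper notes the support may be any compact set). With these verifications in place, the computation is word-for-word that of Theorem~\ref{th:min_max_eigen_control_H} with $\theta$ in the role of $\theta^*$ there and with $a_r$ replaced by the interval $[|a_1|-\beta,\ |a_k|+\beta]$.
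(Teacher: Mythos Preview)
Your proposal is correct and follows exactly the paper's approach: the paper's own proof is the one-liner ``This is a direct consequence of Theorem~\ref{th:min_max_eigen_control_H}. The proof follows the same lines as the one of Theorem~\ref{th:basin},'' and you have correctly unpacked what that means --- apply Theorem~\ref{th:min_max_eigen_control_H} at $\theta$ (where the $\epsilon$-separation is now automatic from $\theta\in\Theta_{k,\epsilon}$, so no inflation to $\epsilon/2$ is needed), then replace the amplitudes $b_r$ by the interval $[|a_1|-\beta,\,|a_k|+\beta]$ coming from $\|\theta-\theta^*\|_2<\beta$. Your flagging of the $|a_k|$ versus $|a_k|+\beta$ discrepancy in $\xi$ is apt (the paper's statement has the same slight looseness in both Theorems~\ref{th:basin} and~\ref{th:basin_with_constraint}); it does not affect the argument.
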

 
 \begin{proof}
See Appendix~\ref{proof3}.
\end{proof}

 \subsection{Explicit basin of attraction in the noiseless and noisy case }
 
With the help of this uniform control of the Hessian we give an explicit (yet suboptimal)  basin of attraction. 
  
\begin{corollary}[of Theorem~\ref{th:basin}, noiseless case]\label{cor:basin_noiseless}
Under the hypotheses of Theorem~\ref{th:basin}, let $\theta^* \in \Theta_{k,\epsilon}$ be a result of constrained minimization~\eqref{eq:minimization2}. Let $a^* =(a_1,a_2...,a_k)$.
Take  
\snewtext
$$\beta_{max}  := \min \left( c_h, \frac{|a_1|}{2}, C_1C_2 \right)$$ where $C_1 = \frac{ (1-\RIPcst)(1-(k-1)\mu) }{ (d+1)\sqrt{1+\RIPcst}  \sqrt{1+(k-1)\mu} } $ and $C_2 = \frac{ \min(1,|a_1|^2|\rho''(0)|/4) }{  \max( |a_k|  \sqrt{m}D_{A,R} ,\sqrt{1+\RIPcst}\sqrt{|\rho''(0)|} )  \sqrt{ 1 + 2|\rho''(0)| \|a^*\|_2^2}}$
\enewtext
 Then the set 
 $\Lambda_{\theta^*,\beta_{max}}$ 
 is a basin of attraction of $\theta^*$.
\end{corollary}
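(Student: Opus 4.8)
The plan is to invoke Corollary~\ref{cor:convergence_gradient_descent}: it suffices to exhibit an open set $\Lambda$ containing $\theta^*$ on which $g$ is convex with $L$-Lipschitz gradient, with $\theta^*$ a minimizer of $g$ on $\Lambda$, and to check that the gradient-descent iterates with step $\tau < 1/L$ stay in $\Lambda$. I would take $\Lambda = \Lambda_{\theta^*,\beta_{max}}$. Since $e=0$ and $\theta^*$ is a result of the constrained minimization, $g(\theta^*)=0$, so $\theta^*$ is a global minimizer of $g$ on all of $\bR^{k(d+1)}$, in particular on $\Lambda$; that disposes of the "$\theta^*$ is a minimizer" requirement for free (and also makes $\theta^*$ the unique critical point in $\Lambda$ once we prove strict convexity there).

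The core is to feed the choice of $\beta_{max}$ into Theorem~\ref{th:basin}. First, $\beta_{max} \le c_h \le \epsilon/4$, so Theorem~\ref{th:basin} applies on $\Lambda_{\theta^*,\beta_{max}}$; also $\beta_{max} \le |a_1|/2$ guarantees $(|a_1|-\beta)^2|\rho''(0)| \ge |a_1|^2|\rho''(0)|/4 > 0$, so the amplitude terms in the two eigenvalue bounds stay bounded away from $0$. Next I need to bound the perturbation term $\xi$. In the noiseless case $\|e\|_2 = 0$, so $\xi = 2(d+1)\max(|a_k|\sqrt{m}D_{A,R},\sqrt{1+\RIPcst}\sqrt{|\rho''(0)|})\sup_{\theta\in\Lambda_{\theta^*,\beta}}\|A\phi(\theta)-A\phi(\theta^*)\|_2$. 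Here I would use the RIP on $\Sigma_{k,\epsilon/2}-\Sigma_{k,\epsilon/2}$: for $\theta\in\Lambda_{\theta^*,\beta}$ with $\beta\le\epsilon/4$, the difference $\phi(\theta)-\phi(\theta^*)$ is a sum of $k$ $\frac{\epsilon}{2}$-separated generalized dipoles up to second order (pairing $a_r\delta_{s_r}-a_r\delta_{t_r}$ type terms), so $\|A\phi(\theta)-A\phi(\theta^*)\|_2 \le \sqrt{1+\RIPcst}\,\|\phi(\theta)-\phi(\theta^*)\|_h$, and a first-order Taylor bound on $\phi$ together with Lemma~\ref{lem:kernel_dirac_properties} (which gives $\|\delta_t\|_h=1$, $\|\delta_{t,v}'\|_h^2=|\rho''(0)|$) yields $\|\phi(\theta)-\phi(\theta^*)\|_h \lesssim \sqrt{1+2|\rho''(0)|\|a^*\|_2^2}\,\|\theta-\theta^*\|_2$. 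Plugging this and $\|\theta-\theta^*\|_2<\beta_{max}\le C_1 C_2$ in, the product $\xi$ becomes at most $(1-\RIPcst)(1-(k-1)\mu)\min(1,|a_1|^2|\rho''(0)|/4)$, i.e. exactly half of the lower bound $2(1-\RIPcst)(1-(k-1)\mu)\min(1,(|a_1|-\beta)^2|\rho''(0)|)$. Hence $\inf_{\|u\|_2=1}u^THu \ge (1-\RIPcst)(1-(k-1)\mu)\min(1,|a_1|^2|\rho''(0)|/4) > 0$ throughout $\Lambda_{\theta^*,\beta_{max}}$, so $g$ is strictly convex there; and $\sup_{\|u\|_2=1}u^THu$ is bounded by a finite constant $L$, giving the Lipschitz gradient (via the Remark relating $\lambda_{\max}$ to the Lipschitz constant).

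The remaining point, and the one I expect to be the main obstacle, is that Corollary~\ref{cor:convergence_gradient_descent} additionally requires the iterates $\theta_n$ to remain in $\Lambda$. This needs a separate argument: one shows that for a strongly convex $g$ on a ball $B(\theta^*,\beta_{max})$ with $\tau<1/L$, the map $\theta\mapsto\theta-\tau\nabla g(\theta)$ is a contraction toward $\theta^*$ in $\ell^2$-norm, so $\|\theta_{n+1}-\theta^*\|_2 \le \|\theta_n-\theta^*\|_2$ whenever $\theta_n\in B(\theta^*,\beta_{max})$ — which, starting from $\theta_0\in\Lambda_{\theta^*,\beta_{max}}$, keeps the whole sequence in the (closed) ball and forces $\theta_n\to\theta^*$. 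One must be slightly careful that the ball is open in the statement of $\Lambda_{\theta^*,\beta}$, but since the contraction is strict away from $\theta^*$ the iterates stay in the open ball, or one argues on the slightly smaller closed ball $B(\theta^*,\|\theta_0-\theta^*\|_2)\subset\Lambda_{\theta^*,\beta_{max}}$. With that, Definition~\ref{defbassin} is met and $\Lambda_{\theta^*,\beta_{max}}$ is a basin of attraction of $\theta^*$. $\qed$
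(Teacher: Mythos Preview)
Your approach is essentially the same as the paper's: invoke Theorem~\ref{th:basin}, bound $\xi$ via the RIP and a kernel-norm estimate of $\|\phi(\theta)-\phi(\theta^*)\|_h$, deduce uniform positive-definiteness and a Lipschitz gradient on $\Lambda_{\theta^*,\beta_{max}}$, then argue the iterates remain in the ball and apply Corollary~\ref{cor:convergence_gradient_descent}.

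Two points where your sketch is slightly off compared to the paper. First, the constraint $\beta_{max}\le c_h$ is not there merely to ensure $\beta\le\epsilon/4$ for Theorem~\ref{th:basin}; its real role is in bounding the dipole norms $\|a_i\delta_{t_i}-b_i\delta_{s_i}\|_h^2 = a_i^2+b_i^2-2a_ib_i\rho(\|s_i-t_i\|_2)$, where the paper uses the quadratic bound $\rho(t)\le 1-\tfrac{|\rho''(0)|}{2}t^2$ from Assumption~\ref{assum:kernel_prop}, valid only for $t\in[0,c_h]$. This exact dipole computation replaces your ``first-order Taylor bound on $\phi$'' and is what produces the factor $\sqrt{1+2|\rho''(0)|\|a^*\|_2^2}$ precisely. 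Second, you are missing an intermediate step: before reaching individual dipole norms one passes from $\|\phi(\theta)-\phi(\theta^*)\|_h$ to $\sqrt{\sum_i\|a_i\delta_{t_i}-b_i\delta_{s_i}\|_h^2}$ via the mutual-coherence bound (Lemma~\ref{lem:pyth_dipole}/\ref{lem:mutual_Generalized_dipoles}), which is where the factor $\sqrt{1+(k-1)\mu}$ in $C_1$ comes from. For the stability of iterates, the paper simply cites nonexpansiveness of the gradient step for convex Lipschitz-smooth functions (so $\|\theta_n-\theta^*\|_2$ is nonincreasing), which is your contraction argument without needing strong convexity.
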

  
 \begin{proof}
See Appendix~\ref{proof3}.
\end{proof}

  The parameter $\beta$ controls the distance between a parameter and the optimal parameter. When the RIP constant $\gamma$ decreases (and generally as the number of measurement increases), the size of the basin of attraction increases.  In both the context of regular Fourier sampling and random Fourier sampling, the constant $D_{A,R}$ is bounded when $m$ increases. When the mutual coherence constant $\mu$ decreases, the basin of attraction also increases. \modif{The size of the basin also decreases as the ratio of amplitudes $\frac{a_1}{a_k}$ decreases. We observe again that performing the descent with respect to amplitudes and positions at the same time yields pessimistic bounds for the basin of attraction}. Finally, we note that the smaller $\beta$ is, the  smaller is the upper bound on the operator norm of the Hessian.

When the noise contaminating the measurements is small enough, we have similar results with a smaller basin of attraction.

\begin{corollary}[of Theorem~\ref{th:basin}, noisy case]\label{cor:basin_noisy}
Under the hypotheses of Theorem~\ref{th:basin}, let $\theta^* \in \Theta_{k,\epsilon}$ be a result of constrained minimization~\eqref{eq:minimization2}. . Let $a^* =(a_1,a_2...,a_k)$.
Take  
\snewtext
$$\beta_{max}  := \min \left( c_h, \frac{|a_1|}{2}, C_1C_3 \right)$$ where $C_1 = \frac{ (1-\RIPcst)(1-(k-1)\mu) }{ (d+1)\sqrt{1+\RIPcst}  \sqrt{1+(k-1)\mu} } $ and $C_3 = \frac{ \min(1,|a_1|^2|\rho''(0)|/4) }{  \max( |a_k|  \sqrt{m}D_{A,R} ,\sqrt{1+\RIPcst}\sqrt{|\rho''(0)|} ) (1+ \sqrt{ 1 + 2|\rho''(0)| \|a^*\|_2^2})}$

Suppose $\|e\|_2 \leq  \sqrt{1+\RIPcst} \sqrt{1+(k-1)\mu} \beta $. \enewtext
 Then the set 
 $\Lambda_{\theta^*,\beta_{max}}$
 is a basin of attraction of $\theta^*$.
\end{corollary}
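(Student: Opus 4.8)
The plan is to derive the statement from the uniform Hessian control of Theorem~\ref{th:basin} together with the elementary convergence result for gradient descent on a locally strongly convex function (Corollary~\ref{cor:convergence_gradient_descent}); the argument runs in parallel to the noiseless Corollary~\ref{cor:basin_noiseless}, the single new ingredient being the way the noise term is absorbed. First I would set $\beta=\beta_{max}$ and note that its definition as a minimum forces $\beta_{max}\leq c_h$ and $\beta_{max}\leq\frac{|a_1|}{2}$, so that $\beta_{max}$ lies in the range $0\leq\beta\leq\frac{\epsilon}{4}$ required by Theorem~\ref{th:basin}. Applying that theorem on $\Lambda_{\theta^*,\beta_{max}}$ gives, for every $\theta$ in that ball, an upper bound $L$ on the largest eigenvalue of the Hessian $H$ of $g$ and the lower bound $\lambda_{\min}(H)\geq 2(1-\RIPcst)(1-(k-1)\mu)\min(1,(|a_1|-\beta_{max})^2|\rho''(0)|)-\xi$. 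Using $\beta_{max}\leq\frac{|a_1|}{2}$ we get $(|a_1|-\beta_{max})^2\geq\frac{|a_1|^2}{4}$, so the first term is at least $2(1-\RIPcst)(1-(k-1)\mu)\min(1,\frac{|a_1|^2|\rho''(0)|}{4})$.

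The technical heart, shared with Corollary~\ref{cor:basin_noiseless}, is to bound $\xi$, i.e. to bound $\sup_{\theta\in\Lambda_{\theta^*,\beta_{max}}}\|A\phi(\theta)-A\phi(\theta^*)\|_2$. For $\theta=(a',t')$ in the ball, $\phi(\theta)-\phi(\theta^*)=\sum_{r=1}^{k}(a_r'\delta_{t_r'}-a_r\delta_{t_r})$, which (thanks to $\beta_{max}\leq\frac{\epsilon}{4}$ and triangle inequalities) is a sum of $k$ pairwise $\frac{\epsilon}{2}$-separated dipoles and lies in $\Sigma_{k,\epsilon/2}-\Sigma_{k,\epsilon/2}$; the RIP on that secant set plus the Pythagoras-type Lemma~\ref{lem:pyth_dipole} give $\|A\phi(\theta)-A\phi(\theta^*)\|_2^2\leq(1+\RIPcst)(1+(k-1)\mu)\sum_r\|a_r'\delta_{t_r'}-a_r\delta_{t_r}\|_h^2$. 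Each summand is controlled via Lemma~\ref{lem:kernel_dirac_properties} and the second-order inequality $\rho(t)\leq1-\frac{|\rho''(0)|}{2}t^2$ on $[0,c_h]$ (here $|a_r'-a_r|<\beta_{max}\leq\frac{|a_1|}{2}$ keeps $a_r'$ and $a_r$ of the same sign), yielding $\sum_r\|a_r'\delta_{t_r'}-a_r\delta_{t_r}\|_h^2\leq\beta_{max}^2(1+2|\rho''(0)|\|a^*\|_2^2)$, hence $\sup_\theta\|A\phi(\theta)-A\phi(\theta^*)\|_2\leq\sqrt{1+\RIPcst}\sqrt{1+(k-1)\mu}\,\beta_{max}\sqrt{1+2|\rho''(0)|\|a^*\|_2^2}$. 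This is exactly where the hypothesis $\|e\|_2\leq\sqrt{1+\RIPcst}\sqrt{1+(k-1)\mu}\,\beta_{max}$ enters: it gives $\sup_\theta\|A\phi(\theta)-A\phi(\theta^*)\|_2+\|e\|_2\leq\sqrt{1+\RIPcst}\sqrt{1+(k-1)\mu}\,\beta_{max}\bigl(1+\sqrt{1+2|\rho''(0)|\|a^*\|_2^2}\bigr)$, and substituting into the definition of $\xi$ together with $\beta_{max}\leq C_1C_3$ shows $\xi\leq2(1-\RIPcst)(1-(k-1)\mu)\min(1,\frac{|a_1|^2|\rho''(0)|}{4})$ (the extra "$1+$" in $C_3$ compared with $C_2$ is precisely this additive $\|e\|_2$). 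To obtain a \emph{strict} lower bound on $\lambda_{\min}(H)$, I would run the whole argument on $\Lambda_{\theta^*,\beta'}$ for an arbitrary $\beta'<\beta_{max}$ containing the chosen initial point, so that $\xi$ is strictly below the threshold and $g$ is $m$-strongly convex there for some $m>0$.

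For the descent part, Lemma~\ref{lem:link_unconstrained_constrained} shows $\theta^*$ is a local minimum of $g$, hence $\nabla g(\theta^*)=0$; on the convex ball $\Lambda_{\theta^*,\beta'}$, $g$ is $m$-strongly convex with $L$-Lipschitz gradient. For $\theta_0\in\Lambda_{\theta^*,\beta'}$ and $\tau<\frac{1}{L}$, writing $\theta_{n+1}-\theta^*=(\theta_n-\theta^*)-\tau(\nabla g(\theta_n)-\nabla g(\theta^*))$ and invoking the standard contraction estimate for gradient descent on a strongly convex function with Lipschitz gradient gives $\|\theta_{n+1}-\theta^*\|_2\leq(1-\tau m)\|\theta_n-\theta^*\|_2$; by induction all iterates stay in $\Lambda_{\theta^*,\beta'}\subset\Lambda_{\theta^*,\beta_{max}}$ and $\theta_n\to\theta^*$. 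Corollary~\ref{cor:convergence_gradient_descent}, whose hypothesis that the iterates remain in $\Lambda$ is now verified, concludes that $\Lambda_{\theta^*,\beta_{max}}$ is a basin of attraction in the sense of Definition~\ref{defbassin}.

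The main obstacle is essentially inherited from Corollary~\ref{cor:basin_noiseless}: controlling $\sup_\theta\|A\phi(\theta)-A\phi(\theta^*)\|_2$ through the $\frac{\epsilon}{2}$-separated dipole decomposition and the strengthened RIP, together with the kernel second-order estimates. The genuinely new step here—propagating the additive $\|e\|_2$ through $\xi$ using the stated noise bound, which is what turns $C_2$ into $C_3$—is short; the only subtlety is the bookkeeping needed to keep the Hessian lower bound strictly positive (closed versus open ball), handled by shrinking $\beta_{max}$ to an arbitrary $\beta'<\beta_{max}$.
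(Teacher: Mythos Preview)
Your proposal is correct and follows essentially the same route as the paper: bound $\sup_{\theta\in\Lambda}\|A\phi(\theta)-A\phi(\theta^*)\|_2$ via the RIP on $\Sigma_{k,\epsilon/2}-\Sigma_{k,\epsilon/2}$, Lemma~\ref{lem:pyth_dipole}, and the second-order kernel estimate, then absorb $\|e\|_2$ using the stated noise hypothesis (this is exactly what replaces $C_2$ by $C_3$), and conclude $\xi$ is below the required threshold. The only visible difference is in the descent step: the paper invokes mere convexity and the nonexpansiveness of the gradient step (so that $\|\theta_n-\theta^*\|_2$ is nonincreasing) rather than your strong-convexity contraction, which lets it avoid the auxiliary $\beta'<\beta_{max}$ shrinking; both arguments are standard and yield the same conclusion.
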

  
 \begin{proof}
See Appendix~\ref{proof3}.
\end{proof}

 \section{Towards new descent algorithms for sparse spike estimation?} \label{sec:projected_gradient}

 We have shown that, given an appropriate measurement operator for separated Diracs, a good initialization is sufficient to guarantee the success of a simple gradient descent. Such gradient descent is used is  in the practical setting of compressive statistical learning \cite{Keriven_2017}. \modif{Our result on unconstrained minimization explains why the use of such gradient descent is valid in this setting}. If we could guarantee additionally that by greedily estimating Diracs, we fall within the basin of attraction, we would have a full non-convex optimization technique with guarantees of convergence to a global minimum.

In other works~\cite{Duval_2017thin1,Duval_2017thin2}, it has been shown that discretization (on grids) of convex methods have a tendency to produce spurious spikes at Dirac locations. Our results seem to indicate that merging spikes that are close to each other when performing a gradient descent might break the barrier between continuous and discrete methods. 

Theorem~\ref{th:basin_with_constraint} brings another question as the Hessian of $g$ is more easily controlled in $\Theta_{k,\epsilon}$.  More generally, can we build a simple descent algorithm that stays in $\Theta_{k,\epsilon}$ to get larger basins of attraction? Consider the problem for $d=1$ in the noiseless case for the sake of clarity. We want to use the following descent algorithm:
\begin{equation}
 \theta_{i+1} = P_{\Theta_{k,\epsilon}}(\theta_i - \tau \nabla g(\theta_i))
\end{equation}
Where $P_{\Theta_{k,\epsilon}}$ is a projection onto the separation constraint.  Notice that since $\Theta_{k,\epsilon}$ is not a convex set, we cannot easily define the orthogonal projection onto it (it may not even exists).

If we suppose that the gradient descent step decreases $g$ (i.e. $g(\theta_i - \tau \nabla g(\theta_i))< g(\theta_i)$), is it possible to guarantee that applying projection step keeps decreasing $g$? Consider: 
\begin{equation} \label{pseudoproj}
  P_{\Theta_{k,\epsilon}}(\theta) \in \arg \min_{\tilde{\theta} \in \Theta_{k,\epsilon}} \left|\|A \phi(\tilde{\theta})-y \|_2 -  \|A \phi(\theta)-y \|_2|\right|
\end{equation}

First consider the following Lemma: 
\begin{lemma}\label{lem:pseudo_conv}
 Let $d=1$. Let $\theta_0,\theta_1 \in \Theta_{k,\epsilon}$.  Let $g(\theta) = \|A\phi(\theta) - A \phi(\theta_0)\|$. Then for all $\alpha$ such that $0 = g(\theta_0) \leq \alpha \leq g(\theta_1)$, there exists $\theta^* \in \Theta_{k,\epsilon}$ such that $g(\theta^*) = \alpha$.
\end{lemma}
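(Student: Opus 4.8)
The plan is to prove this by an intermediate value argument. The essential point — and the reason a naive approach fails — is that for $d=1$ the set $\Theta_{k,\epsilon}$ is \emph{not} connected: its connected components are indexed by the ordering of $t_1,\dots,t_k$ on the line, and two positions cannot cross while remaining $\epsilon$-separated. The way around this is that $\phi$, hence $g$, is invariant under any simultaneous permutation of the pairs $(a_r,t_r)$ (the measure $\sum_r a_r\delta_{t_r}$ does not see the labelling), which lets me reduce to a single convex component.

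Concretely, I would proceed as follows. First, I may assume that $\theta_0$ and $\theta_1$ have pairwise strictly $\epsilon$-separated coordinates $t_1,\dots,t_k$: if not, replace each by a preimage under $\phi$ of $\phi(\theta_0)\in\Sigma_{k,\epsilon}$, resp. $\phi(\theta_1)\in\Sigma_{k,\epsilon}$, having this property — such preimages exist by the very definition of $\Sigma_{k,\epsilon}$, and since $\phi$ is unchanged so are all values of $g$. The positions are then distinct, so I can reorder the pairs $(a_r,t_r)$ in $\theta_0$ and in $\theta_1$ so that positions become strictly increasing, obtaining $\theta_0'$ and $\theta_1'$; by permutation invariance $g(\theta_0')=g(\theta_0)=0$ and $g(\theta_1')=g(\theta_1)$. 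Both $\theta_0'$ and $\theta_1'$ then lie in
\[
\mathcal{C}:=\{(a,t_1,\dots,t_k): a\in\bR^k,\ -R\le t_1,\ t_r+\epsilon<t_{r+1}\ \text{for } 1\le r\le k-1,\ t_k\le R\},
\]
which is an intersection of half-spaces, hence convex, and is contained in $\Theta_{k,\epsilon}$ since consecutive gaps $>\epsilon$ force all pairwise gaps $>\epsilon$ and all positions lie in $\sB_2(R)=[-R,R]$.

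I would then consider the segment $\theta(s):=(1-s)\theta_0'+s\theta_1'$, $s\in[0,1]$, which stays inside $\mathcal{C}\subset\Theta_{k,\epsilon}$ by convexity. Because each $\alpha_l$ is continuous on $\sB_2(R)$, the map $\theta\mapsto A\phi(\theta)=\bigl(\sum_{r} a_r\alpha_l(t_r)\bigr)_{l}$ is continuous, so $s\mapsto g(\theta(s))=\|A\phi(\theta(s))-A\phi(\theta_0)\|_2$ is continuous on $[0,1]$, with value $0$ at $s=0$ and value $g(\theta_1)$ at $s=1$. The intermediate value theorem then yields, for any prescribed $\alpha\in[0,g(\theta_1)]$, some $s^*\in[0,1]$ with $g(\theta(s^*))=\alpha$; the point $\theta^*:=\theta(s^*)\in\Theta_{k,\epsilon}$ is the one required.

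The one step that genuinely needs care — and the conceptual heart of the argument — is this reduction to a convex component via the permutation invariance of $\phi$, together with the verification that after sorting the points really do land in $\mathcal{C}$ (which uses that $\|t_r-t_l\|_2>\epsilon$ becomes $t_{r+1}-t_r>\epsilon$ once positions are ordered). Once that is in place, convexity of $\mathcal{C}$, continuity of $g$ along the segment, and the intermediate value theorem are all routine.
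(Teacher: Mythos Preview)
Your argument is correct and follows essentially the same route as the paper's own proof: exploit permutation invariance of $\phi$ to sort the positions of $\theta_0$ and $\theta_1$ increasingly, take the straight-line interpolation $\theta_\lambda=(1-\lambda)\theta_0+\lambda\theta_1$, observe that consecutive gaps stay $>\epsilon$ (your convexity of $\mathcal C$ is exactly the paper's computation $(1-\lambda)|t_{i+1}-t_i|+\lambda|s_{i+1}-s_i|>\epsilon$), and apply the intermediate value theorem. Your extra preliminary step replacing $\theta_0,\theta_1$ by representatives with genuinely separated positions is a harmless precaution the paper omits, since it implicitly treats $\Theta_{k,\epsilon}$ as the set of parameters with $\epsilon$-separated $t_r$.
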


\begin{proof}
See Appendix~\ref{proof4}.
\end{proof}

Lemma~\ref{lem:pseudo_conv} essentially guarantees that is is possible to continuously map the interval $[0,g(\theta_1)]$ by $g$ with elements of $ \Theta_{k,\epsilon}$.
Hence, at a step $i+1$, we have 
\begin{equation}
 |g(\theta_{i+1}) -g(\theta_i)| = |g(\theta_i - \tau \nabla g(\theta_i)) -g(\theta_i)|.
\end{equation}

 The projection $P_{\Theta_{k,\epsilon}}$  defined by \eqref{pseudoproj} is not easy to calculate (in fact, it is a similar optimization problem as the main problem). Other more "natural" projections on $\Theta_{k,\epsilon}$  could be defined as : 
 \begin{equation}
  P_{\Theta_{k,\epsilon}}(\theta) \in \phi^{-1} (\arg \inf_{x \in \Sigma_{k,\epsilon}}\|A x-A \phi(\theta) \|_2)
\end{equation}
or 
 \begin{equation}
  P_{\Theta_{k,\epsilon}}(\theta) \in \phi^{-1}( \arg \inf_{x \in \Sigma_{k,\epsilon}} \|x- \phi(\theta) \|_h).
\end{equation}
However they suffer from the same calculability drawback. This suggests to build a new family of heuristic algorithms of spike estimation where we propose heuristics to approach the projection of $\hat{\theta}_{i+1}$ on $\Theta_{k,\epsilon}$. Recovery guarantees would be obtained by guaranteeing that the projection heuristic does not increase the value of $g$ by too much compared to the gradient descent step.

 
\appendix 
 
\section{Annex}
 \subsection{Proofs for Section~\ref{sec:gradient_Hessian}} \label{proof21}

\begin{proof}[Proof of Proposition~\ref{prop:gradient}]
 \begin{equation}
  \begin{split}
   \frac{\partial g(\theta)}{\partial a_r} &=  \frac{\partial }{\partial a_r} \sum_{l=1}^m \left|\sum_{i=1}^k a_i\alpha_l(t_i)-y_l\right|^2\\
 &=  \sum_{l=1}^m 2 \re  \left(\alpha_l(t_{r,j}) \left(\overline{\sum_{i=1}^k a_i\alpha_l(t_i)-y_l}\right)\right)   \\
 &= 2 \re \ls A\delta_{t_r}, A \phi(\theta)-y\rs.\\
 \end{split}
 \end{equation}
Similarly,
 \begin{equation}
  \begin{split}
   \frac{\partial  g(\theta)}{\partial t_{r,j}} &=  \frac{\partial }{\partial t_r} \sum_{l=1}^m \left|\sum_{i=1}^k a_i\alpha_l(t_i)-y_l\right|^2\\
 &=  \sum_{l=1}^m 2 \re \left( a_r \partial_j\alpha_l(t_r) \left(\overline{\sum_{i=1}^k a_i\alpha_l(t_i)-y_l}\right)\right)\\
 &= - 2 a_r\re \ls A \delta_{t_r,j}^{\prime}, A \phi(\theta)-y\rs.\\
 \end{split}
 \end{equation}
\end{proof}

\begin{proof}[Proof of Proposition~\ref{prop:Hessian}]
For $H_{1,r,s}$,
 \begin{equation}
  \begin{split}
   \frac{\partial^2 g(\theta)}{\partial a_r \partial a_s} &=  \frac{\partial }{\partial a_s } \sum_{l=1}^m 2 \re \left(  \alpha_l(t_r) \left(\overline{ \sum_{i=1}^k a_i\alpha_l(t_i)-y_l}\right)\right)\\
     &=   \sum_{l=1}^m 2 \re  \left( \alpha_l(t_r) \overline{ \alpha_l(t_s)}\right).\\
 \end{split}
 \end{equation}
For $H_{2,r,j_1,s,j_2}$,
 \begin{equation}
  \begin{split}
   \frac{\partial^2 g(\theta)}{\partial t_{r,j_1} \partial t_{s,j_2}} &=  \frac{\partial }{\partial t_{s,j_1}} \sum_{l=1}^m 2 \re   \left( a_r \partial_{j_1}\alpha_l(t_r) \left(\overline{ \sum_{i=1}^k a_i\alpha_l(t_i)-y_l}\right)\right)\\
&=   \sum_{l=1}^m 2 \re   \left(   a_r  \partial_{j_1}\alpha_l(t_r)  \left(\overline{ a_s \partial_{j_2}\alpha_l(t_s) }\right) \right) \\
&+ \indic(r=s) \sum_{l=1}^m 2 \re  \left( a_r \partial_{j_2}\partial_{j_1}\alpha_l(t_r) \left(\overline{ \sum_{i=1}^k a_i\alpha_l(t_i)-y_l}\right)\right).\\
 \end{split}
 \end{equation}
For $H_{12,r,s,j}$
 \begin{equation}
  \begin{split}
   \frac{\partial^2 g(\theta)}{\partial a_r \partial t_{s,j}} &=  \frac{\partial}{\partial t_{s,j}} \sum_{l=1}^m   2 \re (\alpha_l(t_r)) \left(\overline{ \sum_{i=1}^k a_i\alpha_l(t_i)-y_l}\right)\\
     &=   \sum_{l=1}^m 2 \re  \left( \alpha_l(t_r) \left(\overline{ a_s\partial_{j}\alpha_l(t_s)}\right)\right)\\
     &+ \indic(r=s) \sum_{l=1}^m 2 \re  \left(  \partial_{j}\alpha_l(t_r) \left(\overline{ \sum_{i=1}^k a_i\alpha_l(t_i)-y_l}\right)\right).\\
 \end{split}
 \end{equation}

\end{proof}
 \subsection{Proofs for Section~\ref{sec:kernel_dipole}} \label{proof22}
 
 \snewtext
 \begin{proof}[Proof of Lemma~\ref{def:scalar_dip}]
 First remark that  a generalized dipole $\nu=  a \delta_{t} + b\delta_{t,v}^{\prime}$ with $\|v\|_2=1$ is the limit in the distributional sense of the dipoles $\nu^{\eta} = a\delta_t - b\frac{\delta_{t+\eta v}-\delta_t}{\eta}$ when $\eta \to 0$.

Now let two generalized dipoles $\nu_1 = a_1 \delta_{t_1} +b_1 \delta_{t_1,v_1}^{\prime}, \nu_2 =a_2 \delta_{t_2} +b_2 \delta_{t_2,v_2}^{\prime}$. The $\nu_i$ are the limit (in the distributional sense) of a family of dipole $\nu_i^{\eta_i}$ for $\eta_i  \to 0^+$.  Let $f(t) = \rho(\|t\|_2)$. We have
\begin{equation}
\ls\nu_1^{\eta_1}, \nu_2^{\eta_2}\rs_h = \int_{\bR^d}  \int_{\bR^d}  f(t-s)\id\nu_1^{\eta_1}(t) \id\nu_2^{\eta_2}(s).
\end{equation}

Remark that by construction $ g_{\eta_1}(s) := \int_{\bR^d}   f(t-s) \id\nu_1^{\eta_1}(t) \to_{\eta_1 \to 0^+}  g(s) := a_1 f(t_1-s) + b_1\lsd \delta_{t_1,v_1}^{\prime}, f(\cdot-s) \rsd < +\infty$ where $g_{\eta_1}$ is in $\sC^2$ and $g$ is in $\sC^1$ thanks to the assumption on $h$ and $\rho$.
Hence by boundedness of the integrals and the dominated convergence theorem, for any $\eta_2$,
\begin{equation}
\ls\nu_1^{\eta_1}, \nu_2^{\eta_2}\rs_h  \to_{\eta_1 \to 0^+}  \int_{\bR^d} g(s) \id\nu_2^{\eta_2}(s).
\end{equation}
Moreover, by construction of $\nu_2^{\eta_2} $, and symmetry of $f$ (i.e. $f(t_1-t_2) = f(t_2-t_1)$),
\begin{equation}
\begin{split}
 \int_{\bR^d}  g(s) \id\nu_2^{\eta_2}(s)
 &\to_{\eta_2 \to 0^+} a_1 a_2f( t_1-t_2 ) \\
 & + a_2 b_1  \lsd \delta_{t_1,v_1}^{\prime},f(\cdot-t_2) \rsd +a_1b_2\lsd \delta_{t_2,v_2}^{\prime},f(t_1-\cdot) \rsd \\
 &-b_1b_2 \lsd  \delta_{t_2,v_2}^{\prime},  f_{v_1}^{\prime}(t_1- \cdot  ) \rsd\\
 & =  a_1 a_2f( t_1-t_2 )  - a_2 b_1  f_{v_1}^{\prime}(t_1-t_2) -a_1b_2f_{v_2}^{\prime}(t_2-t_1)\\
 &-b_1b_2  f_{v_1,v_2}^{\prime \prime}(t_1- t_2  )\\  
 \end{split}
\end{equation}
We define $\ls \nu_1, \nu_2\rs_h :=  a_1 a_2f( t_1-t_2 )  - a_2 b_1  f_{v_1}^{\prime}(t_1-t_2) -a_1b_2f_{v_2}^{\prime}(t_2-t_1) -b_1b_2  f_{v_1,v_2}^{\prime \prime}(t_1- t_2  ) $.
We just showed that  
\begin{equation}
\ls\nu_1^{\eta_1}, \nu_2^{\eta_2}\rs_h  \to_{\eta_1 \to 0^+,\eta_2 \to 0+} \ls \nu_1, \nu_2\rs_h .
\end{equation}

Note that the value of $\ls \nu_1, \nu_2\rs_h $ only depends on $\rho, \nu_1, \nu_2$.
 \end{proof}

\begin{proof}[Proof of Lemma \ref{lem:kernel_dirac_properties}]
 Using  Lemma~\ref{def:scalar_dip} with $t_1=t_2=t$, $b_1 = b_2 = 0$ and $a_1 =a_2 =1$  gives
  \begin{equation}
  \|\delta_t\|_h^2 =  \rho(0).
 \end{equation}
 Using  Lemma~\ref{def:scalar_dip} with $t_1=t_2=t$, $b_1  =a_2 = 0$ and $a_1 =b_2 =1$  gives
 
 \begin{equation}
\begin{split}
  \ls \delta_t, \delta_{t,v}^{\prime}\rs_h  := -f_{v}^{\prime}(0) = -\lim_{\eta\to 0^+ } \frac{\rho(\eta\|v\|)-\rho(0)}{\eta} = -\rho'(0)=0.
 \end{split}
 \end{equation}
  Using  Lemma~\ref{def:scalar_dip} with $t_1=t_2=t$, $b_1 = b_2 = 1$ and $a_1 =a_2 =0$  gives
 \begin{equation}
  \|\delta_{t,v}^{\prime}\|_h^2 := -f_v''(0)= |\rho''(0)|.
 \end{equation}
 
\end{proof}
\enewtext

\begin{proof}[Proof of Lemma \ref{lem:dip2Generalizeddip}]
\snewtext
Using the construction from the proof of Lemma~\ref{def:scalar_dip}, let two $\epsilon$-separated generalized dipole $\nu_1 , \nu_2$. The $\nu_i$ are the limit (in the distributional sense) of a family of $\epsilon$-separated dipole $\nu_i^{\eta_i}$ for $\eta_i  \to 0^+$. With the hypothesis, we have 
\begin{equation}\label{eq:dip2Generalizeddip1}
 \ls\nu_1^{\eta_1}, \nu_2^{\eta_2}\rs_h  \leq  \mu \|\nu_1^{\eta_1}\|_h  \|\nu_2^{\eta_2}\|_h.
\end{equation}
Furthermore, 
\begin{equation}
\ls\nu_1^{\eta_1}, \nu_2^{\eta_2}\rs_h  \to_{\eta_1 \to 0^+,\eta_2 \to 0+} \ls \nu_1, \nu_2\rs_h .
\end{equation}
\enewtext
Let $\nu=  a \delta_{t} + b\delta_{t,v}^{\prime}$ with $\|v\|_2=1$ and $\nu^{\eta} = a\delta_t - b\frac{\delta_{t+\eta v}-\delta_t}{\eta} =\left(a+\frac{b}{\eta}\right)\delta_t - b\frac{\delta_{t+\eta v}}{\eta}$.We have  $\|\nu\|_h^2 = a^2 +b^2|\rho''(0)| $ (with Lemma~\ref{lem:kernel_dirac_properties}) and 
\begin{equation}
\begin{split}
 \|\nu^\eta\|_h^2 &=\left(a+\frac{b}{\eta}\right)^2  +\left(\frac{b}{\eta}\right)^2 -2\left(a+\frac{b}{\eta}\right)\frac{b}{\eta}\rho(\eta) 
 \\ &
 = a^2 +2\left(\frac{b}{\eta}\right)^2 +2\frac{ab}{\eta} -2\frac{ab}{\eta}\rho(\eta) - 2\left(\frac{b}{\eta}\right)^2\rho(\eta)\\
 &= a^2 + 2\frac{ab}{\eta}(1-\rho(\eta))  +2\frac{b^2}{\eta^2}(1-\rho(\eta)).\\
 \end{split}
\end{equation}
 
But $\frac{1-\rho(\eta)}{\eta}=\frac{\rho(0)-\rho(\eta)}{\eta} \to -\rho'(0)$ when $\eta \to 0^+$, and $\rho'(0)=0$.

Moreover, $\rho(\eta)=h(0)+\eta \rho'(0) + \frac{\eta^2}{2}\rho''(0)+o(\eta^2)=1- \frac{\eta^2}{2}|\rho''(0)|+o(\eta^2)$. Hence $\frac{1-\rho(\eta)}{\eta^2}\to_{\eta \to 0^+}  \frac{1}{2}|\rho''(0)|$.

We thus deduce that $\|\nu^\eta\|_h^2  \to a^2 + b^2 |\rho''(0)|=\|\nu\|_h$ when $\eta \to 0^+$.

Hence, with such choice of $\nu_1^{\eta_1} $ $ \nu_2^{\eta_2}$, we can take the limit $\eta_1,\eta_2 \to 0$ in Equation~\eqref{eq:dip2Generalizeddip1} to get the result.

\end{proof}
\begin{proof}[Proof of Lemma~\ref{lem:mutual_Generalized_dipoles}]
Using Lemma~\ref{lem:dip2Generalizeddip}, and the same proof as in Lemma~\ref{lem:pyth_dipole}, we get the result.
 
\end{proof}

\begin{proof}[Proof of Lemma~\ref{lem:RIP_derivative}]
Let $\nu_r = a_r\delta_{t_r}+b_r\delta_{t_r,v}^{\prime}$ the $\epsilon$-separated generalized dipoles. Similarly to  Lemma~\ref{lem:dip2Generalizeddip}, take $\nu_r^\eta= (a_r+\frac{b_r}{\eta})\delta_{t_r} - b_r\frac{\delta_{t_r +\eta v}}{\eta}$. For sufficiently small $\eta$ the $ \nu_r^\eta$ are $\epsilon$-separated dipoles, hence $\sum \nu_r^\eta \in \Sigma-\Sigma$ and 

 \begin{equation} \label{eq:RIP_int_dirac_deriv}
\begin{split}
 (1-\RIPcst) \left\|  \sum_{r=1,k}\nu_r^\eta \right\|_h^2 &\leq \left\|A(  \sum_{r=1,k}\nu_r^\eta)\right\|_2^2   
 \leq (1+\RIPcst)\left\|  \sum_{r=1,k}\nu_r^\eta \right\|_h^2.  \\
\end{split}
 \end{equation}

 Now remark that $g_1(\eta)=\|  \sum_{r=1,k}\nu_r^\eta \|_h^2$ and $g_2(\eta)=\|A(  \sum_{r=1,k}\nu_r^\eta )\|_2^2 $  are continuous functions of $\eta$ that converge to $\| \sum_{r=1,k} (a_r\delta_{t_r}+b_r\delta_{t_r,v}^{\prime})\|_h^2 $ and  $\|A(\sum_{r=1,k} (a_r\delta_{t_r}+b_r\delta_{t_r,v}^{\prime}))\|_2^2  $ when $\eta \to 0$:
 \begin{itemize}
 \item For $g_1$, use the same proof as in Lemma~\ref{lem:dip2Generalizeddip}
 with the linearity of the limit. 
  \item For $g_2$:
  \begin{equation}
  \begin{split}
g_2(\eta)  &= \sum_{l=1,m} \left|\sum_{r=1,k} \int \alpha_l(t)(a_r \id\delta_{t_r}(t)-\frac{b_r}{\eta}(\id\delta_{t_r+\eta v}(t)-\id\delta_{t_r}(t)))\right|^2\\
 &= \sum_{l=1,m} \left|\sum_{r=1,k} \left( \alpha_l(t_r)a_r 
 -\frac{b_r}{\eta}(\alpha_l(t_r+\eta v)-\alpha_l(t_r))  \right)\right|^2\\
  &\to_{\eta\to 0^+} \sum_{l=1,m} \left|\sum_{r=1,k} \left( \alpha_l(t_r)a_r 
  -b_r(\alpha_l)_v'(t_r)\right)\right|^2\\
&=\left\|A(\sum_{r=1,k} a_r\delta_{t_r}+b_r\delta_{t_r,v}^{\prime})\right\|_2^2. 
\end{split}
  \end{equation}

  Taking  the limit of Equation~\eqref{eq:RIP_int_dirac_deriv} for $\eta \to 0$ yields the result.
 \end{itemize}
 
\end{proof}

\begin{proof}[Proof of Lemma~\ref{lem:upper_RIP_second_deriv}]

We have
 \begin{equation}
 \begin{split}
   \|A\delta_{t,v_1,v_2}^{\prime \prime}\|_2^2 & = \sum_{l=1,m} | (A \delta_{t,v_1,v_2}^{\prime \prime})_l|^2\\
   & = \sum_{l=1,m} | \alpha_{l,v_1,v_2}^{\prime \prime}(t)|^2\\
   &\leq m \sup_{l=1,m; t \in \sB_2(R)} | \alpha_{l,v_1,v_2}^{\prime \prime}(t)|^2 \leq m D_{A,R}^2
   \end{split}
 \end{equation}
 
 where $D_{A,R}$ is given in \eqref{defDAR}, i.e. $D_{A,R}$ is the supremum of directional second  derivatives of the $\alpha_l$ over 
$\sB_2(R)$. We have $D_{A,R} <+\infty$ because the $\alpha_l$ are supposed to be in $\sC^2(\sB_2(R))$.

\end{proof}
\begin{lemma}\label{lem:rkhs_convol}
 Let $K$ be a symmetrical convolution kernel in $\sC^2$ and $h_K: (t_1,t_2) \to h_K(t_1,t_2) = [K*K](t_1-t_2)$ (the convolution of $K$ by itself) then  for any $x \in \Sigma_{k,\epsilon}-\Sigma_{k,\epsilon}$, we have 

 \begin{equation}
  \|x\|_{h_K}^2 = \|K* x\|_{L^2}^2.
 \end{equation}

\end{lemma}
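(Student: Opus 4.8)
The plan is to reduce the identity to a purely pointwise statement about the autoconvolution of $K$. Since any $x \in \Sigma_{k,\epsilon}-\Sigma_{k,\epsilon}$ is a \emph{finite} signed sum of Diracs, write $x = \sum_{i} c_i \delta_{\tau_i}$ (at most $2k$ atoms). By the definition of the kernel scalar product and a trivial interchange of finite sums with the integral,
\begin{equation}
\|x\|_{h_K}^2 = \int_{\bR^d}\int_{\bR^d} [K*K](t_1-t_2)\,\id x(t_1)\,\id x(t_2) = \sum_{i,j} c_i c_j\,[K*K](\tau_i-\tau_j).
\end{equation}
On the other hand, as $K \in \sC^2$, the convolution $K*\delta_{\tau_i}$ is the translate $K(\cdot-\tau_i)$, so $K*x = \sum_i c_i K(\cdot-\tau_i)$ and, again expanding a finite sum,
\begin{equation}
\|K*x\|_{L^2}^2 = \sum_{i,j} c_i c_j \int_{\bR^d} K(s-\tau_i)K(s-\tau_j)\,\id s.
\end{equation}

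It then remains to prove, for each pair $(i,j)$, the elementary identity $\int_{\bR^d} K(s-\tau_i)K(s-\tau_j)\,\id s = [K*K](\tau_i-\tau_j)$. The one substantive step is the change of variables $s = \tau_i - u$ together with the symmetry assumption $K(-u) = K(u)$: this rewrites the left-hand side as $\int_{\bR^d} K(u)\,K((\tau_i-\tau_j)-u)\,\id u$, which is by definition $[K*K](\tau_i-\tau_j)$. Plugging this back and summing over $i,j$ identifies the two double sums, which gives $\|K*x\|_{L^2}^2 = \|x\|_{h_K}^2$.

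I do not expect a genuine obstacle here; the proof is a short computation. The only points deserving a word of justification are (i) that the finitely many interchanges of sums and integrals above are licit, which is immediate since all sums are finite; (ii) that $K*x$ and all the integrals written are well defined, which follows from $K \in \sC^2$ together with the decay (integrability) of the kernels actually used in the paper, e.g. Gaussian or Fej\'er, so that $K \in L^1 \cap L^2$; and (iii) the symmetry of $K$, which is used exactly once, to turn the cross-correlation $\int K(s-\tau_i)K(s-\tau_j)\,\id s$ into the autoconvolution $[K*K](\tau_i-\tau_j)$ — without it one would instead obtain the reflected kernel $[\check K * K]$. Note that the $\epsilon$-separation plays no role in this lemma; the identity holds for any finite signed combination of Diracs, and the restriction to $\Sigma_{k,\epsilon}-\Sigma_{k,\epsilon}$ is only there because that is the set on which the lemma is subsequently applied.
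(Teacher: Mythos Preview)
Your proof is correct and follows essentially the same route as the paper: expand $x$ as a finite sum of Diracs, reduce both sides to the double sum $\sum_{i,j} c_i c_j \int K(s-\tau_i)K(s-\tau_j)\,\id s$, and identify each cross term with $[K*K](\tau_i-\tau_j)$ via a translation and the symmetry $K(-u)=K(u)$. The paper's version is slightly terser (it goes directly from $\|K*x\|_{L^2}^2$ to $\|x\|_{h_K}^2$ in one chain), but the substance is identical; your added remarks on integrability and on the role of symmetry are accurate and do no harm.
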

\begin{proof}[Proof of Lemma \ref{lem:rkhs_convol}]
 Write $x = \sum a_i \delta_{t_i}$ and use the symmetry of $K$:
\begin{equation}
\begin{split}
 \|K*x\|_{L^2}^2  = \int \left|\sum a_i K(t-t_i) \right|^2 \id t  &= \sum_{i,j} a_i a_j \int K(t-t_i)K(t-t_j) \id t\\
 &=  \sum_{i,j} a_i a_j \int K(t)K(t+t_i-t_j) \id t\\
 &=  \sum_{i,j} a_i a_j [K*K] (t_i-t_j) =  \|x\|_{h_K}^2.
 \end{split}
\end{equation}

\end{proof}

 \subsection{Proofs for Section~\ref{sec:control_Hessian}} \label{proof23}

We will use the following Lemma on directional derivatives of Diracs. 

\begin{lemma}\label{lem:sum_directional_dirac_derivative}
 Let $u, t_0\in \bR^d$. Suppose $u \neq 0$. Then, $\sum_{i=1,d} u_i \delta_{t_{0},j}^{\prime} = \|u\|_2\delta_{t_{0},\frac{u}{\|u\|_2}}^{\prime}$.
\end{lemma}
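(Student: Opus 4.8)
The plan is to prove the identity $\sum_{i=1,d} u_i \delta_{t_0,i}^{\prime} = \|u\|_2 \, \delta_{t_0, \frac{u}{\|u\|_2}}^{\prime}$ by testing both sides against an arbitrary test function $f \in \sC^1(\bR^d)$, using the duality pairing that \emph{defines} these distributions. Recall $\lsd \delta_{t_0,v}^\prime, f \rsd = -f_v'(t_0) = -\ls v, \nabla f(t_0)\rs$ for a unit vector $v$, and in particular $\lsd \delta_{t_0,i}^\prime, f\rsd = \lsd \delta_{t_0,e_i}^\prime, f\rsd = -\partial_j f(t_0)$. So the whole thing reduces to a one-line linearity-of-the-gradient computation.

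First I would fix $f \in \sC^1(\bR^d)$ and expand the left-hand side:
\begin{equation}
\Bigl\lsd \sum_{i=1,d} u_i \delta_{t_0,i}^{\prime}, f \Bigr\rsd = \sum_{i=1,d} u_i \lsd \delta_{t_0,i}^{\prime}, f \rsd = -\sum_{i=1,d} u_i \, \partial_i f(t_0) = -\ls u, \nabla f(t_0)\rs.
\end{equation}
Then I would expand the right-hand side, writing $v = u/\|u\|_2$ (which is legitimate since $u \neq 0$ and has unit norm), so that
\begin{equation}
\Bigl\lsd \|u\|_2 \, \delta_{t_0,v}^{\prime}, f \Bigr\rsd = \|u\|_2 \lsd \delta_{t_0,v}^{\prime}, f\rsd = -\|u\|_2 \, \ls v, \nabla f(t_0)\rs = -\Bigl\ls u, \nabla f(t_0)\Bigr\rs.
\end{equation}
Since both pairings agree for every $f \in \sC^1(\bR^d)$, the two distributions are equal. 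One could equivalently argue via the distributional limit characterization: $\sum_i u_i \delta_{t_0,i}^\prime$ is the limit of $-\sum_i u_i \frac{\delta_{t_0+\eta e_i}-\delta_{t_0}}{\eta}$ and relate this to the difference quotient along $u$, but the duality argument is cleaner and avoids any subtlety about interchanging the finite sum with the limit.

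There is essentially no obstacle here; the only things to be slightly careful about are (i) the unit-norm normalization convention in the definition of $\delta_{t_0,v}^\prime$ — which is exactly why the factor $\|u\|_2$ appears on the right — and (ii) ensuring the hypothesis $u \neq 0$ is invoked so that $u/\|u\|_2$ is well-defined; the case $u = 0$ makes both sides the zero distribution trivially anyway, though the stated lemma excludes it. The result will be used elsewhere to rewrite sums over canonical-basis directional derivatives (as they appear in the blocks of the Hessian from Proposition~\ref{prop:Hessian}) as a single directional derivative along an arbitrary direction, which is what makes the RIP-on-generalized-dipoles machinery of Lemma~\ref{lem:RIP_derivative} applicable.
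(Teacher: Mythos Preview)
Your proof is correct and follows essentially the same approach as the paper: test both sides against an arbitrary smooth function, expand using the definition $\lsd \delta_{t_0,v}',f\rsd = -\ls v,\nabla f(t_0)\rs$, and observe both pairings equal $-\ls u,\nabla f(t_0)\rs$. The paper's version is terser (a single chain of equalities) and uses $\sC^2$ test functions rather than $\sC^1$, but the argument is identical.
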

\begin{proof}
Let $f$ a function in $\sC^2(\bR^d)$, we have

$\int_{t\in\bR^d}f(t) \sum_{i=1,d} u_i \id\delta_{t_{0},i}^{\prime}(t) = -\sum_{i=1,d} u_i\partial_i f(t_0) = -\ls u_i, \nabla f(t_0) \rs = -\|u\|_2f_{\frac{u}{\|u\|_2}}'(t_0) $. Hence, $\sum_{i=1,d} u_i \delta_{t_{0},i}^{\prime} = \|u\|_2\delta_{t_{r},\frac{u}{\|u\|_2}}^{\prime}$
\end{proof}

To prove Theorem~\ref{th:min_max_eigen_control_H}, we control first the eigenvalues of $G$ in the decomposition $H = G +F$.

\begin{lemma}\label{le:min_max_eigen_control_G}
 Suppose $h$ follows Assumption~\ref{assum:kernel_prop}. Let $\theta = (a_1,..,a_k, t_1,..t_k) \in \Theta_{k,\epsilon}$ with $t \in \text{rint}\sB_2(R)$. Let $H$ the Hessian of $g$ at $\theta$. Suppose $A$ has RIP $\gamma$ on $\Sigma_{k,\epsilon}-\Sigma_{k,\epsilon}$. We have 
 \begin{equation}
 \sup_{\|u\|_2 =1} u^TGu \leq  2(1+\RIPcst)(1+(k-1)\mu)\max(1,(a_r^2|\rho''(0)|)_{r=1,l}); \\
\end{equation}
\begin{equation}
 \inf_{\|u\|_2 =1} u^TGu \geq 2(1-\RIPcst)(1-(k-1)\mu)\min(1,(a_r^2|\rho''(0)|)_{r=1,l}).\\
\end{equation}
where $G$ is defined in Proposition~\ref{prop:Hessian}.
\end{lemma}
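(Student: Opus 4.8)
The plan is to recognize the quadratic form $u^{T}Gu$ as (twice) the squared $A$-image of a sum of $\epsilon$-separated \emph{generalized} dipoles built from $u$, and then to feed this into the RIP on generalized dipoles (Lemma~\ref{lem:RIP_derivative}) and the quasi-orthogonality estimate (Lemma~\ref{lem:mutual_Generalized_dipoles}).

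First I would fix $u\in\bR^{k(d+1)}$ with $\|u\|_2=1$ and split it as $u=(c_1,\dots,c_k,u_1,\dots,u_k)$, where $c_r\in\bR$ is the coordinate dual to the amplitude $a_r$ and $u_r=(u_{r,1},\dots,u_{r,d})\in\bR^d$ is the block of coordinates dual to the location $t_r$. To $u$ I associate the distributions
\begin{equation}
\nu_r:=c_r\delta_{t_r}-a_r\sum_{j=1}^{d}u_{r,j}\delta_{t_r,j}^{\prime},\qquad r=1,\dots,k ,
\end{equation}
which, by Lemma~\ref{lem:sum_directional_dirac_derivative}, equal $c_r\delta_{t_r}-a_r\|u_r\|_2\,\delta_{t_r,\,u_r/\|u_r\|_2}^{\prime}$ when $u_r\neq0$ (and $c_r\delta_{t_r}$ otherwise); in either case $\nu_r$ is a generalized dipole supported at the single point $t_r$.

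The computational core, and the step I expect to be the main obstacle, is the algebraic identity
\begin{equation}
u^{T}Gu=2\left\|A\sum_{r=1}^{k}\nu_r\right\|_2^2 .
\end{equation}
I would prove it by expanding $\bigl\|A\sum_r\nu_r\bigr\|_2^2=\sum_{r,s}\re\ls A\nu_r,A\nu_s\rs$ via linearity of $A$ and bilinearity of the Hermitian product, then grouping the resulting terms into the amplitude--amplitude, location--location, and amplitude--location contributions; the symmetry $\re\ls x,y\rs=\re\ls y,x\rs$ merges the two mixed contributions after relabelling $r\leftrightarrow s$, and one checks that these three groups coincide with $2$ times the blocks $G_1$, $G_2$, $G_{12}$ of Proposition~\ref{prop:Hessian}. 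Equivalently, writing $V=A\sum_rc_r\delta_{t_r}$ and $W=\sum_ra_r\sum_ju_{r,j}A\delta_{t_r,j}^{\prime}$, one has $u^{T}Gu=2\|V\|_2^2+2\|W\|_2^2-4\re\ls V,W\rs=2\|V-W\|_2^2$ and $V-W=A\sum_r\nu_r$. This is pure bookkeeping, with the only care needed being the various factors of $2$ (in particular, the factor $2$ in front of the $G_{12}$ block of $u^{T}Gu$ is exactly the cross term of $\|V-W\|_2^2$).

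Given this identity the conclusion is quick. Since $\theta\in\Theta_{k,\epsilon}$ with $t\in\text{rint}\,\sB_2(R)$, the points $t_1,\dots,t_k$ are pairwise strictly $\epsilon$-separated and lie in $\text{rint}\,\sB_2(R)$, so $\nu_1,\dots,\nu_k$ are $\epsilon$-separated generalized dipoles; Lemma~\ref{lem:RIP_derivative} then gives
\begin{equation}
(1-\RIPcst)\Bigl\|\sum_r\nu_r\Bigr\|_h^2\le\Bigl\|A\sum_r\nu_r\Bigr\|_2^2\le(1+\RIPcst)\Bigl\|\sum_r\nu_r\Bigr\|_h^2 ,
\end{equation}
and Lemma~\ref{lem:mutual_Generalized_dipoles} gives $(1-(k-1)\mu)\sum_r\|\nu_r\|_h^2\le\|\sum_r\nu_r\|_h^2\le(1+(k-1)\mu)\sum_r\|\nu_r\|_h^2$. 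Using Lemma~\ref{lem:kernel_dirac_properties} ($\|\delta_t\|_h^2=1$, $\ls\delta_t,\delta_{t,v}^{\prime}\rs_h=0$, $\|\delta_{t,v}^{\prime}\|_h^2=|\rho''(0)|$) together with Lemma~\ref{lem:sum_directional_dirac_derivative}, one computes $\|\nu_r\|_h^2=c_r^2+a_r^2|\rho''(0)|\,\|u_r\|_2^2$, so, since $\sum_r(c_r^2+\|u_r\|_2^2)=\|u\|_2^2=1$,
\begin{equation}
\min\bigl(1,(a_r^2|\rho''(0)|)_{r=1,l}\bigr)\le\sum_r\|\nu_r\|_h^2\le\max\bigl(1,(a_r^2|\rho''(0)|)_{r=1,l}\bigr) .
\end{equation}
Chaining these three pairs of inequalities with $u^{T}Gu=2\|A\sum_r\nu_r\|_2^2$ gives the two claimed bounds on $u^{T}Gu$, and taking the supremum, resp.\ infimum, over $\|u\|_2=1$ finishes the proof. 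The only degenerate case is $\sum_r\|\nu_r\|_h^2=0$, which forces $a_r=0$ for some $r$ (hence $\min(1,(a_r^2|\rho''(0)|)_{r=1,l})=0$) while $u^{T}Gu=0$, so both bounds hold trivially and Lemma~\ref{lem:mutual_Generalized_dipoles} is not needed.
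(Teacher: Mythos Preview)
Your proposal is correct and follows essentially the same route as the paper: identify $u^TGu=2\|A\sum_r\nu_r\|_2^2$ with $\nu_r=c_r\delta_{t_r}-a_r\|u_r\|_2\delta'_{t_r,u_r/\|u_r\|_2}$, apply the RIP on generalized dipoles (Lemma~\ref{lem:RIP_derivative}), then Lemma~\ref{lem:mutual_Generalized_dipoles}, and finally Lemma~\ref{lem:kernel_dirac_properties} to evaluate $\|\nu_r\|_h^2$. Your explicit treatment of the degenerate case $\sum_r\|\nu_r\|_h^2=0$ is a small addition the paper leaves implicit.
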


\begin{proof}
Let $u \in \bR^{k(d+1)} $ such that $\|u\|_2=1$. We index $u$ as follows: $u_r \in \bR$ for $r=1,k$. $u_r\in\bR^{d}$ for $r=k+1,2k$ (it follows the indexing of $H$ and $G$ we used). Remark that 
\begin{equation}
 \begin{split}
  u^TGu =& \sum_{r,s=1,k} u_r u_s G_{1,r,s} + \sum_{r=k+1,2k;j_1=1,d;s=k+1,2k;j_2=1,d} u_{r,j_1} u_{s,j_2} G_{2,r,j_1,s,j_2}\\
  &+  \sum_{r=1,k;s=k+1,2k;j=1,d} u_r u_{s,j} G_{12,r,s,j} +  \sum_{r=k+1,2k;j=1,d;s=1,k} u_{r,j} u_s G_{21,r,j,s}\\
  =&  2\sum_{r,s=1,k} \re\ls Au_r\delta_{t_r}, Au_s\delta_{t_s} \rs \\
   & + 2\sum_{r=k+1,2k;j_1=1,d;s=k+1,2k;j_2=1,d} \re\ls Au_{r,j_1}a_{r-k}\delta_{t_{r-k},j_1}^{\prime}, Au_{s,j_2}a_{s-k}\delta_{t_{s-k},j_2}^{\prime} \rs \\
  &-2\sum_{r=1,k;s=k+1,2k;j=1,d}  \re\ls Au_r\delta_{t_r}, Au_{s,j}a_{s-k}\delta_{t_{s-k},j}^{\prime} \rs
  \\ & - 2\sum_{r=k+1,2k;j=1,d;s=1,k} \re\ls Au_{r,j}a_{r-k}\delta_{t_{r-k},j}^{\prime}, Au_s\delta_{t_{s}} \rs 
   \end{split} 
\end{equation}
Thus we have
\begin{equation}
 \begin{split}
 u^TGu 
  =&  2\left\|A\sum_{r=1,k} u_r \delta_{t_r}\right\|_2^2 +  2\left\|A\sum_{r=k+1,2k;j=1,d} u_{r,j} a_{r-k}\delta_{t_{r-k},j}^{\prime}\right\|_2^2  \\
  &-2\re  \left\ls A\sum_{r=1,k} u_r \delta_{t_r}, A\sum_{r=k+1,2k;j=1,d} u_{r,j} a_{r-k}\delta_{t_{r-k},j}^{\prime}\right\rs  \\
  &-2\re  \left\ls A\sum_{r=k+1,2k;j=1,d} u_{r,j} a_{r-k}\delta_{t_{r-k},j}^{\prime}, A\sum_{r=1,k} u_r \delta_{t_{r}}\right\rs \\
   =&  2\left\|A\left(  \sum_{r=1,k}u_r\delta_{t_r}- \sum_{r=k+1,2k;j=1,d} u_{r,j}a_{r-k}\delta_{t_{r-k},j}^{\prime}\right) \right\|_2^2 \\
   =&  2\left\|A \left( \sum_{r=1,k}\left( u_r\delta_{t_r}- a_{r}\sum_{j=1,d}u_{r+k,j}\delta_{t_{r},j}^{\prime}\right)\right) \right\|_2^2. \\
 \end{split} 
\end{equation}

Using Lemma~\ref{lem:sum_directional_dirac_derivative}, we have $\sum_{j=1,d} w_j \delta_{t_{r},j}^{\prime} = \|w\|_2\delta_{t_{r},\frac{w}{\|w\|_2}}^{\prime}$ and 

\begin{equation}\label{eq:expr_G}
 \begin{split}
  u^TGu =& 2\left\|A\sum_{r=1,k} (u_r\delta_{t_r}- a_{r}\|u_{r+k}\|_2\delta_{t_{r},\frac{u_{r+k}}{\|u_{r+k}\|_2}}^{\prime})\right\|_2^2. \\
 \end{split} 
\end{equation}

We use the lower RIP in Lemma~\ref{lem:RIP_derivative},
\begin{equation}
 \begin{split}
  u^TGu & \geq  2(1-\RIPcst)\left\|\sum_{r=1,k}( u_r\delta_{t_r} - a_{r}\|u_{r+k}\|_2\delta_{t_{r},\frac{u_{r+k}}{\|u_{r+k}\|_2}}^{\prime})\right\|_h^2 .\\
   \end{split} 
\end{equation}

   Then the hypothesis on $\|\cdot\|_h$ and  Lemma~\ref{lem:mutual_Generalized_dipoles}  yields
   
     \begin{equation}
   \begin{split}
   \|\sum_{r=1,k}( u_r\delta_{t_r} - a_{r}\|u_{r+k}\|_2\delta_{t_{r},\frac{u_{r+k}}{\|u_{r+k}\|_2}}^{\prime})\|_h^2 &
   \\
   \geq (1-(k-1)\mu) \sum_{r=1,k} &\| u_r\delta_{t_r} - a_{r}\|u_{r+k}\|_2\delta_{t_{r},\frac{u_{r+k}}{\|u_{r+k}\|_2}}^{\prime}\|_h^2
   \end{split}
\end{equation}
and 
  \begin{equation}
 \begin{split}
  u^TGu &\geq  2(1-\RIPcst)(1-(k-1)\mu) \sum_{r=1,k} \| u_r\delta_{t_r}- a_{r}\|u_{r+k}\|_2\delta_{t_{r},\frac{u_{r+k}}{\|u_{r+k}\|_2}}^{\prime}\|_h^2\\
  & \geq  2(1-\RIPcst)(1-(k-1)\mu) \sum_{r=1,k} \left(| u_r|^2 -2a_r u_r \|u_{k+r}\|_2\ls\delta_{t_r},\delta_{t_{r},\frac{u_{r+k}}{\|u_{r+k}\|_2}}^{\prime} \rs_h
  \right.
  \\ &  \left. + a_r^2\|u_{k+r}\|_2^2 \|\delta_{t_{r},\frac{u_{r+k}}{\|u_{r+k}\|_2}}^{\prime}\|_h^2\right).\\
  \end{split} 
\end{equation}

Then using Lemma~\ref{lem:kernel_dirac_properties}:
    \begin{equation}
 \begin{split}
 u^TGu  & \geq  2(1-\RIPcst)(1-(k-1)\mu) \sum_{r=1,k} \left(| u_r|^2  + a_r^2 \|u_{k+r}\|_2^2 |\rho''(0)| \right)\\
  &\geq  2(1-\RIPcst)(1-(k-1)\mu) \inf_{\|u\|_2=1} \sum_{r=1,k} \left(| u_r|^2  + \| u_{k+r}\|_2^2a_r^2|\rho''(0)| \right). \\
  & =  2(1-\RIPcst)(1-(k-1)\mu)\min(1,(a_r^2|\rho''(0)|)_{r=1,l}) .\\
 \end{split} 
\end{equation}

Similarly, using the upper RIP in Lemma~\ref{lem:RIP_derivative}:

\begin{equation}
 \begin{split}
 u^TGu & \leq  2(1+\RIPcst)\|\sum_{r=1,k}( u_r\delta_{t_r}-  a_{r}\|u_{r+k}\|_2\delta_{t_{r},\frac{u_{r+k}}{\|u_{r+k}\|_2}}^{\prime})\|_h^2 .\\
   \end{split} 
\end{equation}

   Then the hypothesis on $\|\cdot\|_h$  yields (Lemma~\ref{lem:mutual_Generalized_dipoles}) 

   \begin{equation}
   \begin{split}
   \|\sum_{r=1,k}( u_r\delta_{t_r}-  a_{r}\|u_{r+k}\|_2\delta_{t_{r},\frac{u_{r+k}}{\|u_{r+k}\|_2}}^{\prime})\|_2^2 &
   \\ \leq (1+(k-1)\mu) \sum_{r=1,k}& \| u_r\delta_{t_r}\modif{-  a_{r}\|u_{r+k}\|_2\delta_{t_{r},\frac{u_{r+k}}{\|u_{r+k}\|_2}}^{\prime}}\|_h^2 \\
   \end{split}
\end{equation}

   and 
  \begin{equation}
 \begin{split}
  u^TGu &\leq  2(1+\RIPcst)(1+(k-1)\mu) \sum_{r=1,k} \| u_r\delta_{t_r}\modif{- a_{r}\|u_{r+k}\|_2\delta_{t_{r},\frac{u_{r+k}}{\|u_{r+k}\|_2}}^{\prime}}\|_h^2.\\
  \end{split} 
\end{equation}
Then using Lemma~\ref{lem:kernel_dirac_properties}:
    \begin{equation}
 \begin{split}
 u^TGu  & \leq 2(1+\RIPcst)(1+(k-1)\mu) \sum_{r=1,k} \left(| u_r|^2  + a_r^2 \|u_{k+r}\|_2^2
 |\rho''(0)|  \right) \\
  &\leq  2(1+\RIPcst)(1+(k-1)\mu) \sup_{\|u\|_2=1} \sum_{r=1,k} \left(| u_r|^2  + \| u_{k+r}\|_2^2a_r^2|\rho''(0)| \right) \\
  &=  2(1+\RIPcst)(1+(k-1)\mu)\max(1,(a_r^2|\rho''(0)|)_{r=1,l}). \\
 \end{split} 
\end{equation}
 \end{proof}

\begin{proof}[Proof of Theorem \ref{th:min_max_eigen_control_H}]
Let $\theta^*$ a minimizer of \eqref{eq:minimization2}. Consider $H$ the Hessian of $g$ at $\theta$.  
We recall that $H=G+F$ (see Proposition~\ref{prop:Hessian}).
Using Lemma~\ref{le:min_max_eigen_control_G}, we just need to bound the operator norm of $F$ and then to combine it with the bounds on the eigenvalues of $G$  to get bounds on eigenvalues of $H=G+F$.

We use  Lemma~\ref{lem:upper_RIP_second_deriv}, the Cauchy-Schwartz  and triangle inequalities. We have $\|A \delta_{t_r,j_1,j_2}^{\prime \prime} \|_2 \leq \sqrt{m}D_{A,R} $   and 

\begin{equation}
 \begin{split}
 |F_{2,r,j_1,s,j_2}| & \leq\indic(r=s)2|a_r|\| A \delta_{t_r,j_1,j_2}^{\prime \prime}\|_2  \|A\phi(\theta)-y\|_2 .\\
  & \leq\indic(r=s)2|a_r|\sqrt{m}D_{A,R}\|A\phi(\theta)-A\phi(\theta^*) + A\phi(\theta^*)-y\|_2  .\\
 & \leq\indic(r=s)2|a_r|\sqrt{m}D_{A,R} (\|A\phi(\theta)-A\phi(\theta^*)\|_2+ \|e\|_2).\\
 \end{split}
\end{equation}
Similarly, with Lemma~\ref{lem:RIP_derivative}, 
\begin{equation}
 \begin{split}
 F_{12,r,s,j} &\leq \indic(r=s) 2\sqrt{1+\RIPcst}\|\delta_{t_r,j}^{\prime}\|_h  \|A\phi(\theta)-y\|_2  \\
  &\leq \indic(r=s) 2\sqrt{1+\RIPcst}\sqrt{|\rho''(0)|} (\|A\phi(\theta)-A\phi(\theta^*)\|_2+ \|e\|_2). \\
 \end{split}
\end{equation}
Let $\|\cdot\|_{op}$ be the $\ell^2$ operator norm of a matrix. With Gerschgorin circle theorem~\cite{Golub_2012}, we have 
\begin{equation}
 \|F\|_{op}  \leq  \max_{l}  \|F_{l,:}\|_1
\end{equation}
where $F_{l,:}$ is the $l$-th row of $F$. We get
\begin{equation}
\begin{split}
 \|F\|_{op}  &\leq  \max( d\max_{r,s,j}   |F_{12,r,s,j}| ,  \max_{r,s,j}   |F_{12,r,s,j}|  + d \max_{r,j_1,s,j_2}   |F_{2,r,j_1,s,j_2} | )\\
             &\leq  (d+1)\max( \max_{r,s,j}   |F_{12,r,s,j}|  ,  \max_{r,j_1,s,j_2}   |F_{2,r,j_1,s,j_2} |)\\
             &\leq  2(d+1)\max( \max_{r}|a_r|  \sqrt{m}D_{A,R} ,\sqrt{1+\RIPcst}\sqrt{|\rho''(0)|} ) (\|A\phi(\theta)-A\phi(\theta^*)\|_2+ \|e\|_2).\\
 \end{split}
\end{equation}

Hence, using Weyl's perturbation inequalities on $H = G + F$, i.e. $\lambda_{min}(H) \geq \lambda_{min}(G)-\lambda_{max}(F)$ and  $\lambda_{max}(H) \leq \lambda_{max}(G) + \lambda_{max}(F)$, we get the result.

\end{proof}

\begin{proof}[Proof of Corollary~\ref{cor:control_Hessian}]

First, observe that at $\theta_0$, $F =0$.

The upper bound is a direct consequence of Theorem~\ref{le:min_max_eigen_control_G}. 

 We show the result in the case $ \max(1,(a_r^2|\rho''(0)|)_{r=1,l}) \neq 1$ and $ \min(1,(a_r^2|\rho''(0)|)_{r=1,l}) \neq 1 $ (the proof is similar in the other case). For the lower bound let $v \in \bR^{k(d+1)}$ and $i_0 = \arg \max_{r=1,l} (a_r^2|\rho''(0)|)$, set $\|v_{i_0}\|_2 =1$    and $v_j = 0$ for $j \neq i_0$. With Equation~\eqref{eq:expr_G}, we have 

\begin{equation}
 \sup_{\|u\|_2 =1} u^THu \geq  v^THv \geq 2(1-\RIPcst) \max(1,(a_r^2|\rho''(0)|)_{r=1,l}).
\\
\end{equation}
Similarly, let $v \in \bR^{k(d+1)}$ and $i_0 = \arg \min((a_r^2|\rho''(0)|)_{r=1,l})$, $\|v_{i_0}\|_2 =1$    and $v_j = 0$ for $j \neq i_0$.

\begin{equation}
 \inf_{\|u\|_2 =1} u^THu \leq 2(1+\RIPcst)\min(1,(a_r^2|\rho''(0)|)_{r=1,l}).\\
\end{equation}
\end{proof}

\subsection{Proofs for Section~\ref{sec:basin}} \label{proof3}
  
\begin{proof}[Proof of Theorem \ref{th:basin}]
 
 Let $\theta^* =(a_1,...,a_k,t_1,..t_k)\in \Theta_{k,\epsilon}$ the global minimum of $g$ and $\theta = (b_1,...,b_k,s_1,..s_k) \in \Lambda_{\theta^*,\beta} $. 
 
 \snewtext
 First notice that $ \|\theta-\theta^*\|_2^2 \leq \beta^2$ implies that for any $j$, we have $|a_j- b_j|^2 \leq \beta^2 $  and
 
 \begin{equation} \label{ineq:th2}
 |a_1|-\beta  \leq |a_j|-\beta \leq |b_j| \leq |a_j|+\beta\leq |a_k|+\beta.
 \end{equation}

 We also have  $\|s_j- t_j\|_2 <\beta  \leq \frac{\epsilon}{4}$.
 \enewtext
  Hence for $i\neq j$ we have $\|s_i-s_j\|_2 = \|s_i -t_i +t_i -t_j +t_j-s_j\|_2 \geq \|t_i -t_j\|_2 -\|t_i-s_i\|_2 -\|t_j-s_j\|_2 > \epsilon - 2\epsilon/4 = \epsilon/2$ and $\phi(\theta) \in \Sigma_{k,\frac{\epsilon}{2}}$. 
 
 We use Theorem~\ref{th:min_max_eigen_control_H} to get the bound on the min and max eigenvalues of the Hessian. 
 
 We can then plug  Inequality~\eqref{ineq:th2} into the one of Theorem~\ref{th:min_max_eigen_control_H}.

 Finally we notice the fact that $\sup_{\theta \in \Lambda_{\theta^*,\beta}} \|A\phi(\theta)-A\phi(\theta^*)\|_2$ exists because $\Lambda_{\theta^*,\beta}$ is bounded.

 \end{proof}

 \begin{proof}[Proof of Theorem \ref{th:basin_with_constraint}]
  This is a direct consequence of Theorem~\ref{th:min_max_eigen_control_H}. The  proof follows the same lines as the one of Theorem~\ref{th:basin}.
 \end{proof}

 \begin{proof}[Proof of Corollary \ref{cor:basin_noiseless}]
The set  $\Lambda =\Lambda_{\theta^*,\beta}$ is an open set where the Hessian of $g$ at $ \Lambda$ is positive as long as $\xi \leq 2 (1-\RIPcst)(1-(k-1)\mu)\min(1,(|a_1|-\beta)^2|\rho''(0)|)$ with  Theorem~\ref{th:basin}. 

In this case $g$ is  convex on $\Lambda$. Theorem~\ref{th:basin} also gives a uniform bound for the operator norm of the Hessian:  $\|H\|_{op} \leq 2(1+\RIPcst)(1+(k-1)\mu)\max(1,(|a_k|+\beta)^2|\rho''(0)|)+\xi$
and $g$ has Lipschitz gradient. \modif{Moreover the gradient descent on Lipschitz smooth convex functions guarantees that  $\|\theta_n -\theta^*\|_2$ decreases, where $\theta_n$ are the iterates of the gradient descent  (this is proved using direct consequences of the nonexpensiveness of  $\ls \tau \nabla g(\theta),  \cdot \rs$ \cite[Proposition 4.2 (iv)]{Bauschke_2011}). Hence the iterates $\theta_n$ stay in $\Lambda $} and we  deduce from  Corollary~\ref{cor:convergence_gradient_descent} that 
$\Lambda$
is a basin of attraction.

Hence we just need to show that  $\xi \leq 2(1-\RIPcst)(1-(k-1)\mu)\min(1,(|a_1|-\beta)^2|\rho''(0)|)$.
 Let $\theta \in \Lambda$, we have, with the RIP hypothesis,
 
 \snewtext
 
\begin{equation}
 \begin{split}
  \xi(\theta)&:=2(d+1)\max( \max_{r}|a_r|  \sqrt{m}D_{A,R} ,\sqrt{1+\RIPcst}\sqrt{|\rho''(0)|} ) \|A\phi(\theta)-A\phi(\theta^*)\|_2 \\
   \leq& 2(d+1)\max(  |a_k|  \sqrt{m}D_{A,R} ,\sqrt{1+\RIPcst}\sqrt{|\rho''(0)|} ) \sqrt{1+\RIPcst} \|\phi(\theta)-\phi(\theta^*)\|_h  \\
     \leq&  2(d+1)\max( |a_k|  \sqrt{m}D_{A,R} ,\sqrt{1+\RIPcst}\sqrt{|\rho''(0)|} ) \sqrt{1+\RIPcst} \sqrt{1+(k-1)\mu} \sqrt{\sum_i\|a_i\delta_{t_i}-b_i\delta_{s_i}\|_h^2}  \\
    \end{split}
\end{equation} 
where we wrote $\theta^* = \sum_i a_i \delta_{t_i}$ and $\theta = \sum_i b_i \delta_{s_i}$ such that $|s_i-t_i| \leq \epsilon/4$. 

We now bound the term $\sum_i\|a_i\delta_{t_i}-b_i\delta_{s_i}\|_h^2$:    

\begin{equation}
 \begin{split}
     \sum_i\|a_i\delta_{t_i}-b_i\delta_{s_i}\|_h^2 = & \sum_i a_i^2 +b_i^2 -2 a_i b_i\rho(\|s_i-t_i\|_2) \\
         =&   \sum_i \rho(\|s_i-t_i\|_2) |a_i- b_i|^2 + (1-\rho(\|s_i-t_i\|_2) )\sum_i a_i^2 +b_i^2 \\
         \end{split}
\end{equation}
Using the hypothesis that $\|\theta-\theta^*\|^2 \leq \beta^2$ and  $\beta \leq |a_1|/2$, we have $|b_i| \leq |a_i| + \beta \leq \frac{3}{2} |a_i|$.  With  the assumption on $h$ (and $\rho$),
   \begin{equation}
 \begin{split}      
   \sum_i\|a_i\delta_{t_i}-b_i\delta_{s_i}\|_h^2 \leq &   \beta^2 +  \frac{|\rho''(0)|}{2}\beta^2 \frac{13}{4}\|a^*\|_2^2  \\
        \leq &  \beta^2 + \frac{|\rho''(0)|}{2}\beta^2 4\|a^*\|_2^2 \\
         \leq & \beta^2( 1 + 2|\rho''(0)| \|a^*\|_2^2 ) \\
  \end{split}
\end{equation}
where $a^* = (a_1,...,a_k)$.
The fact that $\beta \leq |a_1|/2$ implies 
\begin{equation}
 \begin{split}
  &\frac{\xi(\theta)}{\min(1,(|a_1|-\beta)^2|\rho''(0)|)}\\
  & \leq  \frac{2  (d+1)\sqrt{1+\RIPcst}  \sqrt{1+(k-1)\mu} \max( |a_k|  \sqrt{m}D_{A,R} ,\sqrt{1+\RIPcst}\sqrt{|\rho''(0)|} )  \sqrt{ 1 + 2|\rho''(0)|\|a^*\|_2^2 }\beta }{\min(1,|a_1|^2|\rho''(0)|/4)}
\end{split}
\end{equation}

 Hence using the hypothesis that $$\beta \leq \frac{ (1-\RIPcst)(1-(k-1)\mu)\min(1,|a_1|^2|\rho''(0)|/4) }{ (d+1)\sqrt{1+\RIPcst}  \sqrt{1+(k-1)\mu}   \max( |a_k|  \sqrt{m}D_{A,R} ,\sqrt{1+\RIPcst}\sqrt{|\rho''(0)|} )  \sqrt{1 + 2|\rho''(0)|\|a^*\|_2^2}} $$ we have
\enewtext

\begin{equation}
\begin{split}
 \xi(\theta)& \leq 2 (1-\RIPcst)(1-(k-1)\mu)\min(1,(|a_1|(1-\beta))^2|\rho''(0)|).
\end{split}
\end{equation}

 \end{proof}

 \begin{proof}[Proof of Corollary \ref{cor:basin_noisy}]
Following the same argument as Corollary \ref{cor:basin_noisy},   we just need to show that  $\xi \leq 2(1-\RIPcst)(1-(k-1)\mu)\min(1,(|a_1|-\beta))^2|\rho''(0)|)$. Let $\theta \in \Lambda_{\theta^*,\beta}$, we have, with the RIP hypothesis,

\begin{equation}
 \begin{split}
  \xi(\theta)&:=2(d+1)\max( \max_{r}|a_r|  \sqrt{m}D_{A,R} ,\sqrt{1+\RIPcst}\sqrt{|\rho''(0)|} )( \|A\phi(\theta)-A\phi(\theta^*)\|_2 +\|e\|_2)\\
   \leq& 2(d+1)\max(  |a_k|  \sqrt{m}D_{A,R} ,\sqrt{1+\RIPcst}\sqrt{|\rho''(0)|} ) (\sqrt{1+\RIPcst} \|\phi(\theta)-\phi(\theta^*)\|_h +\|e\|_2) \\
     \leq&  2(d+1)\max( |a_k|  \sqrt{m}D_{A,R} ,\sqrt{1+\RIPcst}\sqrt{|\rho''(0)|} ) (\sqrt{1+\RIPcst} \sqrt{1+(k-1)\mu} \sqrt{\sum_i\|a_i\delta_{t_i}-b_i\delta_{s_i}\|_h^2}+\|e\|_2)  \\
    \end{split}
\end{equation} 
where we wrote $\theta^* = \sum_i a_i \delta_{t_i}$ and $\theta = \sum_i b_i \delta_{s_i}$ such that $|s_i-t_i| \leq \epsilon/4$. 

Similarly as  in Corollary \ref{cor:basin_noisy}, we bound the term $\sum_i\|a_i\delta_{t_i}-b_i\delta_{s_i}\|_h^2$:    
\begin{equation}
 \begin{split}
     \sum_i\|a_i\delta_{t_i}-b_i\delta_{s_i}\|_h^2   \leq & \beta^2( 1 + 2|\rho''(0)| \|a^*\|_2^2 ) \\
  \end{split}
\end{equation}
The fact that $\beta \leq |a_1|/2$  and  $\|e\|_2  \leq  \sqrt{1+\RIPcst} \sqrt{1+(k-1)\mu} \beta$ implies 
\begin{equation}
 \begin{split}
  &\frac{\xi(\theta)}{\min(1,(|a_1|-\beta)^2|\rho''(0)|)}\\
  & \leq  \frac{2  (d+1)\sqrt{1+\RIPcst}  \sqrt{1+(k-1)\mu} \max( |a_k|  \sqrt{m}D_{A,R} ,\sqrt{1+\RIPcst}\sqrt{|\rho''(0)|} ) (1+ \sqrt{ 1 + 2|\rho''(0)| \|a^*\|_2^2 })\beta }{\min(1,|a_1|^2|\rho''(0)|/4)}
\end{split}
\end{equation}

 Hence using the hypothesis that $$\beta \leq \frac{ (1-\RIPcst)(1-(k-1)\mu)\min(1,|a_1|^2|\rho''(0)|/4) }{ (d+1)\sqrt{1+\RIPcst}  \sqrt{1+(k-1)\mu}   \max( |a_k|  \sqrt{m}D_{A,R} ,\sqrt{1+\RIPcst}\sqrt{|\rho''(0)|} )  (1+ \sqrt{ 1 + 2|\rho''(0)| \|a^*\|_2^2 })}, $$ we have
\enewtext

\begin{equation}
\begin{split}
 \xi(\theta)& \leq 2 (1-\RIPcst)(1-(k-1)\mu)\min(1,(|a_1|(1-\beta))^2|\rho''(0)|).
\end{split}
\end{equation}

 \end{proof}
 
\subsection{Proofs for Section~\ref{sec:projected_gradient}} \label{proof4}
 
\begin{proof}[Proof of Lemma \ref{lem:pseudo_conv}]
 Remark that $g(\theta)$ does not depend on the ordering of the positions. Reorder $\theta_0 =(a,t )$ and $\theta_1=(b,s)$ such that $t_1< t_2...<t_k$ and $s_1<s_2...<s_k$. Consider the function $g_1(\lambda) = g(\theta_\lambda)$ with $\theta_\lambda = (1-\lambda) \theta_0 + \lambda \theta_1$. Remark that $g_1$ is a continuous function of $\lambda$ taking values $g_1(0)= g(\theta_0)$ and $g_1(1)=g(\theta_1)$. Hence, with the intermediate value theorem, there is $\lambda$ such that $g(\theta_\lambda)=g_1(\lambda) = \alpha$. Moreover, denoting $\theta_\lambda = (a_\lambda, t_\lambda)$, we have, using the sorting of $t$ and $s$, for $1  \leq i < k $,
 \begin{equation}
 \begin{split}
  |t_{\lambda,i+1} -t_{\lambda,i}| &= |(1-\lambda)t_{i+1}  + \lambda s_{i+1}   -(1-\lambda)t _{i} - \lambda s_{i}| \\
  &= (1-\lambda)|t_{i+1}-t_i| +\lambda |s_{i+1}-s_i| > (1-\lambda)\epsilon + \lambda \epsilon = \epsilon.\\
   \end{split}
 \end{equation}
Hence $\theta_\lambda \in \Theta_{k,\epsilon}$.
\end{proof}
\bibliographystyle{abbrv}
 \bibliography{working_paper_SR}

\end{document}